\journal{Journal of Multivariate Analysis}
\theoremstyle{plain}% Theorem-like structures provided by amsthm.sty
\newtheorem{theorem}{Theorem}
\newtheorem{lemma}{Lemma}
\newtheorem{corollary}{Corollary}
\newtheorem{assumption}{Assumption}
\theoremstyle{definition}
\begin{document}

\begin{frontmatter}

\title{A new preferential model with homophily for recommender systems}

\author[1]{Hanyang Tian}
\author[2]{ Bo Zhang\corref{mycorrespondingauthor}}
\author[3]{Ruixue Jiang}

\address{ International Institute of Finance, School of Management,\\ University of Science and Technology of China}

\cortext[mycorrespondingauthor]{Corresponding author. Email address: \url{wbchpmp@ustc.edu.cn}}

%%%%%%%%%%%%%%%%%%%%%%%%%%%%%%%%%%%%%%%%%%%%%%%%%%%%%%%%%%%%%%%%%%%%%%%%%%%%%%

%\bigskip

\begin{abstract}
``Rich-get-richer'' and ``homophily'' are two important phenomena in evolving social networks. ``Rich-get-richer'' means people with higher followings are more likely to attract new fans,
and ``homophily'' means
people prefer to bond with others of the same social group or who have some other attribute in common.
To formalize the phenomena simultaneously in the context of an evolving social network, we consider a K-groups preferential attachment (KPA) network model, which is helpful for the social network’s recommender system.

The main contribution of this paper is to propose a new evolving social network model with the mechanisms of rich-get-richer and homophily. We show that the KPA model exhibits a power-law degree distribution for each group and prove the central limit theorem (CLT) for the maximum likelihood estimation (MLE) of the parameters in the KPA model. We illustrate our results through simulated data and explore the usage of this model with real data examples.
% Please include a maximum of seven keywords
%\keywords{keyword 1, \emph{keyword 2}, keyword 3, keyword 4, keyword 5, keyword 6, keyword 7}
\end{abstract}

\begin{keyword} %alphabetical order
preferential attachment\sep
homophily\sep
evolving network \sep
recommender system.
\end{keyword}
%\newpage
%\spacingset{1.9} % DON'T change the spacing!

\end{frontmatter}
\section{Introduction}
The ``rich-get-richer" (or preferential attachment) mechanism has been a focus of network researchers in recent decades. New nodes and edges are constantly added to the evolving network in preferential attachment models. The new edges are attached to older nodes chosen
according to a probability distribution which is an affine function of the older nodes' degree. This way, nodes with a high degree are more likely to attract edges' connection and then have a higher degree, which explains why such models are called rich-get-richer models.
A basic introduction to the preferential attachment model can be found in \cite{easley2010networks} and \cite{barabasi1999emergence}. To further understand the statistical properties of the preferential attachment model,  \cite{van2017random}, \cite{chung2006complex}, and \cite{durrett2007random} show the limit theory and asymptotic characteristics.

``Homophily" mechanism has a profound effect on individuals' and groups' behavior in society (\cite{lazarsfeld1954friendship}). For researchers, homophily is a well-documented phenomenon in social networks (\cite{mcpherson2001birds}): people often prefer to connect with others with similar characteristics. For example, people are more likely to build social relations, like marriage, friendship, and colleagues,
with someone of the same age, education, or hobbies. In other words, homophily influence the connection structure in human society. Furthermore, many studies show that homophily can influence not only  society's static structures but also society's dynamic operations. (\cite{jackson2010social}), (\cite{jackson2013diffusion}) show the effect of homophily on the welfare of individuals and diffusion patterns of information in social networks.

%However, existing studies often consider rich-get-richer and homophily separately, but few combine them (\cite{lee2019homophily},\cite{avin2020mixed}).
%In fact,
On social network platforms (e.g., Twitter, TikTok, and Sina Weibo), rich-get-richer and homophily often co-occur.
This study supposes that it takes two steps for a person to connect with another in a social network: (1) Become aware of someone through a friend referral or a social media feed. (2) Decide whether to follow or connect with that person.
People easily become aware of an Internet celebrity (the first step), which implies that the rich-get-richer phenomenon affects the celebrity’s set of followers.
Considering the second step, a person prefers to follow somebody with the same hobby, which means homophily is also involved.
Moreover, homophily can work contrary to the influence of rich-get-richer.
For example, Lebron James is a basketball superstar with a huge following. His popularity makes it easy for him to get more followers, but someone who has no interest in basketball will not follow him even if he is recommended by friends.
Thus, it is meaningful to study evolving networks while considering interactions of rich-get-richer and homophily.
Unfortunately, previous studies typically consider rich-get-richer and homophily separately.
Some recent papers (\cite{lee2019homophily},\cite{avin2020mixed}) have tried to combine the ideas, but they do not focus on statistical problems such as estimators and central limit theorems.
%It is commonly acknowledged that recommender systems play a vital role in the big data era (see \cite{jannach2010recommender} and \cite{ricci2015recommender}).
%Estimating the parameters about homophily can help us improve the recommender system of  social network platforms.

%In this age of Internet information explosion, effectively obtaining the information users need is challenging and important. In a recommender system, both personalization and popularity should be taken into account (See \cite{jannach2010recommender} and \cite{ricci2015recommender}).
%
%Considering incorporating both `rich-get-richer' and `homophily' into the evolving network, we propose the KPA (K-groups preferential attachment) model based on the  Barab$\acute{a}$si--Albert model (\cite{barabasi1999emergence} and \cite{albert2002statistical}). The KPA model exhibits the dynamic process of social network platforms based on recommendation algorithms, such as Twitter, Tik Tok and Sina Weibo. They are commonly referred to as recommender systems. In this age of Internet information explosion, effectively obtaining the information users need is challenging. The recommender system, which takes into account both personalization and popularity, plays a vital role in this field (\cite{jannach2010recommender},\cite{ricci2015recommender}).

In this paper, we propose the KPA (K-groups preferential attachment) model based on the  Barab$\acute{a}$si--Albert model (\cite{barabasi1999emergence} and \cite{albert2002statistical}). The unit time point of the dynamic process is  divided into two parts:
%(1) The platform will actively recommend old users to new users. The higher the number of followers a user has, the higher the recommended probability.
%(2) New users will follow the old users recommended by the system according to the probability of similarity.
(1) [rich-get-richer] The evolving network tries to connect a chosen old node to a new node. The higher the degree of the old node, the higher the probability that it will be chosen to connect to the new node.
(2) [homophily] The new node will accept the network’s recommended connection with a probability dependent on the similarity of the two nodes.

We divide all nodes into $K$ groups according to a specific feature. Homophily states that nodes in the same group are more easily connected. A parameter $\theta$ is introduced to the classic Barab$\acute{a}$si--Albert model to exhibit the influence of homophily on the generation of evolving networks.
Using the KPA model, we obtain some theoretical results about degrees. Then we propose the estimators of the homophily and some other parameters in an evolving network featuring both rich-get-richer and homophily. We also give the joint asymptotic distribution of these estimators.  It is commonly acknowledged that recommender systems play a vital role in the big data era (see \cite{jannach2010recommender} and \cite{ricci2015recommender}).
Accurate estimation of the effects of homophily is helpful to improve the recommender system of any social network platform.
If the homophily is strong, recommending a connection with a node in a very different group (i.e., one with dissimilar nodes) is inefficient.
In contrast, when the homophily is not strong, recommending a node from different groups is meaningful.
% and it is commonly acknowledged that recommender systems play a vital role in the big data era (see \cite{jannach2010recommender} and \cite{ricci2015recommender}).

The paper is organized as follows. In Section 2, we introduce the specific construction process of the KPA model and the significance of various random variables. The main asymptotic results are in Section 3. The estimation of parameters is given in Section 4. Section 5 focuses on the change point. Section 6 contains proofs of the important results. Simulations to illustrate the theoretical results and applications to real-life data are in the supplementary material.
%Simulations to illustrate the theoretical results are recorded in Section 6. Section 7 deals with applications to real-life data.

\section{Model}

According to the classic Barab$\acute{a}$si--Albert model, an initial graph $G_0$ has an isolated node of degree one. Graph $G_t$ represent the state of the evolving network at time $t, \, t\in\{0,1,\cdots,n,\cdots\}$.
There are two operations on the evolving network:

\begin{itemize}
\item Vertex-step: at time $t$, a new node $w$ is added to the network and connects node $u$ with the edge $(w,u)$.
\item Edge-step: at time $t$, no new node arrives, but a new edge $(w,u)$ is added to the network. Nodes $w,u$  are pre-existing in the network before time $t$.
\end{itemize}

Begin with the initial graph $G_0$. For $t>0$, $G_t$ is the graph at time $t$ by modifying $G_{t-1}$ by either taking a vertex-step or
an edge-step.
Let $V_t$ be the number of nodes in graph $G_t$. Furthermore, let $v_t:=V_t-V_{t-1}$. We can find that $G_t$ is formed from $G_{t-1}$ by taking a vertex-step when $v_t= 1$ or an edge-step when $v_t=0$.

Consider nodes with similar characteristics to belong to the same group.
There are $K$ groups of nodes in the network, where $K$ is a fixed known constant.  For node $i$ from group $k$, set $g_i=k,\,k\in \{1,\cdots,K\}$. $g_i$ is the group label of node $i$.

We assume the following assumptions:

\begin{assumption}\label{a1}
$g_i$ is known (or observed) for each node $i$.
\end{assumption}

\begin{assumption}\label{a2}
The new node comes from group $k$ with unknown probability $p_k$ at vertex-step, $ k= 1,\cdots,K$. Where $\sum_{k=1}^K p_k=1, p_k\in [0,1]$.
\end{assumption}

\begin{assumption}\label{a3}
In the initial graph $G_0$, the number of nodes from group $k$ is $p_kn_0$. $n_0$ is a constant large enough for $p_kn_0$ to be an integer, $ k= 1,\cdots,K$.
\end{assumption}

\begin{assumption}\label{a4}
$\{v_t\}_{t=1}$ are the i.i.d. random variables of Bernoulli distribution $B(1,q)$. $q$ is unknown.
\end{assumption}

\begin{figure}[H]
\centering
\label{F0}
\subfigure[$v_t=1$]{
\begin{minipage}[t]{0.45\linewidth}
\centering
\includegraphics[width=2.0in]{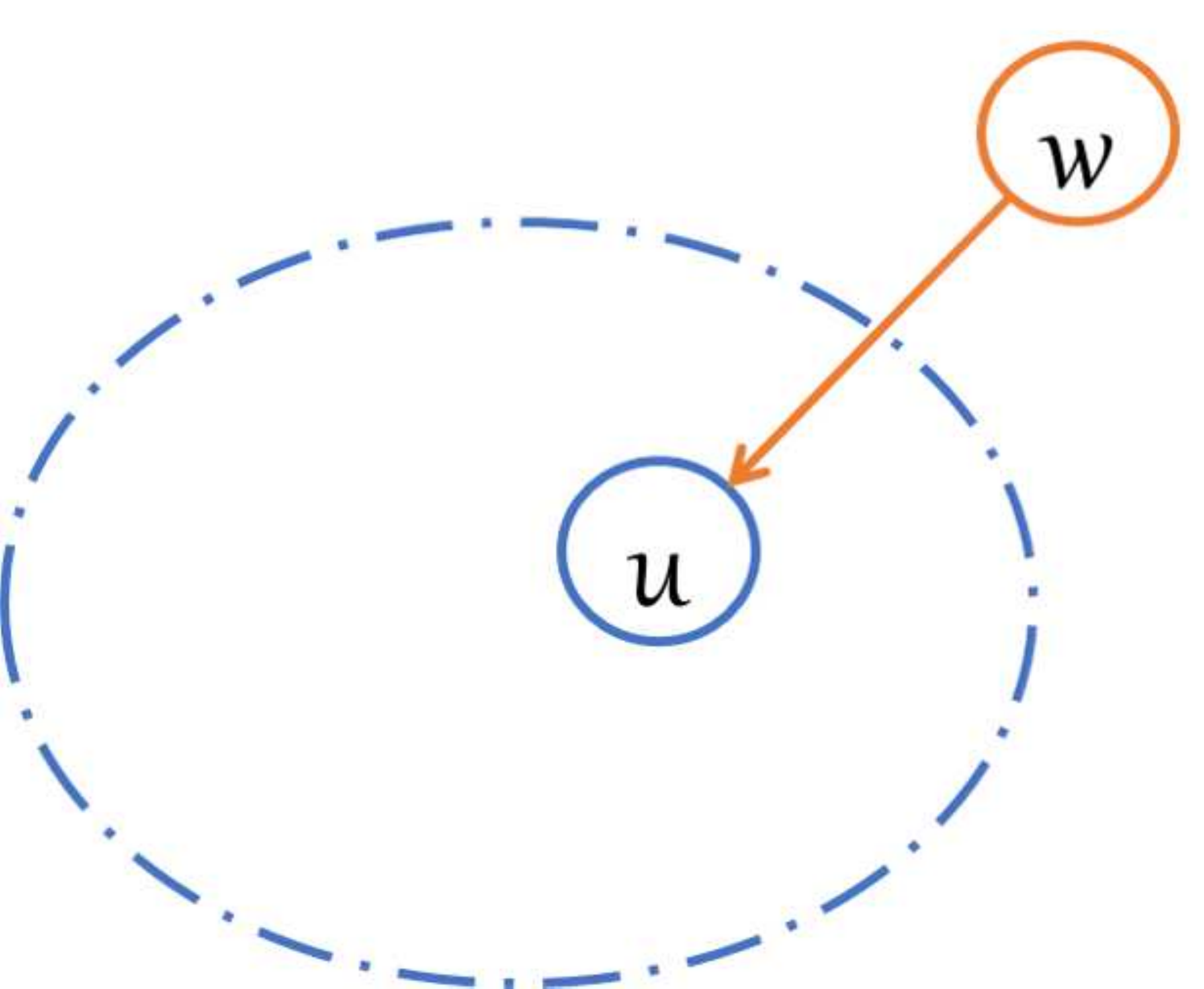}
%\caption{fig1}
\end{minipage}%
}%
\subfigure[$v_t=0$]{
\begin{minipage}[t]{0.45\linewidth}
\centering
\includegraphics[width=1.7in]{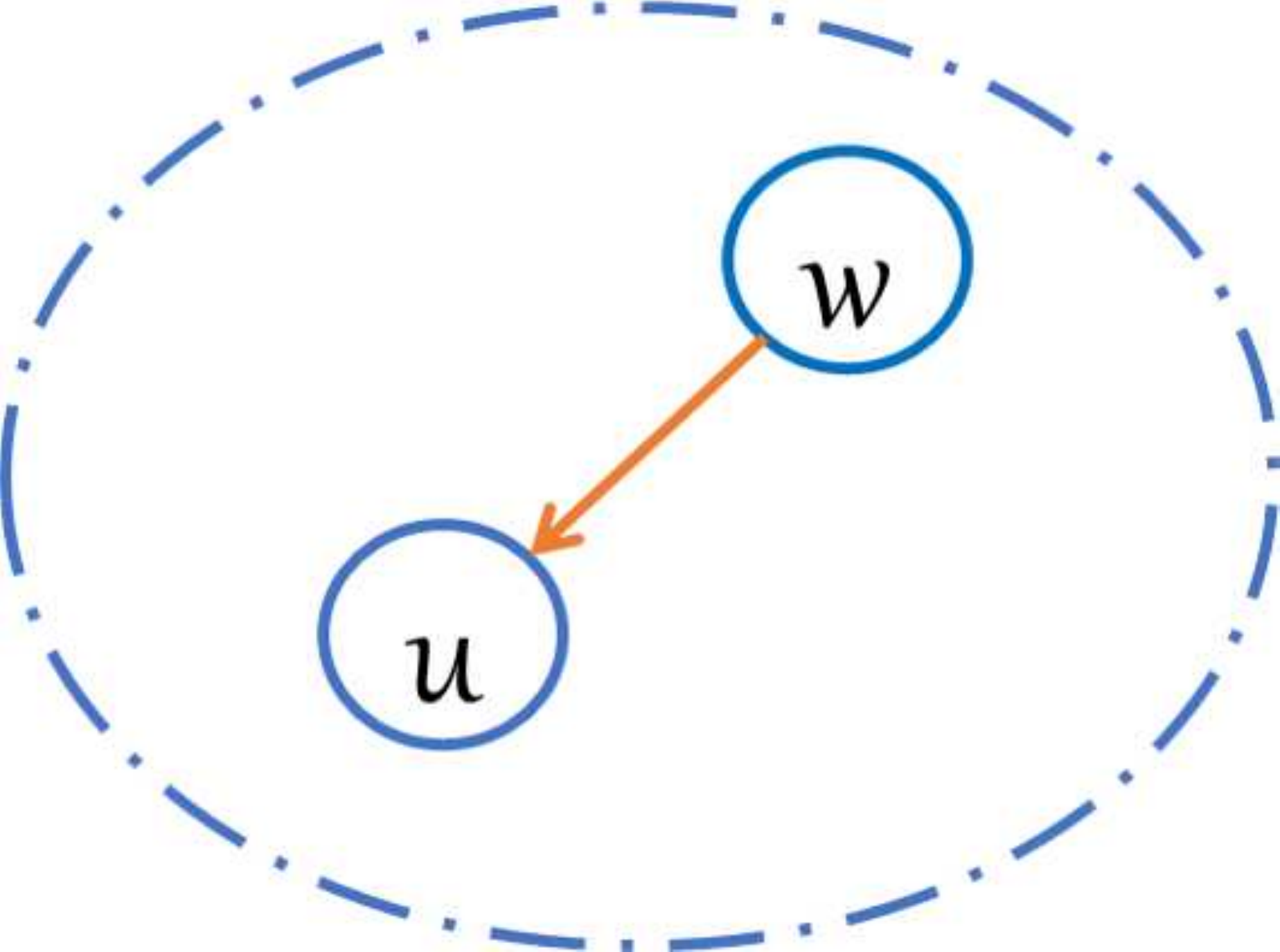}
%\caption{fig2}
\end{minipage}%
}%
       %这个回车键很重要 \quad也可以
\centering
\caption{The blue shows the graph $G_{t-1}$ and the orange shows the new node or edge added at time $t$.}
\end{figure}

Next, we describe the process of the KPA model construction in detail.

Beginning with the initial graph $G_0$ with $n_0$ $(n_0\geq K)$ nodes of degree one from $K$ groups, $v_t$ is generated randomly at time $t$. $G_t$ is formed from $G_{t-1}$ by taking a vertex-step when $v_t= 1$ or an edge-step when $v_t=0$. $d_i(t-1)$ is the degree of node $i$ in graph $G_{t-1}$.

\begin{itemize}

\item Vertex-step:
 we break down a vertex-step process into five small steps.
\begin{itemize}
\item[step 1.]
A new node $w$ is added to the network at time $t$.

\item[step 2.]
An old node $u_1$ is chosen to connect $w$ with probability $\frac{d_{u_1}(t-1)}{\sum_{i\in G_{t-1}}d_i(t-1)}$.
If $u_1,w$ are in the same group, $w$ will accept the connection without hesitation.
Otherwise, $w$ will reject the connection with probability $1-\gamma$, $\gamma\in (0,1]$.

\item[step 3.]
If $w$ successfully connects to $u_1$ in step 2, the vertex-step process will break. Else, the network will choose another node to connect with $w$ in step 4 after $w$ rejects $u_1$.

\item[step 4.]
An old node $u_2$ is chosen with probability $\alpha \times \frac{d_{u_2}(t-1)}{\sum_{g_i=g_w}d_i(t-1)},\,\alpha\in (0,1]$ if $u_2,w$ are in the same group.
Otherwise, $u_2$ is chosen with probability $(1-\alpha)\times \frac{d_{u_2}(t-1)}{\sum_{g_i\neq g_w}d_i(t-1)}$.
Furthermore, if $u_2,w$ are in the same group, $w$ will definitely connect with node $u_2$. Else, $w$ will reject $u_2$ with probability $1-\gamma$.

\item[step 5.]
If $w$ successfully connects to $u_2$ in step 4, the vertex-step process will break. Else, the network will choose another node to connect with $w$ and go back to step 4.
\end{itemize}

\item Edge-step: same to the vertex-step except step 1.

\begin{itemize}
\item[step 1.]
No new node arrives at time $t$. Randomly selected an old node $w$ with probability $\frac{d_{w}(t-1)}{\sum_{i\in G_{t-1}}d_i(t-1)}$.
\end{itemize}
\end{itemize}

We suppose that people do not refuse to make friends with people who share their interests, so nodes must accept connections from the same group. Furthermore, we assume that if a person refuses to be friends with a celebrity because their interests do not match, he will pay more attention to people who have the same interests as himself, not just famous people in the future. So if the new node rejects the connection from other groups for the first time, the old nodes in the same group will be chosen with probability $\alpha$ next time, where $\alpha$ is large.
The parameter $\gamma$ indicates the probability that a node accepts connections from other groups, and if $\gamma=1$, there is no difference between the acceptability of connections from the same group and different groups. The parameter $\alpha$ exhibits the tendency of choosing nodes from the same group. Both parameters $\gamma$ and $\alpha$ reflect homophily on the network.

With the above details to the KPA model's generation, we can calculate the conditional probability of the connection at time $t$. For example,
if $v_t=1$, $g_w=g_u$ at time $t$, we get the probability of connection between the new node $w$ and  old node $u$:
\begin{eqnarray}
\nonumber &&P((w,u)|{G}_{t-1}) \\
\nonumber =&&\frac{d_u(t-1)}{\sum_{i}d_i(t-1)}
+\frac{(1-\gamma)\sum_{g_i\neq g_w} d_i(t-1)}{\sum_{i}d_i(t-1)}\frac{\alpha d_u(t-1)}{\sum_{g_i=g_w}d_i(t-1)} \sum_{j=0}^{\infty}\left[(1-\alpha)(1-\gamma)\right]^j\\
=&& \frac{d_u(t-1)}{\sum_{i}d_i(t-1)}
+\frac{\alpha(1-\gamma)}{1-(1-\alpha)(1-\gamma)}\frac{\sum_{g_i\neq g_w} d_i(t-1)}{\sum_{i}d_i(t-1)}\frac{ d_u(t-1)}{\sum_{g_i=g_w}d_i(t-1)} .
\end{eqnarray}

Let $\theta:=\frac{\gamma}{\gamma+\alpha(1-\gamma)}\,\in (0,1]$. $\frac{\alpha(1-\gamma)}{1-(1-\alpha)(1-\gamma)}=1-\theta$. We find that $\theta$ is the parameter that ultimately determines the influence of homophily on the network structure. So how to obtain the information of parameter $\theta$ is the focus of our work.
If $\theta$ is very small, the probability of connections for nodes to different groups is small. Recommender systems need to focus more on recommending users with the same interests as the target users.    Else if $\theta$ is very large, recommender systems only need to recommend popular users. Figure \ref{F1} shows the influence of $\theta$ on the network structure.

\begin{figure}[H]
\centering
\subfigure[$\theta=0.1$]{
\begin{minipage}[t]{0.32\linewidth}
\centering
\includegraphics[width=1.8in]{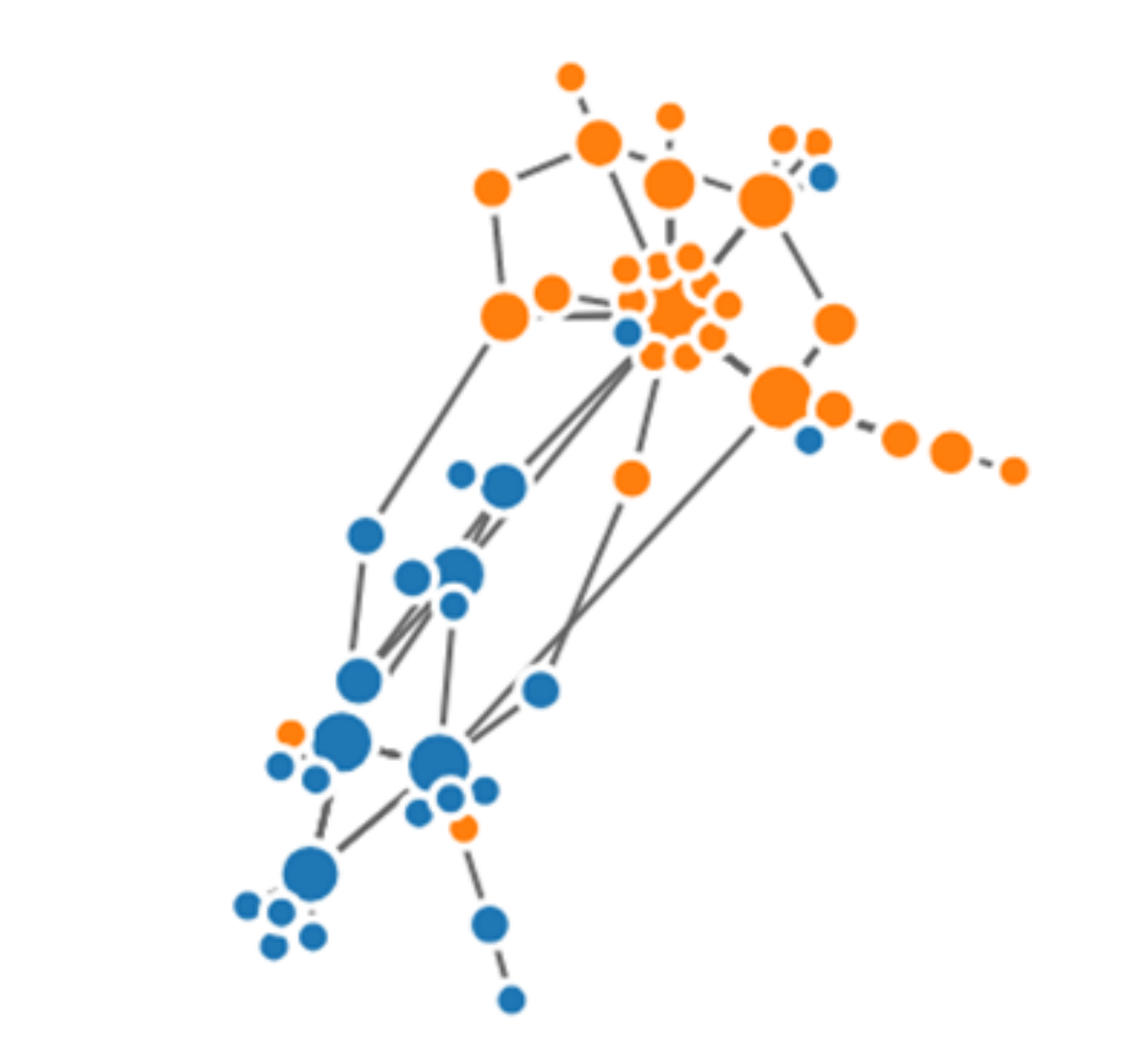}
%\caption{fig1}
\end{minipage}%
}%
\subfigure[$\theta=0.5$]{
\begin{minipage}[t]{0.32\linewidth}
\centering
\includegraphics[width=1.8in]{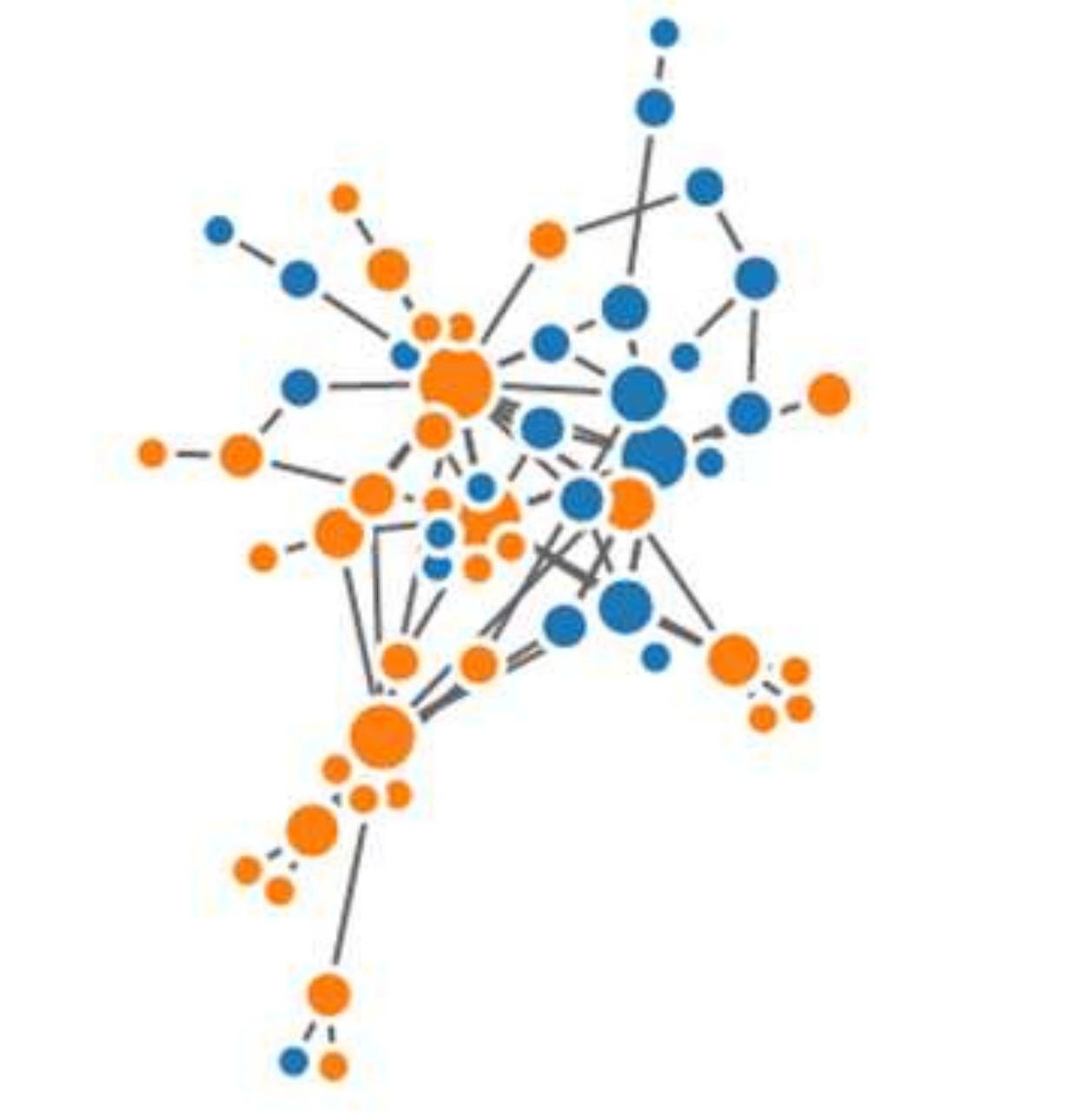}
%\caption{fig2}
\end{minipage}%
}%
\subfigure[$\theta=0.9$]{
\begin{minipage}[t]{0.32\linewidth}
\centering
\includegraphics[width=1.8in]{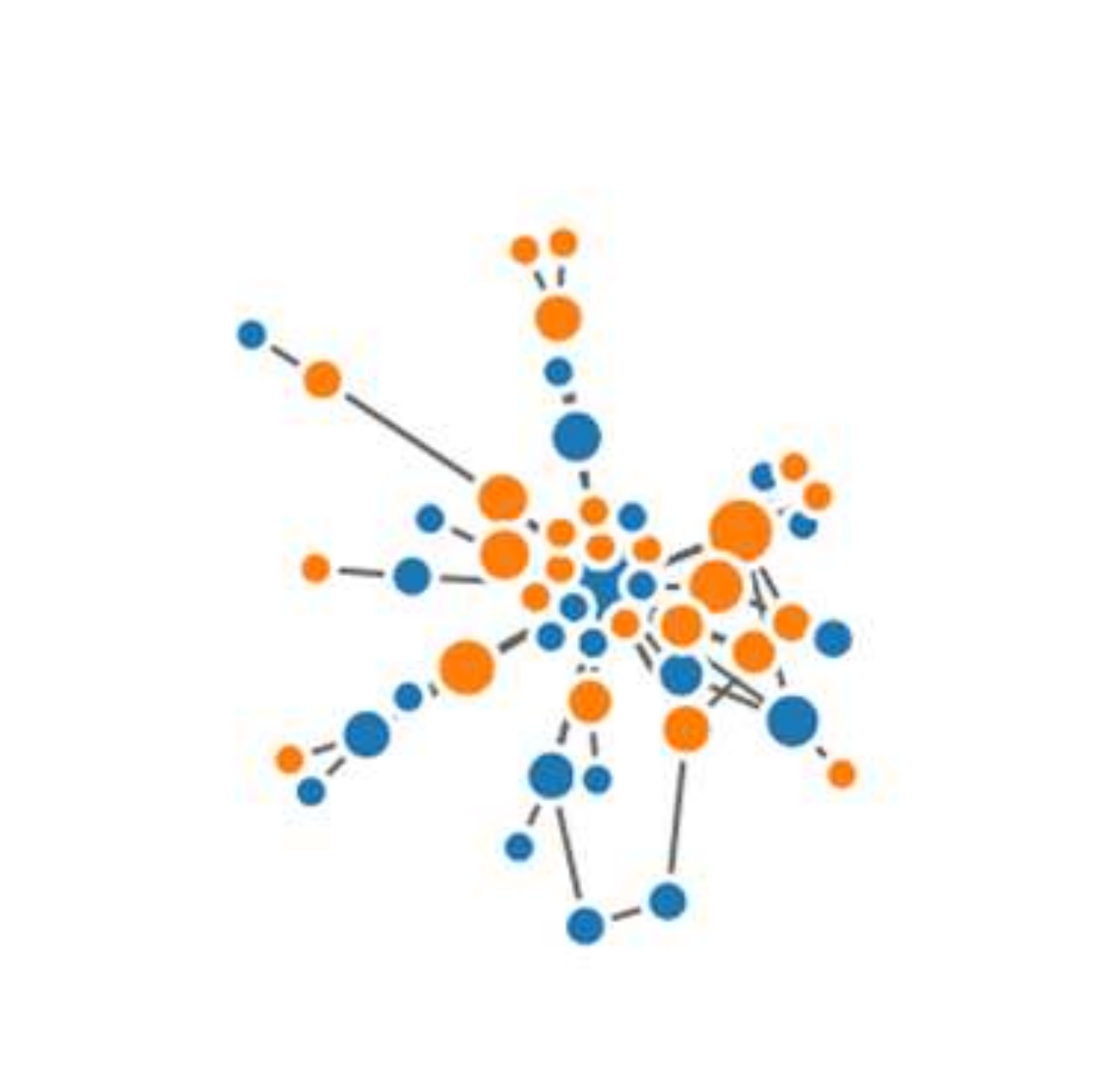}
%\caption{fig2}
\end{minipage}%
}%

       %这个回车键很重要 \quad也可以
\centering
\caption{The network generated by the KPA model with the time range $t\in [0,100]$. $K=2,q=0.5,p_1=p_2=0.5$. The different colors represent different groups.}
\label{F1}
\end{figure}

In the following sections, we will substitute $\theta$ for $\gamma$ and $\alpha$. We can get the following lemmas of connection probability with the homophily parameter $\theta$:
\begin{lemma}\label{le1}
For the KPA model, if $v_t=1$ at time $t$, a new node $w$ is added to the network and connects with node $u$ by edge $e_t=(w,u)$ with probability:
\begin{eqnarray}
\nonumber &&P((w,u)|g_w,g_u,{G}_{t-1},v_t=1)\\
=&&\begin{aligned}
\left\{
\begin{array}{lr}
\frac{d_u(t-1)}{\sum_{i}d_i(t-1)}+(1-\theta) \frac{\sum_{g_i\neq g_w} d_i(t-1)}{\sum_{i}d_i(t-1)} \frac{d_u(t-1)}{\sum_{g_i=g_w}d_i(t-1)},      & \text{if } g_w=g_u  \\
\theta\frac{d_u(t-1)}{\sum_{i}d_i(t-1)},      & \text{if } g_w\neq g_u .
\end{array} \right.
\end{aligned}
\end{eqnarray}
\end{lemma}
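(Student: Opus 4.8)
The plan is to unwind the recursive description of the vertex-step (steps~1--5) into a single absolutely convergent series and then read off the two cases of the lemma. Fix $t$, condition on $v_t=1$ and on the labels $g_w,g_u$, and write $d_i:=d_i(t-1)$, $S:=\sum_i d_i$, $S_{=}:=\sum_{g_i=g_w}d_i$, $S_{\neq}:=\sum_{g_i\neq g_w}d_i$, so that $S=S_{=}+S_{\neq}$. The key structural observation is that the process revisits step~4 only after a rejection, and a round of step~4 ends in a rejection precisely when the node it proposes lies in a different group from $w$ and $w$ declines it, an event of conditional probability $(1-\alpha)(1-\gamma)$ that is the same in every round. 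Hence the number of failed step~4 rounds preceding the formation of $(w,u)$ is geometric, and summing over it contributes a factor $\sum_{j=0}^{\infty}\bigl[(1-\alpha)(1-\gamma)\bigr]^{j}=\bigl(1-(1-\alpha)(1-\gamma)\bigr)^{-1}$ in both cases below.

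\emph{Case $g_w=g_u$.} Here $(w,u)$ arises either directly at step~2, when $u_1=u$ and $w$ (same group) accepts, contributing $d_u/S$; or indirectly, when $w$ rejects a cross-group proposal at step~2 --- probability $(1-\gamma)S_{\neq}/S$ --- and is later offered $u$ within its own group at step~4, probability $\alpha d_u/S_{=}$ per round. This is exactly the computation already carried out in the display preceding the definition of $\theta$; after summing the geometric series and using $\alpha(1-\gamma)/\bigl(1-(1-\alpha)(1-\gamma)\bigr)=1-\theta$ it collapses to $d_u/S+(1-\theta)(S_{\neq}/S)(d_u/S_{=})$, the first branch of the lemma.

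\emph{Case $g_w\neq g_u$.} Now $(w,u)$ arises either at step~2, when $u_1=u$ and $w$ accepts the cross-group proposal, contributing $\gamma d_u/S$; or after $w$ rejects at step~2 (probability $(1-\gamma)S_{\neq}/S$) and, in some later round of step~4, is offered $u$ with the cross-group weight $(1-\alpha)d_u/S_{\neq}$ and accepts it with probability $\gamma$. The factor $S_{\neq}$ cancels, the geometric sum again contributes $\bigl(1-(1-\alpha)(1-\gamma)\bigr)^{-1}$, and the two contributions combine to
\[
\frac{\gamma d_u}{S}\Bigl(1+\frac{(1-\alpha)(1-\gamma)}{1-(1-\alpha)(1-\gamma)}\Bigr)=\frac{\gamma}{1-(1-\alpha)(1-\gamma)}\cdot\frac{d_u}{S}.
\]
Since $1-(1-\alpha)(1-\gamma)=\gamma+\alpha(1-\gamma)$, the prefactor is exactly $\theta$, which is the second branch.

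The only delicate point I anticipate is the bookkeeping in the same-group case: one must not overlook the indirect route in which $w$ first bounces off a different-group node at step~2 and is then funnelled, via the $\alpha$-reweighted within-group law of step~4, onto $u$, and one must keep the three normalising sums $S$, $S_{=}$, $S_{\neq}$ straight so that the two product terms assemble into the stated form. Convergence of the geometric series is immediate because $\alpha,\gamma\in(0,1]$ forces $(1-\alpha)(1-\gamma)<1$, and it is worth noting in passing that both probabilities depend on $(\alpha,\gamma)$ only through $\theta$, which is precisely what licenses the reparametrisation used throughout the rest of the paper.
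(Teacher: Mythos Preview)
Your proposal is correct and follows essentially the same approach as the paper: the paper carries out precisely this geometric-series computation for the same-group case in the display immediately preceding the lemma (and then states Lemma~\ref{le1} without further proof), while you additionally spell out the cross-group case, which the paper leaves to the reader. Both arguments hinge on the same observation that each visit to step~4 fails independently with probability $(1-\alpha)(1-\gamma)$, so the only difference is that your write-up is slightly more complete.
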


\begin{lemma}\label{le2}
For the KPA model, if $v_t=0$ at time $t$, a new edge $e_t=(w,u)$ between two existing nodes is added to the network with probability:
\begin{eqnarray}
\nonumber &&P((w,u)|g_w,g_u,{G}_{t-1},v_t=0)\\
=&&\left\{
\begin{array}{lr}
\frac{d_w(t-1)}{\sum_{i}d_i(t-1)}\left[\frac{d_u(t-1)}{\sum_{i}d_i(t-1)}+(1-\theta) \frac{\sum_{g_i\neq g_w} d_i(t-1)}{\sum_{i}d_i(t-1)} \frac{d_u(t-1)}{\sum_{g_i=g_w}d_i(t-1)}\right],   & \text{if }g_w=g_u   \\
\theta\frac{d_w(t-1)}{\sum_{i}d_i(t-1)}\frac{d_u(t-1)}{\sum_{i}d_i(t-1)},      & \text{if }g_w\neq g_u .
\end{array} \right.
\end{eqnarray}
\end{lemma}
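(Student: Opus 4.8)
The plan is to reduce Lemma~\ref{le2} to Lemma~\ref{le1}, since by construction an edge-step coincides with a vertex-step after its first step. Conditionally on $\{v_t=0\}$, the network first selects an initiating node $w$ from $G_{t-1}$ with probability $\frac{d_w(t-1)}{\sum_i d_i(t-1)}$ (step~1 of the edge-step), and then runs steps~2--5 verbatim, with this $w$ in the role played by the newly arrived node during a vertex-step. I would write $R\big((w,u)\mid g_w,g_u,G_{t-1}\big)$ for the probability that steps~2--5, launched from a fixed initiating node $w$, terminate with the edge $(w,u)$, so that by the chain rule for conditional probabilities
\[
P\big((w,u)\mid g_w,g_u,G_{t-1},v_t=0\big)=\frac{d_w(t-1)}{\sum_i d_i(t-1)}\;R\big((w,u)\mid g_w,g_u,G_{t-1}\big).
\]

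Next I would observe that $R(\cdot)$ is exactly the quantity evaluated in the proof of Lemma~\ref{le1}: steps~2--5 use the same target-selection rules (preferential weights $d_i(t-1)$, acceptance probabilities depending only on the group labels, and the re-sampling cascade of steps~4--5), with the same normalising sums over $G_{t-1}$, regardless of whether $w$ is new or pre-existing. Hence for $g_w=g_u$ one collects the ``choose and accept $u$ directly as $u_1$'' term together with the geometric series of rejection rounds, using $\sum_{j\ge 0}\big[(1-\alpha)(1-\gamma)\big]^j=\frac{1}{1-(1-\alpha)(1-\gamma)}$ and $\frac{\alpha(1-\gamma)}{1-(1-\alpha)(1-\gamma)}=1-\theta$; for $g_w\neq g_u$ only the ``accept a cross-group $u_1$'' event survives and contributes the factor $\theta$. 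Substituting the two resulting expressions for $R(\cdot)$ into the display above yields precisely the two cases of the statement.

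The only delicate point is the bookkeeping caused by the fact that in an edge-step the initiating node $w$ itself belongs to $G_{t-1}$ and is thus formally eligible to be drawn as a target in steps~2 and~4 (a self-loop $u=w$). I would handle this exactly as in Lemma~\ref{le1} --- reading the stated probabilities for $u\neq w$ and renormalising consistently --- and I would also record that $\{v_t=0\}$ is independent of the subsequent sampling by Assumption~\ref{a4}, so that the conditioning factorises cleanly. I expect this routine normalisation/independence check to be the main (and essentially the only) obstacle; everything else is a direct transcription of the Lemma~\ref{le1} computation with the extra leading factor $\frac{d_w(t-1)}{\sum_i d_i(t-1)}$.
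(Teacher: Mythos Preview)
Your proposal is correct and matches the paper's own treatment: the paper does not give a separate proof of Lemma~\ref{le2} but simply states (in the model description) that ``Edge-step: same to the vertex-step except step~1,'' so that the probability factorises as $\frac{d_w(t-1)}{\sum_i d_i(t-1)}$ times the Lemma~\ref{le1} expression, exactly as you write. Your remark on self-loops is a valid caveat that the paper leaves implicit; the formulas are stated as if self-loops are permitted (or of negligible effect), and your handling is consistent with that reading.
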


\section{Asymptotic results}

\begin{theorem}\label{th1}
Under Assumptions \ref{a1}--\ref{a4}. $d_i(t)$ is the degree of node $i$ in graph $G_t$. Let $D^k_t=\sum_{i}d_i(t)\mathbf{1}_{\{g_i=k\}}$ be the total degrees from group $k$ in $G_t$, $k\in \{1,\cdots,K\}$.
\begin{equation}
 \frac{D^k_t}{2t}\stackrel{a.s.}{\longrightarrow} p_k.
\end{equation}
$\frac{D^k_t}{2t}$ is the ratio of degrees from group $k$,
that is, the sum of the nodes' degrees from group $k$ divided by the total degrees at time $t$.
\end{theorem}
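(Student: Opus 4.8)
The plan is to reduce the statement to a one‑dimensional stochastic‑approximation recursion for the degree proportions and then invoke a standard Robbins--Siegmund/Lyapunov argument. Since a vertex-step adds one new node of degree one together with one incident edge, and an edge-step adds one edge between existing nodes, the total degree $S_t:=\sum_i d_i(t)$ increases by exactly $2$ at every time step, so $S_t=2t+n_0$; in particular $S_t$ is $\mathcal{F}_{t-1}$-measurable and $S_t/(2t)\to 1$, so it suffices to prove $r^k_t:=D^k_t/S_t\to p_k$ almost surely. Write $\Delta^k_t:=D^k_t-D^k_{t-1}$ for the number of endpoints of the new edge $e_t$ lying in group $k$; then $\Delta^k_t\in\{0,1,2\}$ and $\sum_k\Delta^k_t=2$.

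The core of the argument is the conditional drift $\mathbb{E}[\Delta^k_t\mid\mathcal{F}_{t-1}]$. Conditioning on $v_t$ (Assumption \ref{a4}) and, in a vertex-step, on the group $g_w$ of the arriving node (Assumption \ref{a2}), I would sum the connection probabilities of Lemmas \ref{le1} and \ref{le2} over the candidate old neighbours. The key simplification is that the chosen old neighbour lies in a fixed group $\ell$ with probability $1-\theta(1-r^\ell_{t-1})$ when $g_w=\ell$ and with probability $\theta\,r^\ell_{t-1}$ when $g_w\neq\ell$; summing these against $p_\ell$ — and in an edge-step against $r^\ell_{t-1}$, since the first endpoint is also picked proportionally to degree — one finds that in an edge-step each endpoint falls in group $k$ with probability exactly $r^k_{t-1}$, while in a vertex-step the new node contributes $\mathbf{1}_{\{g_w=k\}}$ and its neighbour contributes $\mathbf{1}_{\{g_u=k\}}$. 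Averaging over $v_t\sim B(1,q)$ yields the affine identity
\begin{equation}
\mathbb{E}\big[\Delta^k_t\mid\mathcal{F}_{t-1}\big]=q(2-\theta)\,p_k+\big(2-q(2-\theta)\big)\,r^k_{t-1}.
\end{equation}
Dividing $D^k_t=D^k_{t-1}+\Delta^k_t$ by $S_t=S_{t-1}+2$ and rearranging, this is equivalent to the Robbins--Monro form
\begin{equation}
\mathbb{E}\big[\,r^k_t-p_k\mid\mathcal{F}_{t-1}\,\big]=\Big(1-\frac{q(2-\theta)}{S_t}\Big)\big(r^k_{t-1}-p_k\big).
\end{equation}

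To conclude, set $\varepsilon^k_t:=r^k_t-p_k$, $\gamma_t:=q(2-\theta)/S_t$, and $\zeta_t:=r^k_t-\mathbb{E}[r^k_t\mid\mathcal{F}_{t-1}]=(\Delta^k_t-\mathbb{E}[\Delta^k_t\mid\mathcal{F}_{t-1}])/S_t$, so that $\varepsilon^k_t=(1-\gamma_t)\varepsilon^k_{t-1}+\zeta_t$ with $\mathbb{E}[\zeta_t\mid\mathcal{F}_{t-1}]=0$ and $|\zeta_t|\le 2/S_t=O(1/t)$; Assumption \ref{a3} guarantees $p_k n_0\ge 1$, so all denominators in Lemmas \ref{le1}--\ref{le2} are positive and the recursion is well defined. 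Since $\gamma_t\sim q(2-\theta)/(2t)$ with $q(2-\theta)\in(0,2)$, we have $\sum_t\gamma_t=\infty$, $\sum_t\gamma_t^2<\infty$ and $\sum_t\mathbb{E}[\zeta_t^2\mid\mathcal{F}_{t-1}]\le\sum_t 4/S_t^2<\infty$. Applying $(1-\gamma_t)^2\le 1-\gamma_t$ (valid for large $t$) to $V_t:=(\varepsilon^k_t)^2$ gives $\mathbb{E}[V_t\mid\mathcal{F}_{t-1}]\le(1-\gamma_t)V_{t-1}+\mathbb{E}[\zeta_t^2\mid\mathcal{F}_{t-1}]$; by the Robbins--Siegmund theorem $V_t$ converges a.s. and $\sum_t\gamma_t V_{t-1}<\infty$ a.s., which forces $\lim_t V_t=0$. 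Hence $r^k_t\to p_k$ a.s., and $D^k_t/(2t)=r^k_t\cdot S_t/(2t)\to p_k$.

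The main obstacle is the drift computation: one must carefully track the several cases (vertex- versus edge-step; whether $g_w=k$; whether the neighbour's group equals $k$) and check that the homophily contributions from Lemmas \ref{le1}--\ref{le2} collapse to give a drift that is \emph{exactly} affine in $r^k_{t-1}$ with fixed point $p_k$ — that is, that homophily alters only the contraction rate (through the factor $2-\theta$), not the limiting proportion. Once this identity is in hand, the remaining stochastic-approximation step is entirely routine.
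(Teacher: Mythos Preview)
Your argument is correct and rests on the same core identity the paper derives, namely
\[
\mathbb{E}\big[\Delta^k_t\mid\mathcal{F}_{t-1}\big]=q(2-\theta)\,p_k+\big(2-q(2-\theta)\big)\,r^k_{t-1},
\]
which is exactly the paper's computation of $E[Z^k_t\mid\mathcal{F}_{t-1}]$. The difference lies in how the convergence is extracted from this drift. The paper first proves a martingale law of large numbers for the centred increments $\sum_{i\le t}\bigl(Z^k_i-E[Z^k_i\mid\mathcal{F}_{i-1}]\bigr)=o(t)$ by a Chebyshev/Borel--Cantelli argument along the subsequence $t^2$ followed by interpolation, and then identifies the limit $p_k$ through a Banach fixed-point/Ces\`aro contraction argument on the resulting (almost) deterministic recursion. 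You instead recast $r^k_t-p_k$ directly as a Robbins--Monro recursion with step size $\gamma_t=q(2-\theta)/S_t$ and bounded martingale noise, and close with the Robbins--Siegmund almost-supermartingale theorem applied to $(r^k_t-p_k)^2$. Your route is shorter and invokes a standard named result; the paper's is more self-contained but somewhat more ad hoc in the final contraction step. Both approaches exploit the same structural fact---that homophily changes only the contraction rate (through the factor $2-\theta$) and not the fixed point $p_k$.
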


Let random variable $e_t=(w,u)$ be the edge added at time $t$.
Let $X_t=\mathbf{1}_{\{e_t=(w,u),g_w=g_u\}}$ ($\mathbf{1}_{\{\cdot\}}$ is the indicator function) and  $S_t=\sum_{i=1}^t X_i$. $S_t$ is the number of edges where two nodes are from the same group in graph $G_t$.

\begin{corollary}\label{co1}
Under the conditions of Theorem \ref{th1}, we have
\begin{equation}
\frac{S_t}{t}\stackrel{a.s.}{\longrightarrow} 1+\theta\left(\sum_{k=1}^Kp_k^2-1\right).
\end{equation}
\end{corollary}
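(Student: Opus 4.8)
The plan is to identify the almost‑sure limit of the one‑step conditional probabilities $P(X_t=1\mid\mathcal{F}_{t-1})$, where $\mathcal{F}_{t-1}:=\sigma(G_0,\dots,G_{t-1})$, and then pass from the Ces\`aro average of these conditional probabilities to $S_t/t$ via a martingale law of large numbers. Throughout put $N_{t-1}:=\sum_i d_i(t-1)$, which satisfies $N_{t-1}/(2(t-1))\to 1$ since each step increases the degree sum by exactly $2$.

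First I would compute $P(X_t=1\mid\mathcal{F}_{t-1})$ explicitly, conditioning on $v_t$ (independent of $\mathcal{F}_{t-1}$ by Assumption \ref{a4}) and on the group label of the initiating vertex. Using Lemma \ref{le1}, on $\{v_t=1\}$ a new node $w$ has $g_w=k$ with probability $p_k$ (Assumption \ref{a2}), and then connects within its own group with conditional probability
\[
\sum_{g_u=k}\left[\frac{d_u(t-1)}{N_{t-1}}+(1-\theta)\frac{N_{t-1}-D^k_{t-1}}{N_{t-1}}\frac{d_u(t-1)}{D^k_{t-1}}\right]=\frac{D^k_{t-1}}{N_{t-1}}+(1-\theta)\left(1-\frac{D^k_{t-1}}{N_{t-1}}\right),
\]
so $P(X_t=1\mid\mathcal{F}_{t-1},v_t=1)=\sum_{k=1}^K p_k\left[\tfrac{D^k_{t-1}}{N_{t-1}}+(1-\theta)\bigl(1-\tfrac{D^k_{t-1}}{N_{t-1}}\bigr)\right]$. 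Using Lemma \ref{le2}, on $\{v_t=0\}$ the initiating node $w$ has $g_w=k$ with probability $D^k_{t-1}/N_{t-1}$, and the identical inner computation gives $P(X_t=1\mid\mathcal{F}_{t-1},v_t=0)=\sum_{k=1}^K \tfrac{D^k_{t-1}}{N_{t-1}}\left[\tfrac{D^k_{t-1}}{N_{t-1}}+(1-\theta)\bigl(1-\tfrac{D^k_{t-1}}{N_{t-1}}\bigr)\right]$ — the same expression with the mixing weights $p_k$ replaced by $D^k_{t-1}/N_{t-1}$.

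Next I would invoke Theorem \ref{th1}: since $D^k_{t-1}/(2(t-1))\to p_k$ and $N_{t-1}/(2(t-1))\to 1$ almost surely, $D^k_{t-1}/N_{t-1}\to p_k$ a.s.\ for every $k$. Substituting into both expressions above and simplifying
\[
\sum_{k=1}^K p_k\bigl[p_k+(1-\theta)(1-p_k)\bigr]=(1-\theta)+\theta\sum_{k=1}^K p_k^2=1+\theta\Bigl(\sum_{k=1}^K p_k^2-1\Bigr)=:c,
\]
shows that both $P(X_t=1\mid\mathcal{F}_{t-1},v_t=1)$ and $P(X_t=1\mid\mathcal{F}_{t-1},v_t=0)$, hence $P(X_t=1\mid\mathcal{F}_{t-1})=q\,P(\cdot\mid v_t=1)+(1-q)P(\cdot\mid v_t=0)$, converge a.s.\ to $c$. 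Finally I would write $X_i=P(X_i=1\mid\mathcal{F}_{i-1})+M_i$ with $M_i:=X_i-P(X_i=1\mid\mathcal{F}_{i-1})$ a bounded martingale‑difference sequence; then $\sum_{i\ge 1}M_i/i$ converges a.s.\ (its partial sums form an $L^2$‑bounded martingale, as $\sum_i i^{-2}\mathbb{E}M_i^2<\infty$), so Kronecker's lemma gives $t^{-1}\sum_{i=1}^t M_i\to 0$ a.s., while $t^{-1}\sum_{i=1}^t P(X_i=1\mid\mathcal{F}_{i-1})\to c$ a.s.\ by the Ces\`aro lemma. Adding these yields $S_t/t\to c$ a.s.

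The main obstacle is the first step: unwinding the KPA transition mechanism — including the geometric series over rejections/re‑selections already absorbed into Lemmas \ref{le1}--\ref{le2} — and verifying that the "same‑group" probability collapses to one and the same function of the ratios $D^k_{t-1}/N_{t-1}$ in both the vertex‑ and edge‑step cases. Once that algebra is in place, Theorem \ref{th1} and the standard martingale strong law make the remainder routine.
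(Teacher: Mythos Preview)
Your proposal is correct and follows essentially the same route as the paper: compute $E(X_t\mid\mathcal{F}_{t-1})$ by conditioning on $v_t$ and the initiating group, use Theorem~\ref{th1} to identify its a.s.\ limit as $1+\theta(\sum_k p_k^2-1)$, and pass to $S_t/t$ via a martingale strong law. The only cosmetic difference is that the paper reuses the Chebyshev--subsequence argument from the proof of Theorem~\ref{th1} to show $t^{-1}\sum_{i\le t}\bigl(X_i-E(X_i\mid\mathcal{F}_{i-1})\bigr)\to 0$ a.s., whereas you invoke Kronecker's lemma; both are standard and interchangeable here.
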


Theorem \ref{th1} implies a limit of the ratio of degrees from group $k$. However, the limit might be quiet different from what you get at a particular time $t$. We give a probabilistic estimate of the difference by the following theorem.

\begin{theorem}\label{th2}
Under the conditions of Theorem \ref{th1}, for some time point $t$ we have:
\begin{equation}
\begin{cases}
P\left(|D_t^k-p_k(2t+n_0)|\geq{2c(t)\sqrt{t}}\right)\leq C e^{-c(t)^2},  & \text{if }\, \frac{1}{2-\theta}<q\leq 1 \\
P\left(|D_t^k-p_k(2t+n_0)|\geq{2c(t)\sqrt{\log(t)}}\right)\leq C e^{-\frac{c(t)^2}{t}}, & \text{if }\,q=\frac{1}{2-\theta}\\
 P\left(|D_t^k-p_k(2t+n_0)|\geq 2c(t)\right)\leq C e^{-\frac{c(t)^2}{t^{2-q(2-\theta)}}}, & \text{if }\,0<q<\frac{1}{2-\theta},
\end{cases}
\end{equation}
where $c(t)$ is a strictly monotonically increasing function of $t$, and $C$ is a constant greater than 0.
\end{theorem}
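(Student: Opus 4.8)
The plan is to recast $D^k_t$ as a martingale after a deterministic time‑rescaling and then apply a bounded‑differences (Azuma--Hoeffding) bound whose variance proxy changes regime exactly at $q=1/(2-\theta)$. First I would compute the one‑step drift. Conditioning on $G_{t-1}$, write $N_{t-1}:=\sum_i d_i(t-1)=2(t-1)+n_0$ and $R^k_{t-1}:=D^k_{t-1}/N_{t-1}$. Summing the node‑level probabilities of Lemma~\ref{le1} over all old nodes, the non‑arriving endpoint of the new edge falls in group $k$ with probability $(1-\theta)+\theta R^k_{t-1}$ when the arriving node is itself in group $k$ and with probability $\theta R^k_{t-1}$ otherwise; adding the unit contributed by the arriving node and averaging over its group label (Assumption~\ref{a2}) gives expected increment $p_k(2-\theta)+\theta R^k_{t-1}$ at a vertex‑step. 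The same computation with Lemma~\ref{le2} gives expected increment $2R^k_{t-1}$ at an edge‑step. Averaging over $v_t\sim B(1,q)$ (Assumption~\ref{a4}) then yields
\[
\mathbb{E}\!\left[D^k_t\mid G_{t-1}\right]=\Big(1+\frac{a}{N_{t-1}}\Big)D^k_{t-1}+b\,p_k,\qquad a:=2-q(2-\theta),\quad b:=q(2-\theta),
\]
where $a+b=2$, and $a<1$, $a=1$, $a>1$ correspond respectively to $q>\frac1{2-\theta}$, $q=\frac1{2-\theta}$, $q<\frac1{2-\theta}$.

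Next I would center and rescale. Put $M^k_t:=D^k_t-p_k(2t+n_0)$. Because $a+b=2$ and $2t+n_0=(2(t-1)+n_0)+2$, the inhomogeneous term cancels exactly and $\mathbb{E}[M^k_t\mid G_{t-1}]=(1+a/N_{t-1})M^k_{t-1}$; hence, with the deterministic sequence $\Pi_t:=\prod_{s=1}^t(1+a/N_{s-1})^{-1}$, the process $Y_t:=\Pi_t M^k_t$ is a martingale, and $Y_0=M^k_0=D^k_0-p_kn_0=0$ by Assumption~\ref{a3}. Telescoping the product and applying Stirling's formula gives $\Pi_t=\frac{\Gamma((n_0+a)/2)}{\Gamma(n_0/2)}\,\frac{\Gamma(t+n_0/2)}{\Gamma(t+n_0/2+a/2)}\asymp t^{-a/2}$.

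For the tail bound, observe that $D^k_t-D^k_{t-1}\in\{0,1,2\}$ always, so the martingale increment satisfies $|Y_t-Y_{t-1}|=\Pi_t\,\big|(D^k_t-D^k_{t-1})-\mathbb{E}[D^k_t-D^k_{t-1}\mid G_{t-1}]\big|\le 2\Pi_t$. Azuma--Hoeffding then gives $P(|Y_t|\ge\lambda)\le 2\exp\!\big(-\lambda^2/(8V_t)\big)$ with $V_t:=\sum_{s=1}^t\Pi_s^2$, and undoing the rescaling, $P(|M^k_t|\ge x)=P(|Y_t|\ge x\Pi_t)\le 2\exp\!\big(-x^2\Pi_t^2/(8V_t)\big)$. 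Since $\Pi_s^2\asymp s^{-a}$, one has $V_t\asymp t^{1-a}$ if $a<1$, $V_t\asymp\log t$ if $a=1$, and $V_t\to V_\infty<\infty$ if $a>1$, so that $\Pi_t^2/V_t\asymp 1/t$, $1/(t\log t)$, $t^{-a}$ in the three cases. Substituting $x=2c(t)\sqrt t$, $x=2c(t)\sqrt{\log t}$, $x=2c(t)$ respectively and recalling $a=2-q(2-\theta)$ produces the three displayed inequalities, with the absolute constant in front of $c(t)^2$ in the exponent absorbed into $C$.

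The main obstacle is the first step: one must sum the endpoint probabilities of Lemmas~\ref{le1}--\ref{le2} carefully over the endpoint, over the group label of the arriving (resp.\ selected) node, and over $v_t$, and then notice the identity $a+b=2$. That identity is precisely what makes $M^k_t$ a martingale after the \emph{single} rescaling factor $\Pi_t$ and what is responsible for the clean $t^{-a/2}$ decay; everything after it — the Gamma‑function asymptotics for $\Pi_t$, summing $s^{-a}$, and the Azuma step — is routine, the only nuisance being the bookkeeping of constants across the three regimes.
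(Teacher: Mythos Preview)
Your proposal is correct and follows essentially the same route as the paper: both compute the one-step drift to obtain $\mathbb{E}[M^k_t\mid\mathcal{F}_{t-1}]=(1+a/N_{t-1})M^k_{t-1}$ with $a=2-q(2-\theta)$, divide by the product $\Pi_t^{-1}=\prod_{s\le t}(1+a/N_{s-1})$ to get a mean-zero martingale with bounded increments of order $\Pi_t$, use Stirling to obtain $\Pi_t\asymp t^{-a/2}$, apply an Azuma--Hoeffding-type bound, and split into the three regimes $a\lessgtr 1$ when summing $\sum_s\Pi_s^2\asymp\sum_s s^{-a}$. The only cosmetic difference is that the paper invokes a variance-augmented version of the inequality (Theorems~2.22 and~2.26 of Chung--Lu) rather than plain Azuma, but since the conditional variances and the squared increment bounds are both of order $\Pi_t^2$ this changes nothing.
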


The degree distribution obeying the power-law is an attractive property of the classic preferential attachment model. For the KPA model, the nodes are from $K$ groups. We will exhibit the power-law degree distribution for each group.

%
%\begin{lemma}\label{le3}
%Suppose that there are stochastic processes $\{a_t\},\,\{b_t\}$, and constant sequence $\{c_t\}$ satisfying the recurrence relation
%\begin{equation}\label{eq27}
%E(a_{t+1})=E\left[\left(1-\frac{b_t}{t+t_1}\right)a_t\right]+c_t ,\,\forall t\geq t_0.
%\end{equation}
%Where $\lim_{t\to\infty} c_t=c,\, \lim_{t\to\infty} b_t\stackrel{a.s.}{\longrightarrow} b>0 \text{ and } 0<\frac{a_t}{t}\leq C$, $C$ is a positive constant. Then $\lim_{t\to\infty}\frac{E(a_t)}{t}$ exists and
%\begin{equation}\label{eq28}
%\lim_{t\to\infty}\frac{E(a_t)}{t}=\frac{c}{1+b}.
%\end{equation}
%\end{lemma}

\begin{theorem}\label{th3}
Under the conditions of Theorem \ref{th1}, let $m_d^k(t)$ denote the number of nodes with degree $d$ from group $k$ in graph $G_t$, $k=1,\cdots,K$. Note that $m_1^k(0)=p_kn_0,\,m_0^k(t)=0$.
Letting $M_d^k=\lim_{t\to\infty} \frac{E(m_d^k(t))}{t}$, we have:
\begin{equation}
\begin{aligned}
M_d^k=\frac{2qp_k}{4-q}\prod_{j=2}^d\frac{(j-1)(2-q)}{2+j(2-q)}\propto  d^{-\beta_k},
\end{aligned}
\end{equation}
where $\beta_k=1+\frac{2}{2-q}$.
\end{theorem}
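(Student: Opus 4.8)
The plan is to use the classical rate-equation (master-equation) method, adapted to track the group labels and to absorb the homophily parameter $\theta$. The first step is to compute, for a fixed node $i$ with $g_i=k$ and $d_i(t-1)=d$, the conditional probability that the degree of $i$ increases at time $t$. Conditioning on $v_t$ (a vertex-step w.p.\ $q$, an edge-step w.p.\ $1-q$) and on the group label of the node $w$ that initiates the step, Lemmas~\ref{le1} and~\ref{le2} express this probability through the two building blocks $d/\sum_j d_j(t-1)$ and $(1-\theta)\cdot\frac{\sum_{g_j\neq k}d_j(t-1)}{\sum_j d_j(t-1)}\cdot\frac{d}{D^k_{t-1}}$. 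Now $\sum_j d_j(t-1)=2(t-1)+n_0$ is \emph{deterministic} (every step adds exactly $2$ to the total degree), and by Theorem~\ref{th1} we have $D^k_{t-1}/(2(t-1))\to p_k$. Plugging in and averaging over the group of $w$ with weights $p_\ell$, one finds the pleasant cancellation $p_k+(1-p_k)\theta+(1-\theta)(1-p_k)=1$: the probability that $i$ gains an edge is $\frac{d}{2(t-1)+n_0}(1+o(1))$ at a vertex-step and $\frac{2d}{2(t-1)+n_0}(1+o(1))$ at an edge-step (where $i$ may be either endpoint; self-loops at edge-steps are negligible), so the effective attachment rate for node $i$ is $\frac{(2-q)d}{2(t-1)+n_0}(1+o(1))$, independent of $\theta$.

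This produces, for $d\ge 2$, the recursion
\[
E(m_d^k(t)) = \Bigl(1-\frac{(2-q)d}{2(t-1)+n_0}\Bigr)E(m_d^k(t-1)) + \frac{(2-q)(d-1)}{2(t-1)+n_0}\,E(m_{d-1}^k(t-1)) + \varepsilon^{(d)}_t ,
\]
and for $d=1$ the analogous recursion in which the gain term $\frac{(2-q)(d-1)}{2(t-1)+n_0}E(m_{d-1}^k(t-1))$ is replaced by the source term $qp_k$ (a fresh node of degree one joins group $k$ with probability $qp_k$). The remainder $\varepsilon_t^{(d)}$ gathers the fluctuation of $D^k_{t-1}$ around $p_k(2(t-1)+n_0)$ and the self-loop contributions. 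To show $\varepsilon_t^{(d)}=o(1)$ I would use $m_d^k(t-1)\le V_{t-1}=O(t)$ together with Theorem~\ref{th2}: on the event $\{|D^k_{t-1}-p_k(2(t-1)+n_0)|\le\delta t\}$ the fluctuation of $1/D^k_{t-1}$ is $O(|D^k_{t-1}-p_k(2(t-1)+n_0)|/t^2)$, whose expectation times $O(t)$ is $o(1)$ by the concentration bounds, while on the complementary event (which has at most exponentially small probability by Theorem~\ref{th2}) one simply uses that $1/D^k_{t-1}\le 1/(p_kn_0)$ is bounded.

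With the recursion established, I would prove $E(m_d^k(t))/t\to M_d^k$ by induction on $d$ using the standard deterministic lemma: if $a_t=(1-\frac{b_t}{t})a_{t-1}+c_t$ with $b_t\to b>-1$ and $c_t\to c$, then $a_t/t\to c/(1+b)$. For $d=1$ this yields $M_1^k=\frac{qp_k}{1+(2-q)/2}=\frac{2qp_k}{4-q}$. For $d\ge2$, the inductive hypothesis makes the source term converge, $c_t\to M_{d-1}^k\cdot\frac{(2-q)(d-1)}{2}$, while $b=\frac{(2-q)d}{2}$, giving $M_d^k=M_{d-1}^k\cdot\frac{(2-q)(d-1)}{2+(2-q)d}$ and hence, telescoping, the stated product. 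Finally, writing $a:=\frac{2}{2-q}$ one has $M_d^k=M_1^k\prod_{j=2}^d\frac{j-1}{j+a}=M_1^k\,\frac{\Gamma(d)\,\Gamma(2+a)}{\Gamma(d+1+a)}$, and Stirling's formula gives $M_d^k\sim M_1^k\,\Gamma(2+a)\,d^{-(1+a)}$, i.e.\ $M_d^k\propto d^{-\beta_k}$ with $\beta_k=1+\frac{2}{2-q}$.

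The main obstacle is the middle step: justifying the passage from the exact conditional probabilities in Lemmas~\ref{le1}--\ref{le2} to the clean, $\theta$-free rate $\frac{(2-q)d}{2(t-1)+n_0}$, i.e.\ controlling the error incurred by replacing the random normalizer $D^k_{t-1}$ by its deterministic proxy $p_k(2(t-1)+n_0)$ uniformly enough that it contributes only $o(1)$ to the recursion for $E(m_d^k(t))$. This is precisely where Theorem~\ref{th2} is needed; the remaining ingredients --- the rate-equation bookkeeping, the elementary sequence lemma, and the Gamma-function asymptotics --- are routine.
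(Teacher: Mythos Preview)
Your proposal is correct and follows the same rate-equation route as the paper: derive a one-step recursion for $E(m_d^k(t))$, pass to the limit via a sequence lemma, solve the resulting recursion in $d$, and read off the power-law exponent. The only genuine difference is in how the random normaliser $D^k_{t-1}$ is handled. The paper keeps $D^k_{t-1}$ inside the recursion and proves a dedicated Lemma~\ref{le3} for recurrences of the form $E(a_{t+1})=E[(1-b_t/(t+t_1))a_t]+c_t$ with \emph{stochastic} $b_t\stackrel{a.s.}{\to}b$, appealing only to the almost-sure convergence of Theorem~\ref{th1}; you instead substitute $D^k_{t-1}\approx p_k(2(t-1)+n_0)$ up front, obtain the deterministic coefficient $\frac{(2-q)d}{2(t-1)+n_0}$, and push the fluctuation into an explicit error $\varepsilon^{(d)}_t$ that you bound via the concentration inequalities of Theorem~\ref{th2}. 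Both routes expose the same $\theta$-free attachment rate and land on the identical recursion $M_d^k=M_{d-1}^k\cdot\frac{(2-q)(d-1)}{2+(2-q)d}$. Your version is a bit more quantitative and makes the cancellation of $\theta$ explicit, while the paper's is slightly more modular (Lemma~\ref{le3} is reusable and needs only a.s.\ convergence rather than a tail bound); your Gamma-function computation of the exponent is equivalent to the paper's ratio test $M_d^k/M_{d-1}^k=1-\beta_k/d+O(1/d^2)$.
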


Theorem \ref{th3} implies that different groups' degree have the same power-law distribution.
Figure \ref{figlaw} shows the power-law degree distribution of a simulated network data.
\begin{figure}[H]
\centering

\subfigure[$\theta=0.1,p_1=0.5,q=0.5$]{
\begin{minipage}[t]{0.3\linewidth}
\centering
\includegraphics[width=2in]{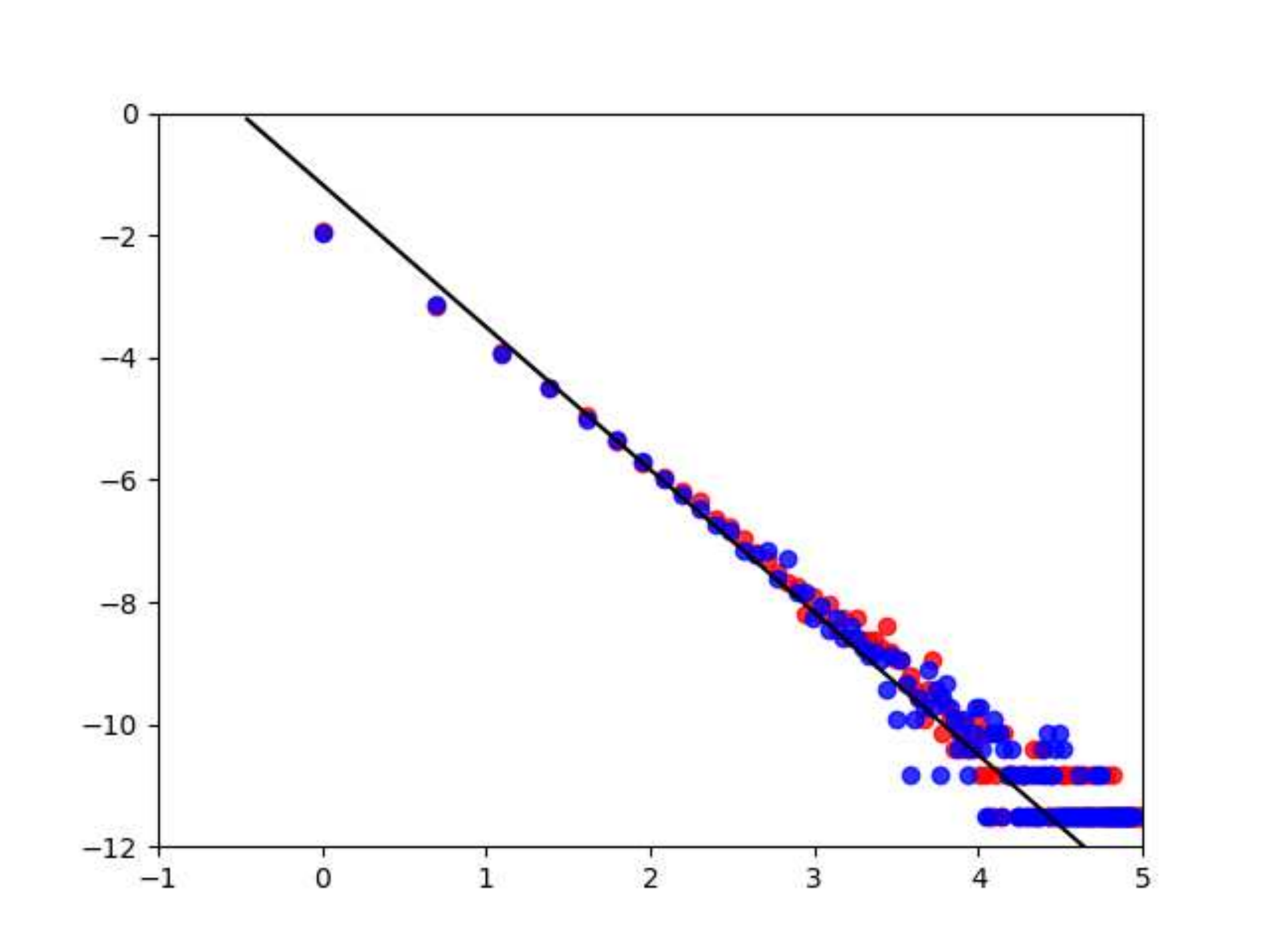}
%\caption{fig1}
\end{minipage}%
}%
\subfigure[$\theta=0.5,p_1=0.5,q=0.5$]{
\begin{minipage}[t]{0.3\linewidth}
\centering
\includegraphics[width=2in]{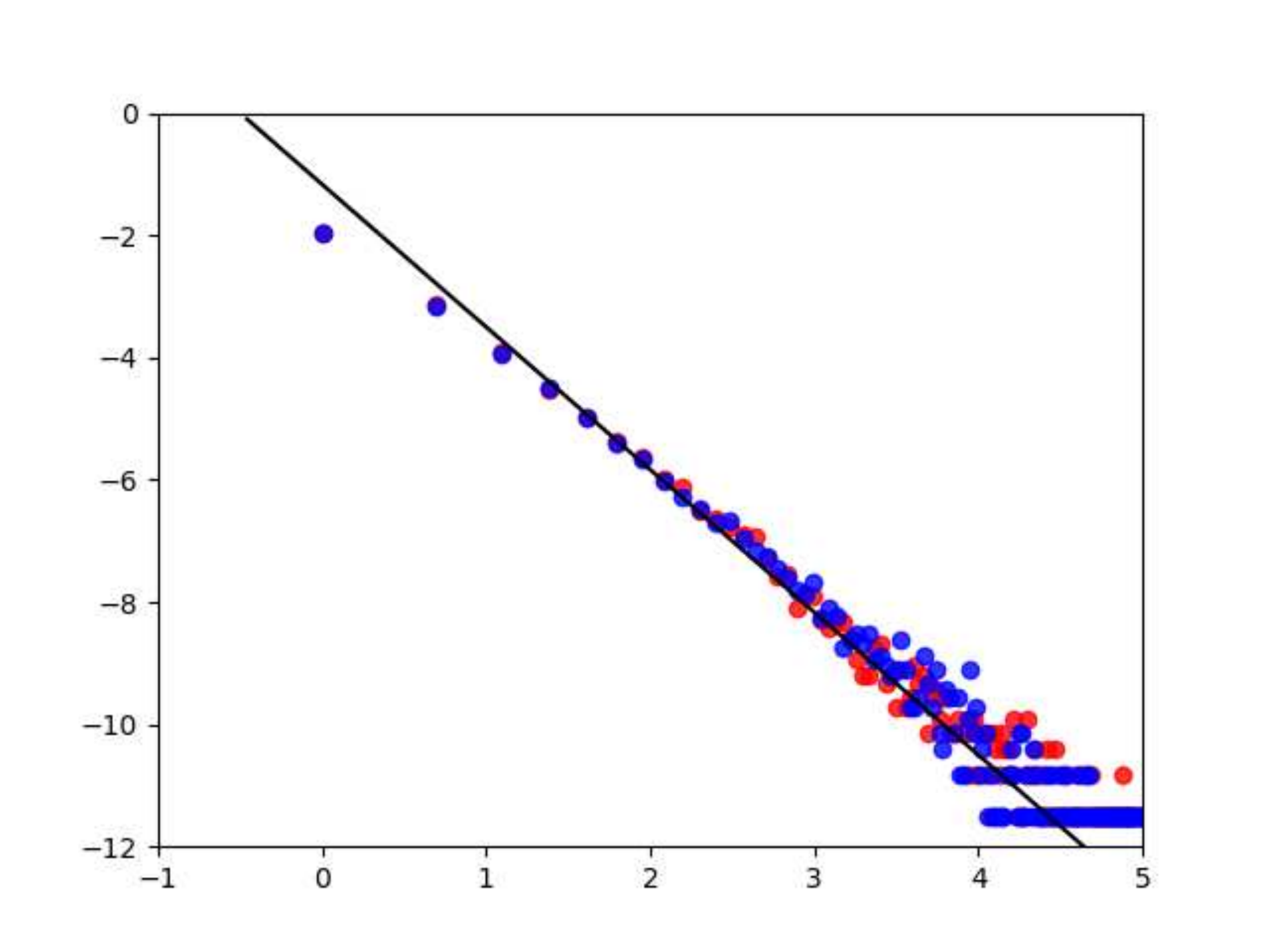}
%\caption{fig2}
\end{minipage}%
}%
\subfigure[$\theta=0.9,p_1=0.5,q=0.5$]{
\begin{minipage}[t]{0.3\linewidth}
\centering
\includegraphics[width=2in]{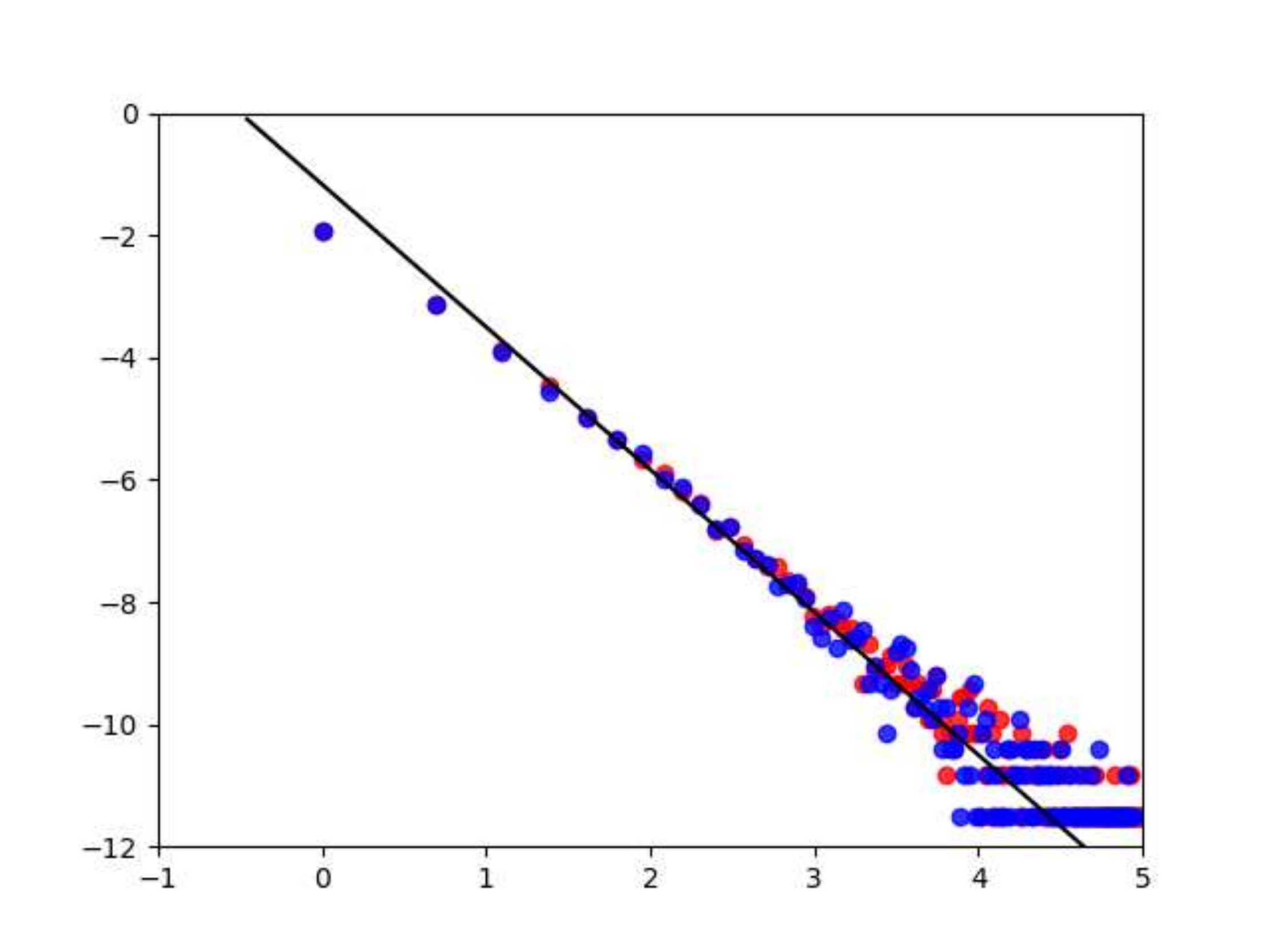}
%\caption{fig2}
\end{minipage}%
}%

            %这个回车键很重要 \quad也可以
\subfigure[$\theta=0.5,p_1=0.1,q=0.5$]{
\begin{minipage}[t]{0.3\linewidth}
\centering
\includegraphics[width=2in]{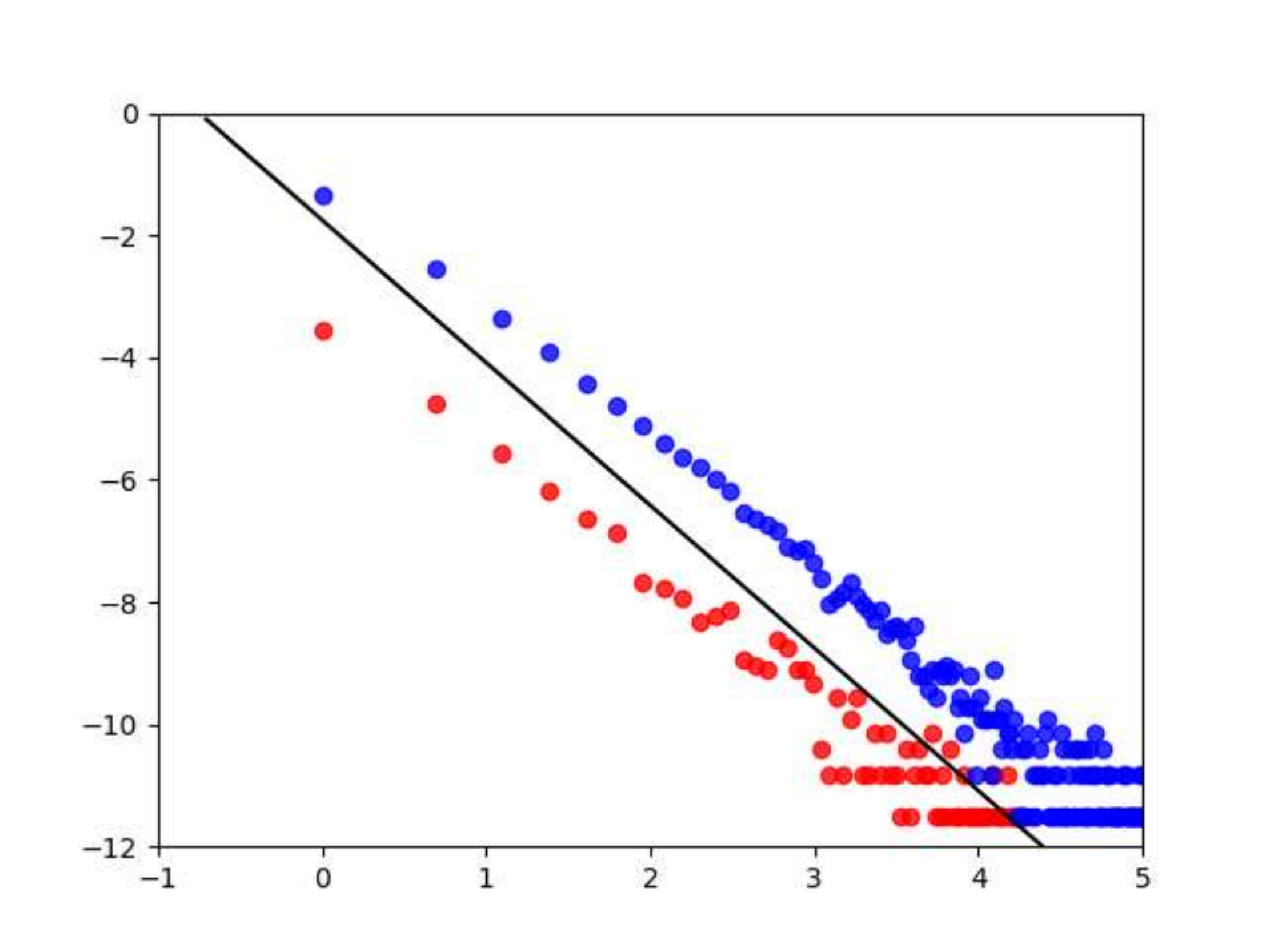}
%\caption{fig1}
\end{minipage}%
}%
\subfigure[$\theta=0.5,p_1=0.5,q=0.5$]{
\begin{minipage}[t]{0.3\linewidth}
\centering
\includegraphics[width=2in]{theta05-eps-converted-to.pdf}
%\caption{fig2}
\end{minipage}%
}%
\subfigure[$\theta=0.5,p_1=0.9,q=0.5$]{
\begin{minipage}[t]{0.3\linewidth}
\centering
\includegraphics[width=2in]{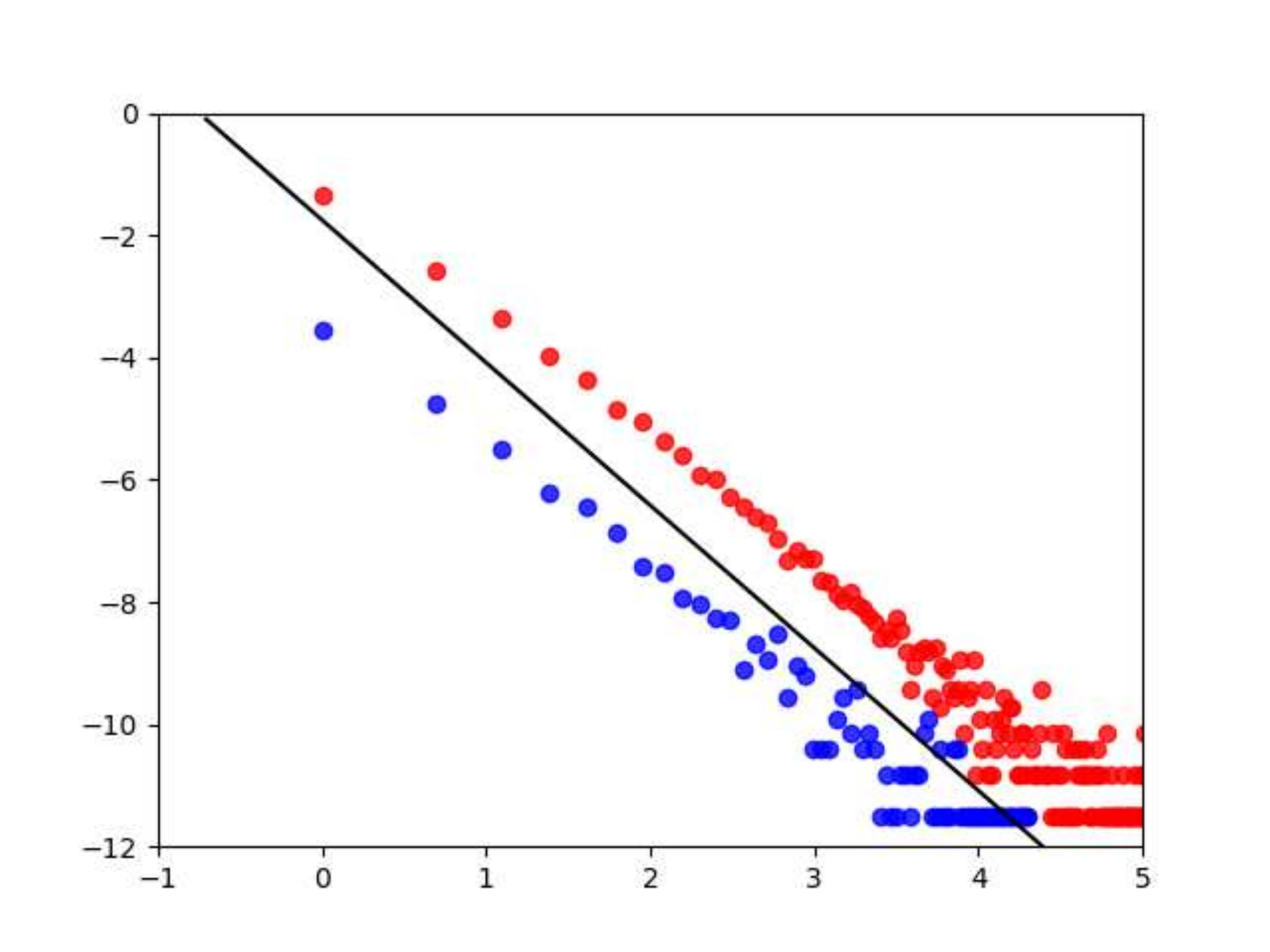}
%\caption{fig2}
\end{minipage}%
}%

\subfigure[$\theta=0.5,p_1=0.5,q=0.1$]{
\begin{minipage}[t]{0.3\linewidth}
\centering
\includegraphics[width=2in]{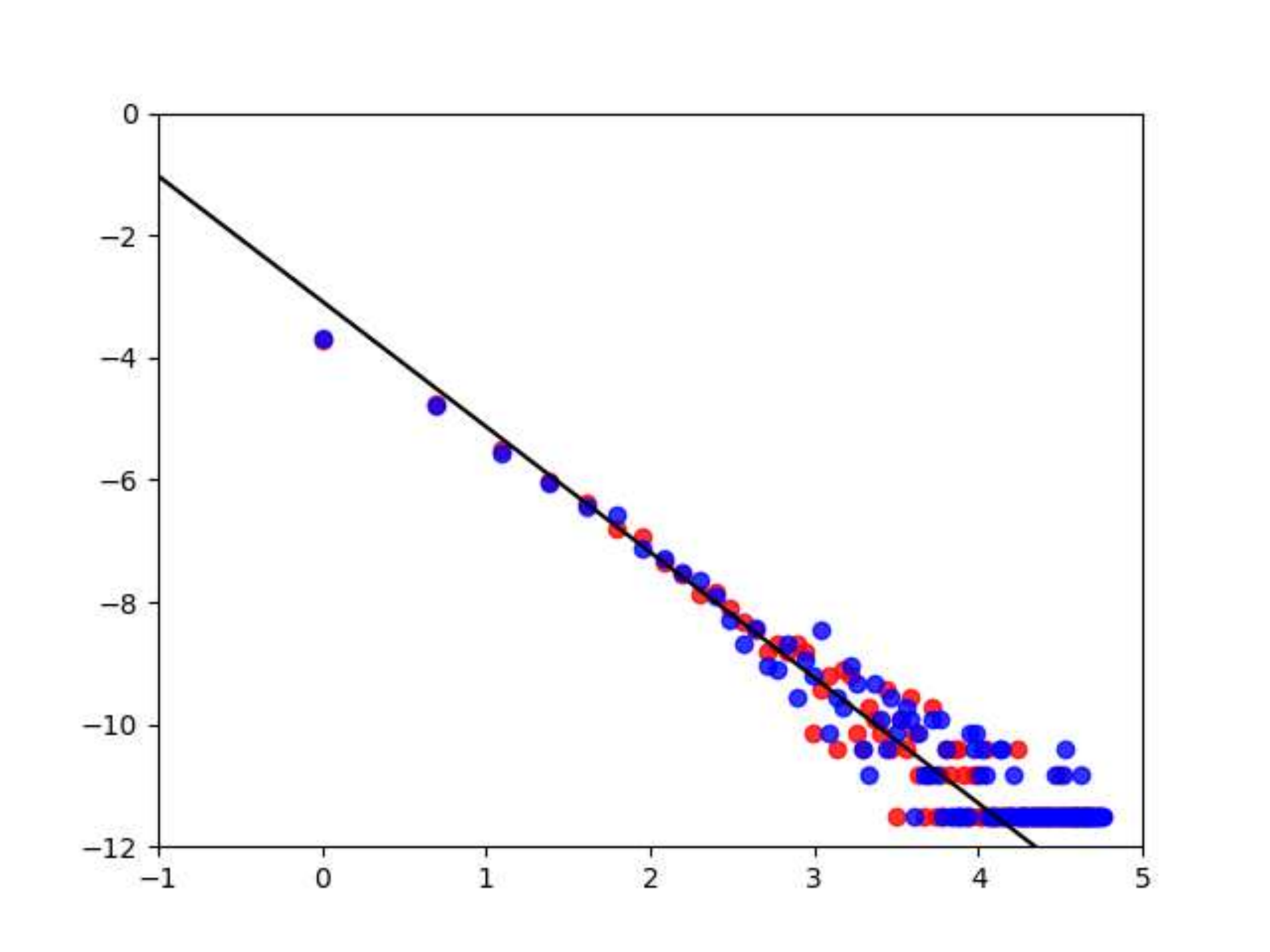}
%\caption{fig1}
\end{minipage}%
}%
\subfigure[$\theta=0.5,p_1=0.5,q=0.5$]{
\begin{minipage}[t]{0.3\linewidth}
\centering
\includegraphics[width=2in]{theta05-eps-converted-to.pdf}
%\caption{fig2}
\end{minipage}%
}%
\subfigure[$\theta=0.5,p_1=0.5,q=0.9$]{
\begin{minipage}[t]{0.33\linewidth}
\centering
\includegraphics[width=2in]{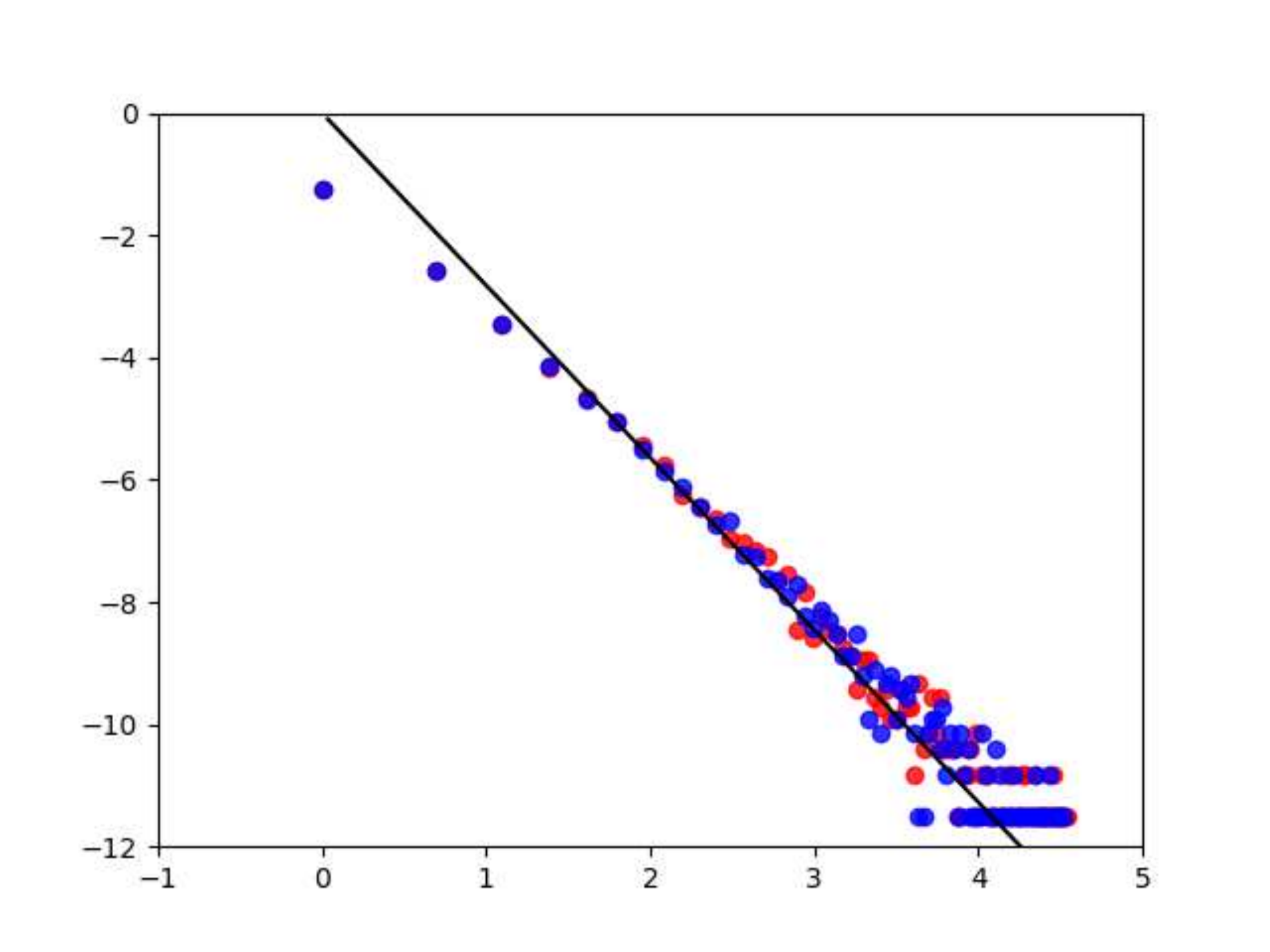}
%\caption{fig2}
\end{minipage}%
}%

\centering
\caption{The x-axis is $\log(d)$ and y-axis is $\log(\frac{m^k_d(T)}{T})$, where $T=100000,\,k=1,2$. The red and blue points come from different groups and the black line is for $y=\beta_k x+C$.}
\label{figlaw}
\end{figure}

\section{Parameter estimation}

\subsection{With historical information}
The parameters involved in the KPA model are $\psi=(\theta,\{p_k\}_{k=1}^K,q)$, and $\sum_{k=1}^K p_k=1$.
Let $\{G\}_{t=0}^T$ be the evolving network process generated by the KPA model in the time range $[0,T]$.
If we have all the information of $\{G\}_{t=0}^T$, we can get the MLE of the parameters:
\begin{eqnarray}
\nonumber &&\hat{p}_k=\frac{\sum_{t=1}^T \,\mathbf{1}_{\{v_t=1,g_w=k\}}}{\sum_{t=1}^T \mathbf{1}_{\{v_t=1\}}},\, k=1,\cdots,K\\
\nonumber &&\hat{q}=\frac{\sum_{t=1}^T {v_t}}{T}\\
&& \hat{\theta}=\arg \max_{\theta\in (0,1]}\log L_2(\theta|\{G_t\}_{t=0}^T).
\end{eqnarray}

$\{v_t=1,g_w=k\}$ implies that a vertex-step happens at time $t$ and the new node $w$ is from group $k$.
Letting $P^k_t=\frac{D_{t-1}^k}{2(t-1)+n_0}$, we have:
\begin{eqnarray}\label{mle1}
\nonumber &&\log L_2(\theta|\{G_t\}_{t=0}^T)\\
\nonumber =&&\sum_{t=1}^T \sum_{k=1}^K\mathbf{1}_{\{v_t=1,g_w=g_u=k\}}\log\left (P^k_t+(1-\theta)\left(1-P^k_t\right)\right)\\
\nonumber &&+\sum_{t=1}^T\sum_{k=1}^K{ \mathbf{1}_{\{v_t=1,g_w= k,g_u\neq k\}}}\log \left(\left(1-P^k_t\right)\theta\right)\\
\nonumber &&+ \sum_{t=1}^T\sum_{k=1}^K{\mathbf{1}_{\{v_t=0,g_w=g_u= k\}}}\log \left(P^k_t\left (P^k_t+(1-\theta)\left(1-P^k_t\right)\right) \right)\\
 &&+\sum_{t=1}^T\sum_{k=1}^K{\mathbf{1}_{\{v_t=0,g_w= k,g_u\neq k\}}}\log\left(P^k_t\left(1-P^k_t\right)\theta\right).
\end{eqnarray}

\begin{theorem}\label{th4}
Network data $\{G_t\}_{t=0}^T$ is generated by the KPA model under Assumptions \ref{a1}--\ref{a4}. Moreover, all the information of $\{G_t\}_{t=0}^T$ is known.
Let $\psi^*=(\theta^*,\{p^*_k\}_{k=1}^K,q^*)$ be the real parameters for the KPA model. $\hat{\psi}=(\hat{\theta},\{\hat{p}_k\}_{k=1}^K,\hat{q})$ are the MLE after given $\{G_t\}_{t=0}^T$. $\hat{\psi}$ satisfy:
\begin{eqnarray}
\hat{\psi}\stackrel{a.s.}{\longrightarrow} \psi^*.
\end{eqnarray}
\end{theorem}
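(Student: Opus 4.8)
The plan is to handle $\hat q$ and $\hat p_k$ by the strong law of large numbers and $\hat\theta$ by a maximum-likelihood consistency (M-estimation) argument. First, by Assumption~\ref{a4} the $v_t$ are i.i.d.\ $B(1,q^*)$, so $\hat q=T^{-1}\sum_{t=1}^T v_t\to q^*$ almost surely; by Assumptions~\ref{a2}--\ref{a4} the variables $\mathbf{1}_{\{v_t=1,g_w=k\}}$ are i.i.d.\ $B(1,q^*p^*_k)$, hence $T^{-1}\sum_{t=1}^T\mathbf{1}_{\{v_t=1,g_w=k\}}\to q^*p^*_k$ and $T^{-1}\sum_{t=1}^T\mathbf{1}_{\{v_t=1\}}\to q^*$ almost surely, so $\hat p_k\to p^*_k$ almost surely. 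The whole difficulty is thus $\hat\theta$. Write $\ell_t(\theta)$ for the $t$-th summand of $\log L_2(\theta\mid\{G_s\}_{s=0}^T)$ displayed in~\eqref{mle1}, so $\log L_2(\theta)=\sum_{t=1}^T\ell_t(\theta)$, and set $\mathcal F_{t-1}=\sigma(G_0,\dots,G_{t-1})$ and $P^k_t=D^k_{t-1}/(2(t-1)+n_0)$. I would first record two structural facts: (i) for each realization $\theta\mapsto\log L_2(\theta)$ is concave on $(0,1]$, because every same-group term is the logarithm of an affine function of $\theta$ and every cross-group term equals $\log\theta$ up to a $\theta$-free constant, and it is strictly concave once a single cross-group edge has occurred; (ii) by Lemmas~\ref{le1}--\ref{le2} and $\sum_i d_i(t-1)=2(t-1)+n_0$, the conditional mean $E[\ell_t(\theta)\mid\mathcal F_{t-1}]$ is an explicit continuous function of $(P^1_t,\dots,P^K_t)$: a $q^*$-weighted combination, over the two step types and the $K$ groups, of terms of the form
\[
\bigl(P^k_t+(1-\theta^*)(1-P^k_t)\bigr)\log\bigl(P^k_t+(1-\theta)(1-P^k_t)\bigr)+\theta^*(1-P^k_t)\log\bigl(\theta(1-P^k_t)\bigr),
\]
plus the $\log P^k_t$ factors that occur in the edge-step rows.

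Next I would prove the law of large numbers $T^{-1}\log L_2(\theta)\to m(\theta)$ almost surely for each fixed $\theta\in(0,1]$, where $m(\theta)$ is obtained from the conditional mean above by replacing every $P^k_t$ by $p^*_k$. Decompose $\ell_t(\theta)=E[\ell_t(\theta)\mid\mathcal F_{t-1}]+\xi_t(\theta)$. By Theorem~\ref{th1}, $P^k_t\to p^*_k$ almost surely, so by continuity $E[\ell_t(\theta)\mid\mathcal F_{t-1}]\to m(\theta)$ almost surely and, by averaging over $t$, $T^{-1}\sum_{t=1}^T E[\ell_t(\theta)\mid\mathcal F_{t-1}]\to m(\theta)$ almost surely. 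The remainder $\sum_{t=1}^T\xi_t(\theta)$ is a martingale whose increments are bounded for all large $t$ (since $P^k_t$ is eventually bounded away from $0$ and $1$), so $T^{-1}\sum_{t=1}^T\xi_t(\theta)\to0$ almost surely by the martingale strong law, the finitely many early terms being negligible after division by $T$. Applying this along a countable dense subset of $(0,1]$ and using that pointwise convergence of concave functions on an interval entails uniform convergence on every compact subinterval, I obtain $\sup_{\theta\in[\delta,1]}\bigl|T^{-1}\log L_2(\theta)-m(\theta)\bigr|\to0$ almost surely for every $\delta\in(0,1)$.

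It remains to identify $\theta^*$ as the unique maximizer of $m$ and to confine $\hat\theta$ to a compact subinterval of $(0,1]$. Writing $a^*_k=p^*_k+(1-\theta^*)(1-p^*_k)$ and $b^*_k=\theta^*(1-p^*_k)$, so $a^*_k+b^*_k=1$, a direct computation gives
\[
m(\theta)=m(\theta^*)-\sum_{k=1}^K p^*_k\,\mathrm{KL}\bigl((a^*_k,b^*_k)\,\big\|\,(p^*_k+(1-\theta)(1-p^*_k),\ \theta(1-p^*_k))\bigr),
\]
where each summand is nonnegative and vanishes at $\theta=\theta^*$; provided some $p^*_k<1$, i.e.\ under the identifiability condition $\sum_k(p^*_k)^2<1$ (equivalently, cross-group edges genuinely arise in the limit; cf.\ Corollary~\ref{co1}), each such summand is strictly positive for $\theta\neq\theta^*$, and since $m$ is strictly concave, $\theta^*$ is its unique maximizer. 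For the endpoint $\theta=0$, the displayed right-hand side behaves like $\theta^*\bigl(1-\sum_k(p^*_k)^2\bigr)\log\theta\to-\infty$ as $\theta\to0^{+}$, so there is $\delta>0$ with $\sup_{0<\theta\le\delta}m(\theta)<m(\theta^*)$; writing $\log L_2(\theta)=A(\theta)+B(\theta)$ with $A$ nonincreasing and $B$ nondecreasing in $\theta$ gives $\sup_{0<\theta\le\delta}\log L_2(\theta)\le A(0)+B(\delta)$, and $T^{-1}(A(0)+B(\delta))$ converges almost surely to a ($\delta$-dependent) limit strictly below $m(\theta^*)$ once $\delta$ is small enough, while $T^{-1}\log L_2(\theta^*)\to m(\theta^*)$; hence $\hat\theta\in[\delta,1]$ for all large $T$ almost surely. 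Combining this with the uniform convergence on $[\delta,1]$ and the uniqueness of the maximizer of $m$ gives $\hat\theta\to\theta^*$ almost surely by the usual argmax/consistency argument, completing the proof.

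The step I expect to be the main obstacle is the law of large numbers in the second paragraph. The summands $\ell_t(\theta)$ are neither independent nor stationary — they depend on the whole evolving degree configuration through the ratios $P^k_t$ — so the limit of $T^{-1}\log L_2(\theta)$ has to be assembled by marrying the almost sure convergence of the group-degree ratios (Theorem~\ref{th1}) with a martingale strong law for the fluctuation part, while controlling the early terms on which $P^k_t$ is not yet close to $p^*_k$. A secondary difficulty is that $\theta$ ranges over the half-open interval $(0,1]$, so the $\log\theta$ terms force the separate boundary argument above, together with the identifiability condition $\sum_k(p^*_k)^2<1$, before the standard compact-parameter M-estimation machinery applies.
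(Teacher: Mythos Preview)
Your proposal is correct and follows essentially the same route as the paper: SLLN for $\hat q$ and $\hat p_k$, then for $\hat\theta$ the convergence $T^{-1}\log L_2(\theta)\to m(\theta)$ via Theorem~\ref{th1} combined with a (martingale) LLN, identification of $\theta^*$ as the unique maximizer of $m$ through a KL/Jensen argument (the paper introduces the auxiliary variable $Y$ with law $P(y;\theta)$ for this), and an argmax conclusion. Your treatment is in fact more careful than the paper's, which handles the LLN step with the single line ``By LLN and Theorem~\ref{th1}'' and does not spell out the uniform-convergence, boundary, or identifiability issues you address.
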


Theorem \ref{th4} guarantees the MLE's convergence. Further, we can get the asymptotic normality.
\begin{theorem}\label{th5}
Under the conditions of Theorem \ref{th4}, the CLT for MLE is
\begin{equation}
\begin{aligned}
 \sqrt{T}\left(\mathbf{\hat{\psi}}-\psi^*\right)\xrightarrow{d} N(0,\mathbf{\Sigma}^{-1}),
\end{aligned}
\end{equation}
where
\begin{equation}
\begin{aligned}
\mathbf{\Sigma}=\begin{bmatrix}
\sum_{k=1}^K \left[\frac{p^*_{k}(1-p^*_{k})}{\theta^*}+\frac{p^*_{k}(1-p^*_{k})^2}{p^*_k+(1-p^*_k)(1-\theta^*)}\right]& 0& 0 \\
0 &\mathbf{\Sigma_{22}} &0\\
0&0&\frac{1}{q^*(1-q^*)}
\end{bmatrix}
\end{aligned}
\end{equation}
and
$\mathbf{\Sigma_{22}}$ is a $(K-1)\times (K-1)$ symmetric matrix satisfying
\begin{equation}
\Sigma_{22}(ij)=\begin{cases}
\frac{q^*}{1-\sum_{k=1}^{K-1}p^*_k},& i\neq j\\
\frac{q^*(1-\sum_{k\neq l}^{K-1}p^*_k)}{p^*_l(1-\sum_{k=1}^{K-1}p^*_k)},& i=j=l.
\end{cases}
\end{equation}
\end{theorem}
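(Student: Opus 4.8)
The plan is to exploit the fact that, up to an additive term not depending on $\psi$, the log-likelihood of $\{G_t\}_{t=0}^{T}$ decomposes into three pieces with disjoint parameters,
\[
\log L(\psi\mid\{G_t\}_{t=0}^{T})=\ell_q(q)+\ell_p(p_1,\dots,p_{K-1})+\ell_\theta(\theta),
\]
where $\ell_q(q)=\sum_{t=1}^{T}\big[v_t\log q+(1-v_t)\log(1-q)\big]$ is the Bernoulli log-likelihood of the $v_t$'s, $\ell_p$ is the multinomial log-likelihood of the group labels $g_w$ of the nodes that arrive at vertex-steps (after eliminating $p_K=1-\sum_{k=1}^{K-1}p_k$), and $\ell_\theta$ is the expression for $\log L_2(\theta\mid\{G_t\}_{t=0}^{T})$ given above. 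Since the three blocks involve disjoint coordinates of $\psi$, every mixed second partial of $\log L$ vanishes, so $\nabla^2\log L$ is block-diagonal with blocks indexed by $(\theta)$, $(p_1,\dots,p_{K-1})$ and $(q)$. From the first-order condition $\nabla\log L(\hat\psi)=0$ and a Taylor expansion about $\psi^*$,
\[
\sqrt T\,(\hat\psi-\psi^*)=\Big[-\tfrac1T\,\nabla^2\log L(\tilde\psi)\Big]^{-1}\tfrac1{\sqrt T}\,\nabla\log L(\psi^*)
\]
for some $\tilde\psi$ on the segment joining $\hat\psi$ and $\psi^*$; Theorem~\ref{th4} gives $\tilde\psi\xrightarrow{a.s.}\psi^*$. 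It then suffices to prove (i) $\tfrac1{\sqrt T}\nabla\log L(\psi^*)\xrightarrow{d}N(0,\mathbf{\Sigma})$ and (ii) $-\tfrac1T\nabla^2\log L(\tilde\psi)\xrightarrow{p}\mathbf{\Sigma}$, and then apply Slutsky together with $\mathbf{\Sigma}^{-1}\mathbf{\Sigma}\,\mathbf{\Sigma}^{-1}=\mathbf{\Sigma}^{-1}$. Throughout I assume the true parameters are interior: $\theta^*,q^*\in(0,1)$ and $p^*_k\in(0,1)$.

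For (i), set $\mathcal F_t=\sigma(G_0,\dots,G_t)$. Conditionally on $\mathcal F_{t-1}$ the step at time $t$ first draws $v_t\sim\mathrm{Bern}(q^*)$, then a group $h_t:=g_w$ which is $\mathrm{Multinomial}(p^*)$ if $v_t=1$ and satisfies $P(h_t=k\mid\mathcal F_{t-1},v_t=0)=P^k_t$, and then, by Lemmas~\ref{le1}--\ref{le2}, the within-group indicator $X_t=\mathbf{1}_{\{g_w=g_u\}}$ is $\mathrm{Bern}(\pi_{h_t}(\theta^*))$ with $\pi_k(\theta)=P^k_t+(1-\theta)(1-P^k_t)=1-\theta(1-P^k_t)$; the identity of the partner node $u$ enters only through a $\psi$-free factor, which is why it is absent from $\log L$. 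Hence each coordinate of $\nabla\log L(\psi^*)$ is a sum over $t$ of terms of the form $\partial\log(\text{conditional mass function at }t)$ evaluated at the truth, and these have zero $\mathcal F_{t-1}$-conditional mean by the elementary score identity, so $\nabla\log L(\psi^*)$ is a vector martingale. I would then invoke a multivariate martingale central limit theorem (e.g. Hall and Heyde): the conditional Lindeberg condition holds because the increments are uniformly bounded — the Bernoulli and multinomial scores are bounded by constants depending on $q^*$ and the $p^*_k$, while the $\theta$-score equals $-(X_t-\pi_{h_t})/(\theta^*\pi_{h_t})$, which is bounded by $1/(\theta^*(1-\theta^*))$ because $P^k_t\ge0$ forces $\pi_k(\theta^*)\ge1-\theta^*$ — and the predictable quadratic variation $\tfrac1T\sum_{t=1}^{T}E\big[\nabla\log L_t(\psi^*)\nabla\log L_t(\psi^*)^{\top}\mid\mathcal F_{t-1}\big]$ converges a.s. to $\mathbf{\Sigma}$, which is verified next.

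The off-diagonal blocks of the predictable quadratic variation vanish for every $t$: each cross product pairs one of $X_t-\pi_{h_t}$, a centered $\mathbf{1}_{\{h_t=k\}}$, or $v_t-q^*$ with a factor measurable one level higher in the chain $\mathcal F_{t-1}\subset\sigma(\mathcal F_{t-1},v_t)\subset\sigma(\mathcal F_{t-1},v_t,h_t)$, so the tower property kills it. The $q$-block is $\mathrm{Var}(v_t\mid\mathcal F_{t-1})/(q^*(1-q^*))^2=1/(q^*(1-q^*))$, the $(3,3)$ entry of $\mathbf{\Sigma}$. The $p$-block equals $q^*$ times the classical multinomial information matrix of $(p^*_1,\dots,p^*_{K-1})$, whose $(i,j)$ entry is $\delta_{ij}/p^*_i+1/p^*_K$ with $p^*_K=1-\sum_{k=1}^{K-1}p^*_k$; its diagonal rewrites as $q^*(1-\sum_{k\neq l}^{K-1}p^*_k)/(p^*_l(1-\sum_{k=1}^{K-1}p^*_k))$, matching $\mathbf{\Sigma}_{22}$ exactly. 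For the $\theta$-block, the conditional information at time $t$ is
\[
\sum_{k=1}^{K}\big(q^*p^*_k+(1-q^*)P^k_t\big)\,\frac{(1-P^k_t)^2}{\pi_k(\theta^*)\,\big(1-\pi_k(\theta^*)\big)}=\sum_{k=1}^{K}\big(q^*p^*_k+(1-q^*)P^k_t\big)\,\frac{1-P^k_t}{\theta^*\,\pi_k(\theta^*)};
\]
by Theorem~\ref{th1}, $P^k_t=D^k_{t-1}/(2(t-1)+n_0)\xrightarrow{a.s.}p^*_k$, and since the summand is a bounded continuous function of $(P^1_t,\dots,P^K_t)$ a Cesàro argument gives the a.s. limit $\sum_k p^*_k(1-p^*_k)/\big(\theta^*(1-\theta^*(1-p^*_k))\big)$; the partial-fraction identity $1/(\theta\pi)=1/\theta+(1-p)/\pi$ — valid since $\theta(1-p)+\pi=1$ for $\pi=1-\theta(1-p)$ — turns this into $\sum_k\big[p^*_k(1-p^*_k)/\theta^*+p^*_k(1-p^*_k)^2/(p^*_k+(1-p^*_k)(1-\theta^*))\big]=\Sigma_{11}$.

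Finally, for (ii), the blockwise conditional information identity $E[\partial^2\log f_t(\psi^*)\mid\mathcal F_{t-1}]=-E[(\partial\log f_t(\psi^*))^2\mid\mathcal F_{t-1}]$ combined with the strong law of large numbers for martingales with bounded increments gives $-\tfrac1T\nabla^2\log L(\psi^*)\xrightarrow{a.s.}\mathbf{\Sigma}$; replacing $\psi^*$ by $\tilde\psi$ is justified by equicontinuity in $\psi$ of the second-derivative summands on a neighborhood of $\psi^*$ (again using that $1-P^k_t$ and $\pi_k$ are eventually bounded away from the endpoints) together with $\tilde\psi\xrightarrow{a.s.}\psi^*$. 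I expect the main obstacle to be the $\theta$-component — setting up the martingale framework for $\ell_\theta$ and, above all, establishing that its predictable quadratic variation converges to $\Sigma_{11}$, where Theorem~\ref{th1} and the algebraic simplification above are the essential ingredients — whereas the $q$- and $p$-blocks are essentially textbook once the conditionally independent Bernoulli/multinomial structure has been exposed.
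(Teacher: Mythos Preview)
Your approach is essentially the same as the paper's: both argue via a Taylor expansion of the score around $\psi^*$ combined with a martingale-difference CLT, and both rely on Theorem~\ref{th1} (i.e., $P_t^k\to p^*_k$) to identify the limiting information for the $\theta$-block. The paper handles the three blocks $\theta$, $\mathbf p$, $q$ one at a time, writing out the score and Hessian explicitly for each, invoking the martingale CLT, and reading off $I(\theta^*)$, $I(\mathbf p^*)$, $1/(q^*(1-q^*))$.

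The one genuine difference is that you treat $\nabla\log L(\psi^*)$ as a single vector martingale and verify directly that the off-diagonal blocks of its predictable quadratic variation vanish via the conditional tower $\mathcal F_{t-1}\subset\sigma(\mathcal F_{t-1},v_t)\subset\sigma(\mathcal F_{t-1},v_t,h_t)$, thereby establishing \emph{joint} asymptotic normality with block-diagonal covariance. The paper, by contrast, proves the three marginal CLTs separately and then assembles $\mathbf\Sigma$ without an explicit argument for asymptotic independence across blocks; your route therefore supplies a step the paper leaves implicit. Your explicit check of the Lindeberg condition (bounded increments) and the partial-fraction identity $1/(\theta\pi_k)=1/\theta+(1-p_k)/\pi_k$ are also tidier than the paper's somewhat telegraphic appeal to ``the martingale convergence theorem, martingale difference central limit theorem and Equation~(\ref{I}).''
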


\begin{corollary}\label{co2}
Under the conditions of Theorem \ref{th4}, the CLT for $\mathbf{\hat{\Sigma}^{1/2}}\mathbf{\hat{\psi}}$ is
\begin{equation}
\begin{aligned}
 \sqrt{T}\mathbf{\hat{\Sigma}^{1/2}}(\mathbf{\hat{\psi}}-\psi^*)\xrightarrow{d} N(0,I_{K+1}),
\end{aligned}
\end{equation}

where $I_{K+1}$ is a ${(K+1)}\times {(K+1)}$ identity matrix. $\mathbf{\hat{\Sigma}}$ is the estimator of $\mathbf{{\Sigma}}$ that
\begin{equation}
\begin{aligned}
\mathbf{\hat{\Sigma}}=\begin{bmatrix}
\sum_{k=1}^K \left[\frac{\hat{p}_{k}(1-\hat{p}_{k})}{\hat{\theta}}+\frac{\hat{p}_{k}(1-\hat{p}_{k})^2}{\hat{p}_k+(1-\hat{p}_k)(1-\hat{\theta})}\right]& 0& 0 \\
0 &\mathbf{\hat{\Sigma}_{22}} &0\\
0&0&\frac{1}{\hat{q}(1-\hat{q})}
\end{bmatrix}.
\end{aligned}
\end{equation}

\begin{equation}
\hat{\Sigma}_{22}(ij)=\begin{cases}
\frac{\hat{q}}{1-\sum_{k=1}^{K-1}\hat{p}_k},& i\neq j\\
\frac{\hat{q}(1-\sum_{k\neq l}^{K-1}\hat{p}_k)}{\hat{p}_l(1-\sum_{k=1}^{K-1}\hat{p}_k)},& i=j=l.
\end{cases}
\end{equation}
\end{corollary}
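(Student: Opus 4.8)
\textbf{Proof proposal for Corollary \ref{co2}.}
The plan is to deduce the studentized CLT from Theorem \ref{th5} together with the consistency in Theorem \ref{th4} by a routine Slutsky argument, the only genuine ingredient being that $\mathbf{\Sigma}$ is well behaved near $\psi^*$. First I would note that every entry of $\mathbf{\Sigma}$ displayed in Theorem \ref{th5} is a rational function of $\psi^*=(\theta^*,\{p_k^*\}_{k=1}^K,q^*)$ whose denominators, namely $\theta^*$, $p_l^*$, $p_k^*+(1-p_k^*)(1-\theta^*)$, $1-\sum_{k=1}^{K-1}p_k^*=p_K^*$, and $q^*(1-q^*)$, are all strictly positive under the conditions of Theorem \ref{th4} (where $\theta^*>0$ and $q^*,p_k^*\in(0,1)$) and hence bounded away from $0$ on a neighbourhood of $\psi^*$. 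Therefore $\psi\mapsto\mathbf{\Sigma}(\psi)$ is continuous at $\psi^*$, and since Theorem \ref{th4} gives $\hat{\psi}\stackrel{a.s.}{\longrightarrow}\psi^*$, the continuous mapping theorem yields $\hat{\mathbf{\Sigma}}\stackrel{a.s.}{\longrightarrow}\mathbf{\Sigma}$.

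Next I would check that $\mathbf{\Sigma}$ is symmetric and positive definite, so that $\mathbf{\Sigma}^{1/2}$ and $\mathbf{\Sigma}^{-1/2}$ exist. Since $\mathbf{\Sigma}$ is block diagonal it suffices to treat each block. The scalar $(1,1)$ entry is a sum of nonnegative terms which is strictly positive whenever some $p_k^*\in(0,1)$ (true for $K\ge 2$); the $(K+1,K+1)$ entry $1/(q^*(1-q^*))$ is positive; and, writing $p_K^*=1-\sum_{k=1}^{K-1}p_k^*$ and using $1-\sum_{k\ne l}^{K-1}p_k^*=p_l^*+p_K^*$, one has $\mathbf{\Sigma_{22}}=q^*\big(\mathrm{diag}(1/p_1^*,\dots,1/p_{K-1}^*)+(1/p_K^*)\,\mathbf{1}\mathbf{1}^{\top}\big)$, a positive multiple of a positive diagonal matrix plus a rank-one positive semidefinite matrix, hence positive definite. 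Because the principal square root is continuous on the cone of symmetric positive definite matrices, $\hat{\mathbf{\Sigma}}$ is positive definite for all large $T$ almost surely, $\hat{\mathbf{\Sigma}}^{1/2}\stackrel{a.s.}{\longrightarrow}\mathbf{\Sigma}^{1/2}$, and consequently $\hat{\mathbf{\Sigma}}^{1/2}\mathbf{\Sigma}^{-1/2}\stackrel{a.s.}{\longrightarrow}I_{K+1}$.

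Finally I would write the decomposition
\[
\sqrt{T}\,\hat{\mathbf{\Sigma}}^{1/2}\big(\hat{\psi}-\psi^*\big)=\big(\hat{\mathbf{\Sigma}}^{1/2}\mathbf{\Sigma}^{-1/2}\big)\big(\sqrt{T}\,\mathbf{\Sigma}^{1/2}(\hat{\psi}-\psi^*)\big).
\]
By Theorem \ref{th5} and the linear mapping theorem, $\sqrt{T}\,\mathbf{\Sigma}^{1/2}(\hat{\psi}-\psi^*)\xrightarrow{d}N\big(0,\mathbf{\Sigma}^{1/2}\mathbf{\Sigma}^{-1}\mathbf{\Sigma}^{1/2}\big)=N(0,I_{K+1})$. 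Combining this with $\hat{\mathbf{\Sigma}}^{1/2}\mathbf{\Sigma}^{-1/2}\stackrel{a.s.}{\longrightarrow}I_{K+1}$ through Slutsky's theorem gives $\sqrt{T}\,\hat{\mathbf{\Sigma}}^{1/2}(\hat{\psi}-\psi^*)\xrightarrow{d}N(0,I_{K+1})$, as claimed.

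Beyond invoking Theorems \ref{th4} and \ref{th5}, the argument is essentially bookkeeping; the one place that requires a little care is verifying the positive-definiteness of $\mathbf{\Sigma}$ (and hence the existence and continuity of the matrix square root), together with the implicit requirement that $\psi^*$ lies in the interior of the parameter space so that all denominators in $\hat{\mathbf{\Sigma}}$ are eventually bounded away from $0$ almost surely.
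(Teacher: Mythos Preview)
Your proposal is correct. The paper does not actually supply a separate proof of Corollary~\ref{co2}; it is stated as an immediate consequence of Theorems~\ref{th4} and~\ref{th5} and left to the reader. Your argument is precisely the standard Slutsky\,/\,continuous-mapping justification one would expect: consistency of $\hat\psi$ gives consistency of $\hat{\mathbf\Sigma}$ via continuity of the entries, positive-definiteness of $\mathbf\Sigma$ guarantees existence and continuity of the square root, and then the decomposition $\sqrt{T}\,\hat{\mathbf\Sigma}^{1/2}(\hat\psi-\psi^*)=(\hat{\mathbf\Sigma}^{1/2}\mathbf\Sigma^{-1/2})\cdot\sqrt{T}\,\mathbf\Sigma^{1/2}(\hat\psi-\psi^*)$ finishes via Theorem~\ref{th5} and Slutsky. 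Your verification that $\mathbf\Sigma_{22}=q^*\bigl(\mathrm{diag}(1/p_1^*,\dots,1/p_{K-1}^*)+(1/p_K^*)\mathbf 1\mathbf 1^{\top}\bigr)$ is positive definite is a nice explicit check that the paper omits.
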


Theorems \ref{th4}--\ref{th5} exhibit the excellent asymptotic properties of the MLE. Moreover, Corollary \ref{co2} allows us to do hypothesis testing on unknown parameters.

\subsection{Snapshot}

A snapshot of the evolving network means the present state of the network without historical information. In particular, at time $t$, we can only obtain the graph $G_t$ without information prior to time $t$.

We propose a parameter estimation procedure based on the single snapshot $G_T$. Let
\begin{eqnarray}
\nonumber &&\tilde{p}_k=\frac{V^k_{T}}{V_T},\, k=1,\cdots,K\\
\nonumber &&\tilde{q}=\frac{V_{T}}{E_T}\\
&&\tilde{\theta}=\arg \max_{\theta\in (0,1]} L_T(\theta|G_T).
\end{eqnarray}
$E_T$ and $V_T$ are the numbers of edges and nodes in graph $G_T$. $V^k_{T}$ is the number of nodes from group $k$ in $G_T$. $E_T^{k,1}$ is the number of edges where both nodes are from group $k$, and $E_T^{k,0}$ is the number of edges that connecting a node from group $k$ to other groups.
\begin{eqnarray}\label{smle1}
\nonumber &&L_T(\theta|G_T)\\
\nonumber =&&\sum_{k=1}^K\left[E^{k,1}_{T}\log (\frac{D_{T}^k}{2E_T}(\frac{D_{T}^k}{2E_T}+(1-\theta)(1-\frac{D_{T}^k}{2E_T})))\right]\\
 &&+\sum_{k=1}^K\left[E^{k,0}_{T}\log (\frac{D_{T}^k}{2E_T}\theta(1-\frac{D_{T}^k}{2E_T}))\right].
\end{eqnarray}

\begin{theorem}\label{th6}
The evolving network data $\{G_t\}_{t=0}^T$ is generated by the KPA model under Assumptions \ref{a1}--\ref{a4}. Only the information of $G_T$ is known. Let $\psi^*=(\theta^*,\{p^*_k\}_{k=1}^K,q^*)$ be the real parameters for the KPA model. The parameters' estimator $\tilde{\psi}=(\tilde{\theta},\{\tilde{p}_k\}_{k=1}^K,\tilde{q})$ based on the single snapshot $G_T$ satisfies:
\begin{eqnarray}
\tilde{\psi}\stackrel{a.s.}{\longrightarrow} \psi^*.
\end{eqnarray}
\end{theorem}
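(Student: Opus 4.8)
The plan is to treat $\tilde\psi$ as an M-estimator and establish consistency coordinate by coordinate, reducing everything to strong laws of large numbers plus an argmax-continuity argument for the $\theta$-component. Two of the three pieces are immediate. Every time step adds exactly one edge, so $E_T=E_0+T$ (with $2E_0=n_0$) and $E_T/T\to1$; by Assumption \ref{a4}, $V_T=n_0+\sum_{t=1}^T v_t$ with $\{v_t\}$ i.i.d.\ $B(1,q^*)$, hence $V_T/T\stackrel{a.s.}{\longrightarrow}q^*$ and $\tilde q=V_T/E_T\stackrel{a.s.}{\longrightarrow}q^*$. Likewise, by Assumption \ref{a2} the increments $\mathbf{1}_{\{v_t=1,\,g_w=k\}}$ are i.i.d.\ $B(1,q^*p_k^*)$, so $V_T^k/T\stackrel{a.s.}{\longrightarrow}q^*p_k^*$ and $\tilde p_k=V_T^k/V_T\stackrel{a.s.}{\longrightarrow}p_k^*$ for $k=1,\dots,K$.

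The substance is the $\theta$-component. Write $r_T^k:=D_T^k/(2E_T)$; Theorem \ref{th1} together with $E_T/T\to1$ gives $r_T^k\stackrel{a.s.}{\longrightarrow}p_k^*$. The key step is to show that the edge-type counts stabilise, namely $E_T^{k,1}/T\stackrel{a.s.}{\longrightarrow}a_k$ and $E_T^{k,0}/T\stackrel{a.s.}{\longrightarrow}b_k$, with $a_k=p_k^*\big(p_k^*+(1-\theta^*)(1-p_k^*)\big)$ and $\sum_{k}b_k=\theta^*\big(1-\sum_{k}(p_k^*)^2\big)$ (the limiting fraction of cross-group edges, cf.\ Corollary \ref{co1}). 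To obtain these, condition on $\mathcal F_{t-1}=\sigma(G_0,\dots,G_{t-1})$: by Lemmas \ref{le1}--\ref{le2} and the law of total probability over $v_t$ and $g_w$, the conditional probability that the edge added at time $t$ has a prescribed type is an explicit function of $r_{t-1}^k=D_{t-1}^k/(2(t-1)+n_0)$, which converges a.s.\ to $p_k^*$ by Theorem \ref{th1}. Denoting by $Y_t^k$ the corresponding indicator, $Y_t^k-E[Y_t^k\mid\mathcal F_{t-1}]$ is a bounded martingale difference sequence, so $T^{-1}\sum_{t\le T}\big(Y_t^k-E[Y_t^k\mid\mathcal F_{t-1}]\big)\to0$ a.s.; combined with Ces\`aro convergence of $T^{-1}\sum_{t\le T}E[Y_t^k\mid\mathcal F_{t-1}]$ this yields the stated limits, the $O(1)$ contribution of $G_0$ being negligible.

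Feeding these limits into (\ref{smle1}) shows that $T^{-1}L_T(\theta\mid G_T)$ converges a.s., uniformly on every compact subset of $(0,1]$, to the deterministic function
\[
\ell(\theta)=\sum_{k=1}^K a_k\log\!\big(p_k^*(p_k^*+(1-\theta)(1-p_k^*))\big)+\sum_{k=1}^K b_k\log\!\big(p_k^*\,\theta\,(1-p_k^*)\big).
\]
A direct computation gives $\ell''(\theta)<0$, so $\ell$ is strictly concave, and the constants are arranged so that $\ell'(\theta^*)=0$, since $a_k(1-p_k^*)/\big(p_k^*+(1-\theta^*)(1-p_k^*)\big)=p_k^*(1-p_k^*)$ and $\sum_k b_k/\theta^*=1-\sum_k(p_k^*)^2=\sum_k p_k^*(1-p_k^*)$. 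Hence $\theta^*$ is the unique maximiser of $\ell$ on $(0,1]$. Because $\ell'(\theta)\to+\infty$ as $\theta\downarrow0$, this maximiser is bounded away from $0$, and the same computation shows $T^{-1}L_T(\cdot\mid G_T)$ is eventually a.s.\ increasing on a fixed interval $(0,\varepsilon]$, so $\tilde\theta\ge\varepsilon$ eventually. A standard argmax-continuity argument (uniform convergence on $[\varepsilon,1]$ and a well-separated unique maximiser) then gives $\tilde\theta\stackrel{a.s.}{\longrightarrow}\theta^*$, completing the proof.

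I expect the stationarity identity $\ell'(\theta^*)=0$ to be the delicate point: it is precisely what lets the snapshot pseudo-likelihood $L_T$ --- built from the stationary proportions $r_T^k$ and the aggregate edge-type counts rather than from the genuine time-$t$ conditional probabilities in (\ref{mle1}) --- still point at $\theta^*$, and it is sensitive to the exact bookkeeping for cross-group edges in $E_T^{k,0}$ (each such edge must contribute the correct total rate $\theta^*(1-\sum_k(p_k^*)^2)$). The martingale strong law and the argmax-continuity step are routine and parallel the proof of Theorem \ref{th4}.
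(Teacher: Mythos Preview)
Your proposal is correct and follows the same M-estimation skeleton as the paper: a strong law for $\tilde p_k$ and $\tilde q$, then convergence of $T^{-1}L_T(\theta\mid G_T)$ to a deterministic limit uniquely maximised at $\theta^*$, with the edge-type frequencies $E_T^{k,1}/T$ and $E_T^{k,0}/T$ handled by the same martingale-difference argument that underlies Corollary~\ref{co1}. The one substantive difference is how you identify $\theta^*$ as the unique maximiser of the limit $\ell(\theta)$: you check strict concavity and the first-order condition $\ell'(\theta^*)=0$ directly, whereas the paper recognises $\ell(\theta)=E_{\theta^*}[\log P(Y;\theta)]$ for the auxiliary random variable $Y$ introduced in (\ref{pdf1}) and then invokes the information inequality (\ref{mle9})--(\ref{mle10}) already established in the proof of Theorem~\ref{th4}. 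The paper's route is a little slicker---it explains conceptually why the snapshot pseudo-likelihood is consistent (the limiting edge-type frequencies are exactly the probabilities $P(y;\theta^*)$, so $\ell$ is a genuine expected log-likelihood) without any algebra---while your direct computation is equally valid, and your explicit treatment of uniform convergence on $[\varepsilon,1]$ and of the boundary $\theta\downarrow 0$ fills in details the paper leaves implicit.
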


\section{The change point of $\theta$}
We assume that there is a change point $\tau^*$ when the parameter $\theta^*$ changes, which means that the influence of homophily on the network structure changes. If the value of $\theta^*$ becomes larger, nodes are more receptive to different groups, and vice versa.

\begin{assumption}\label{a5}
$\tau^*\in[t_0,T-t_0]$, where  $\frac{t_0}{T}\equiv c_0$, $c_0$ is a constant and $c_0\in (0,1)$.
\end{assumption}

\begin{assumption}\label{a6}
There is no difference for parameters $(\{p^*_k\}_{k=1}^K,q^*)$  before and after the change point $\tau^*$.
\end{assumption}

Assumption \ref{a5} guarantees we have enough information before and
after the change to detect the point $\tau^*$. We assume that within the period $[0,\tau^*]$, the network follows a KPA model with parameter $\theta^*_1$. Within the period $[\tau^*+1,T]$, the network follows a KPA model with parameter $\theta^*_2$ such that $\theta^*_2\neq \theta^*_1$. Assumption \ref{a6} excludes the influence of other factors. Nevertheless, our results can be easily
extended if other parameters also change.

We estimate the change point $\tau^*$ by the maximum likelihood method:
\begin{eqnarray}\label{changepoint}
\nonumber &&\hat{\tau}=\arg\max_{t_0\leq \tau\leq T-t_0} [\max_{\theta_1\in(0,1]}{\log{L_2}({\theta}_{1}|\{G(t)\}_{t=0}^{\tau})}+\max_{\theta_2\in(0,1]} \log{L_2}({\theta}_2|\{G(t)\}_{t=\tau}^T]\\
\nonumber&&\hat{\theta}_1=\arg\max_{{\theta}_1\in(0,1] }{\log{L_2}({\theta}_{1}|\{G(t)\}_{t=0}^{\hat{\tau}})}\\
&&\hat{\theta}_2={\arg\max_{{\theta}_2\in(0,1] }\log{L_2}({\theta}_2|\{G(t)\}_{t=\hat{\tau}}^T)}.
\end{eqnarray}

$\log{L_2}({\theta}|\{G(t)\})$ is defined by Equation (\ref{mle1}).

\begin{theorem}\label{changep}
Under Assumptions \ref{a1}-- \ref{a6}, we have
\begin{equation}
\begin{aligned}
\frac{|\hat{\tau}-\tau^*|}{T}\xrightarrow{i.p.} 0.
\end{aligned}
\end{equation}
\end{theorem}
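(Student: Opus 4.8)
\emph{Proposed proof.} The plan is to recast \eqref{changepoint} as an argmax–consistency statement after dividing the objective by $T$. Write $s=\tau/T$, $s^*=\tau^*/T\in[c_0,1-c_0]$, and for a segment $[a,b]\subseteq[0,T]$ set $\ell_{[a,b]}(\theta):=\log L_2(\theta\mid\{G_t\}_{t=a}^{b})$ via \eqref{mle1}, so that $\hat\tau$ maximizes $\mathcal L(\tau):=\max_{\theta_1\in(0,1]}\ell_{[0,\tau]}(\theta_1)+\max_{\theta_2\in(0,1]}\ell_{[\tau,T]}(\theta_2)$. Each summand of $\ell_{[a,b]}(\theta)$ is, conditionally on $G_{t-1}$, a bounded function of the step indicators and of $P^k_t=D^k_{t-1}/(2(t-1)+n_0)$. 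By Theorem~\ref{th1}, $P^k_t\to p^*_k$ almost surely (and under Assumption~\ref{a6} this holds uniformly across the two regimes), and Theorem~\ref{th2} supplies a rate, so replacing $P^k_t$ by $p^*_k$ costs $o(1)$ uniformly in $\theta$ on any $[\delta,1]$. Decomposing the remaining sum into its predictable compensator plus a martingale with bounded increments and applying the martingale strong law together with a uniform law of large numbers over $\theta\in[\delta,1]$ (available because the summands are Lipschitz in $\theta$ with a uniform constant), I obtain: if $[a,b]$ is governed by $\theta^*_1$ on a fraction $u$ of its length and by $\theta^*_2$ on the rest, then $\tfrac1{b-a}\ell_{[a,b]}(\theta)\to u\,\lambda(\theta;\theta^*_1)+(1-u)\,\lambda(\theta;\theta^*_2)$ a.s., uniformly in $\theta$, where $\lambda(\theta;\vartheta)$ is the limiting expected per-step log-likelihood under parameter $\theta$ when the truth is $\vartheta$ (the same object that drives Theorem~\ref{th4}). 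Since $\log\theta\to-\infty$ as $\theta\downarrow0$ while the terms $\log(P+(1-\theta)(1-P))$ stay bounded, the maximizers over $\theta$ remain in a fixed compact subset of $(0,1]$, so the profiled quantities $\tfrac1{b-a}\max_\theta\ell_{[a,b]}(\theta)$ converge to the maximum of the right-hand side.

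Applying this to $[0,\tau]$ (governed entirely by $\theta^*_1$) and to $[\tau,T]$ (governed by $\theta^*_1$ on length $(\tau^*-\tau)_+$ and by $\theta^*_2$ on the rest), I get, for every $s\in[c_0,1-c_0]$, $\tfrac1T\mathcal L(\lfloor sT\rfloor)\to g(s)$ a.s., where for $s\le s^*$
\begin{equation}
g(s)=s\,\lambda(\theta^*_1;\theta^*_1)+\max_{\theta\in(0,1]}\Big[(s^*-s)\,\lambda(\theta;\theta^*_1)+(1-s^*)\,\lambda(\theta;\theta^*_2)\Big],
\end{equation}
and symmetrically for $s>s^*$. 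The identifiability behind Theorem~\ref{th4} (a Kullback--Leibler inequality for the conditional laws in Lemmas~\ref{le1}--\ref{le2}) gives $\lambda(\theta;\vartheta)\le\lambda(\vartheta;\vartheta)$ with equality only at $\theta=\vartheta$. Hence $g(s^*)=s^*\lambda(\theta^*_1;\theta^*_1)+(1-s^*)\lambda(\theta^*_2;\theta^*_2)$ is the largest value $g$ can attain, and for $s\ne s^*$ the bracket cannot reach the sum of the two individual maxima, because that would require one $\theta$ to equal both $\theta^*_1$ and $\theta^*_2\ne\theta^*_1$; uniqueness of each maximizer therefore forces $g(s)<g(s^*)$ for all $s\ne s^*$ in $[c_0,1-c_0]$. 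Thus $g$ is continuous with a unique, well-separated maximizer at $s^*$: $\sup_{|s-s^*|\ge\eta}g(s)<g(s^*)$ for every $\eta>0$.

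Finally I would upgrade the convergence $\tfrac1T\mathcal L(\lfloor sT\rfloor)\to g(s)$ to be uniform in $s\in[c_0,1-c_0]$: the partial-sum martingales appearing in the decomposition are indexed by $\tau$, so Doob's maximal inequality bounds $\max_{\tau\le T}$ of their normalized fluctuation by $O_p(T^{-1/2})$, the uniformity in $\theta$ again following from the Lipschitz dependence, while the compensator is Lipschitz in $s$ after normalization; together these give $\sup_{s\in[c_0,1-c_0]}\big|\tfrac1T\mathcal L(\lfloor sT\rfloor)-g(s)\big|\xrightarrow{i.p.}0$. A standard argmax argument then forces $\hat\tau/T\xrightarrow{i.p.}s^*$, i.e.\ $|\hat\tau-\tau^*|/T\xrightarrow{i.p.}0$. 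The main obstacle is exactly this last step: obtaining the uniformity in $s$ together with a quantitative separation of the form $g(s^*)-g(s)\ge c\,|s-s^*|$ near $s^*$, so that an $o_p(1)$ error in the objective translates into an $o_p(1)$ error in its argmax. Establishing the separation requires tracking how the fluctuations of $P^k_t$ and of the step indicators accumulate over the short window of length $|\tau-\tau^*|$, which is precisely where the concentration estimate of Theorem~\ref{th2} enters.
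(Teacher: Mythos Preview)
Your approach is correct and is essentially a cleaner reformulation of the paper's own argument. Both proofs rest on the same two pillars: (i) the normalized profiled log-likelihood over any segment converges to a weighted combination $u\,\lambda(\theta;\theta_1^*)+(1-u)\,\lambda(\theta;\theta_2^*)$ (via Theorem~\ref{th1} and a martingale LLN), and (ii) the Kullback--Leibler inequality $\lambda(\theta;\vartheta)\le\lambda(\vartheta;\vartheta)$ with equality only at $\theta=\vartheta$, so that the limiting profile $g(s)$ is uniquely maximized at $s^*$. The paper carries this out by direct contradiction: on the event $\hat\tau-\tau^*>\epsilon T$ it localizes $\hat\theta_{1,\hat\tau}$ in a set $\Theta_\varepsilon$ bounded away from both $\theta_1^*$ and $\theta_2^*$, then compares the three segments $[0,\tau^*]$, $[\tau^*,\hat\tau]$, $[\hat\tau,T]$ one by one to show the profiled likelihood at $\tau^*$ strictly dominates that at $\hat\tau$. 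Your version instead packages everything into the standard argmax--continuous-mapping framework, which is more systematic and arguably easier to check.

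One correction: the ``main obstacle'' you flag at the end---a linear separation $g(s^*)-g(s)\ge c\,|s-s^*|$---is not needed. The theorem asserts only consistency $\hat\tau/T\to s^*$, not a rate, and for that the argmax theorem requires only (a) uniform convergence of $T^{-1}\mathcal L(\lfloor sT\rfloor)$ to $g$ over $s\in[c_0,1-c_0]$, and (b) that $g$ is continuous with a unique maximum on this compact set. You have already argued both: (b) follows from continuity of $g$ together with the strict inequality $g(s)<g(s^*)$ for $s\ne s^*$, and (a) from Doob's maximal inequality on the martingale part plus Lipschitz dependence of the compensator on $s$. Consequently Theorem~\ref{th2} is not required here; Theorem~\ref{th1} suffices, and indeed the paper's proof does not invoke Theorem~\ref{th2} either.
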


\section{Proofs}

\begin{proof}[Proof of Theorem \ref{th1}]
Let $Z^k_t=D^k_t-D^k_{t-1},t>0$ be the increased degree of group $k$ at time $t$. We define the $\sigma-$algebra $\mathcal{F}_t=\sigma(G_0,\cdots,G_t)$. Let $P_t^k=\frac{D^k_{t-1}}{2(t-1)+n_0}$,
by Lemmas \ref{le1}--\ref{le2}, we have
\begin{eqnarray}\label{eq1}
\nonumber E[Z^k_{t}|\mathcal{F}_{t-1}]=&&2qp_k\left[P_t^k+(1-\theta)\left(1-P_t^k\right)\right]+\theta q(1-p_k) P_t^k+\theta qp_k (1- P_t^k)\\
\nonumber&&+2(1-q)P_t^k\left[P_t^k+(1-\theta)\left(1-P_t^k\right)\right]+2\theta(1-q)(1-P_t^k) P_t^k\\
\nonumber  =&&2qp_k-qp_k\theta\left(1-P_t^k\right)+[\theta q(1-p_k)+2(1-q)]P_t^k\\
=&&(2-\theta)qp_k+[2-q(2-\theta)]P_t^k.
\end{eqnarray}

Let $Z^0_k(i):=Z_{i}^k-E(Z_{i}^k|\mathcal{F}_{i-1})$, $S_{k}(t)=\sum_{i=1}^{t} Z^0_k{(i)}$.
By Chebyshev's inequality, for any $\epsilon>0$,
\begin{equation}\label{eq2}
 P\left(\left|\frac{ S_k({t^2})}{t^2}\right|>\epsilon\right)\leq \frac{E(S_k({t^2}))^2}{t^4\epsilon^2}=\frac{\sum_{i=1}^{t^2} E (Z^0_k{(i)})^2 + \sum_{0<i\neq j\leq t } E(Z^0_k{(i)} Z^0_k{(j)})}{t^4\epsilon^2}.
\end{equation}

Further, $E [Z^0_k{(t)}]^2\leq4$ and for any $i>j$,
\begin{eqnarray}
 E[Z^0_k{(i)}Z^0_k{(j)}]=E\left[E[Z^0_k{(i)}Z^0_k{(j)}|\mathcal{F}_{i-1}]\right]=E\left[Z^0_k{(j)}E[Z^0_k{(i)}|\mathcal{F}_{i-1}]\right]=0.
\end{eqnarray}

By Equations (\ref{eq2})--(\ref{eq3}), we can conclude that
\begin{equation}\label{eq3}
P\left(\left|\frac{S_k{(t^2)}}{t^2}\right|>\epsilon\right)\leq \frac{4 t^2}{t^4\epsilon}\leq \frac{c}{t^2}.
\end{equation}

Equation (\ref{eq3}) implies
$\quad \sum_{t=1}^{\infty}P\left(\left|\frac{ S_k({t^2})}{t^2}\right|>\epsilon\right) <\infty$. Further, we can get
\begin{equation}\label{eq4}
\frac{S_k({t^2})}{t^2}\stackrel{a.s.}{\longrightarrow}0.
\end{equation}

From Equation (\ref{eq4}),
$\forall t\geq 1$, there exists $m$ such that
\begin{eqnarray}\label{eq5}
\nonumber &&\max_{m^2\leq t<(m+1)^2}\left|\frac{S_k(t)}{t}\right|\leq \frac{1}{m^2}\max_{m^2\leq t<(m+1)^2}|S_k(t)|\\
\nonumber \leq&& \frac{1}{m^2}\left(|S_k({m^2})|+\max_{m^2\leq t<(m+1)^2}|S_k(t)-S_k({m^2})|\right)\\
\nonumber \leq&& \frac{1}{m^2} \left(|S_k({m^2})|+2((m+1)^2-m^2)\right)\\
\leq &&\frac{|S_k({m^2})|}{m^2}+\frac{4m+2}{m^2}.
\end{eqnarray}

Equation (\ref{eq5}) implies $\frac{S_k({t})}{t}\stackrel{a.s.}{\longrightarrow} 0$.
Next, by the convergence theorem,
\begin{eqnarray}\label{eq6}
\nonumber\lim_{t\rightarrow \infty}\frac{D^k_{t}}{2t}=&&\lim_{t\rightarrow \infty} \left(\frac{n_0p_k+\sum_{i=1}^t Z_i^k}{2t}\right)\\
\nonumber=&& \lim_{t\rightarrow \infty}\left[\frac{n_0p_k+S_k(t)+\sum_{i=1}^t E(Z_{i}^k|\mathcal{F}_{i-1})}{2t}\right]\\
=&&\lim_{t\rightarrow \infty}\frac{\sum_{i=1}^t E(Z_{i}^k|\mathcal{F}_{i-1})}{2t}.
\end{eqnarray}

When $t$ is large enough, we can conclude from Equations (\ref{eq1}) and (\ref{eq6}) that
\begin{eqnarray}\label{f1}
\nonumber \frac{D_t^k}{2t+n_0}=&&\frac{\sum_{i=1}^t E(Z_{i}^k|{G}_{i-1})}{2t}+o_p(1)\\
=&& \frac{(2-\theta)qp_k}{2}+\frac{[2-q(2-\theta)]}{2t}\sum_{i=1}^t\frac{D_{i-1}^k}{2(i-1)+n_0}+o_p(1).
\end{eqnarray}

According to the Equation (\ref{f1}), let:
\begin{equation}\label{f2}
f_k(x)=\frac{(2-\theta)qp_k}{2}+\frac{[2-q(2-\theta)]}{2}x.
\end{equation}

By the Banach fixed point theorem,

$f_k(x):\,(R,\,| \cdot |)\to (R,\,| \cdot |)$ is a contraction mapping with only one fixed point $x=p_k$.

By Equation (\ref{f1}), {when $t$ is large enough}, we have:
\begin{equation}\label{f3}
\left|\frac{D_t^k}{2t+n_0}-p_k\right|=\frac{[2-q(2-\theta)]}{2}\left|\frac{\sum_{i=1}^t[\frac{D_{i-1}^k}{2(i-1)+n_0}-p_k]}{t}\right|,
\end{equation}
where $0<\frac{|2-q(2-\theta)|}{2}<1$.

Equation (\ref{f3}) implies that $\frac{D_t^k}{2t+n_0}$ approaches $p_k$ as $t\to \infty$. By Equations (\ref{f1})--(\ref{f3}), we can deduce $\frac{D_t^k}{2t}\stackrel{a.s.}{\longrightarrow}p_k$.
\end{proof}

\begin{proof}[Proof of Corollary \ref{co1}]
By the same operations as in Equations (\ref{eq1})--(\ref{eq6}), we have:
\begin{equation}\label{eq7}
\nonumber\frac{\sum_{i=1}^t [X_i-E(X_i|\mathcal{F}_{i-1})] }{t}\stackrel{a.s.}{\longrightarrow} 0.
\end{equation}

Then:
\begin{equation}\label{eq8}
\nonumber\frac{S_t}{t}=\frac{\sum_{i=1}^t [X_i-E(X_i|\mathcal{F}_{i-1})]+\sum_{i=1}^tE(X_i|\mathcal{F}_{i-1}) }{t}\stackrel{a.s.}{\longrightarrow}\lim_{t\to\infty} \frac{\sum_{i=1}^tE(X_i|\mathcal{F}_{i-1})}{t}.
\end{equation}

Let $P_t^k=\frac{D^k_{t-1}}{2(t-1)+n_0}$, we can get:
\begin{eqnarray}\label{eq9}
\nonumber &&E(X_{t}|\mathcal{F}_{t-1})\\
\nonumber=&&q\sum_{k=1}^K p_k \left[P_t^k+(1-\theta)\left(1-P_t^k\right)\right]+(1-q)\sum_{k=1}^K P_t^k \left[P_t^k+(1-\theta)\left(1-P_t^k\right)\right].
\end{eqnarray}

By Theorem \ref{th1}, we have $\frac{D_t^k}{2t}\stackrel{a.s.}{\longrightarrow} p_k$, so
\begin{eqnarray}
\nonumber \frac{S_t}{t}\stackrel{a.s.}{\longrightarrow}\lim_{t\to\infty} \nonumber\frac{\sum_{i=1}^tE(X_i|\mathcal{F}_{i-1})}{t}=&&\lim_{t\to\infty} E(X_t|\mathcal{F}_{t-1})\\
\nonumber=&&\sum_{k=1}^K p_k\left [1-\theta(1-p_k)\right]\\
\nonumber= &&1+\theta\left(\sum_{k=1}^Kp_k^2-1\right).
\end{eqnarray}
\end{proof}

\begin{proof}[Proof of the Theorem \ref{th2}]
\begin{eqnarray}\label{eq10}
\nonumber E\left[D_t^k-p_k(2t+n_0)\right|\mathcal{F}_{t-1}]=&& \left[1+\frac{2-q(2-\theta)}{2(t-1)+n_0}\right]D_{t-1}^k+(2-\theta)qp_k-p_k(2t+n_0)\\
\nonumber =&&\left[1+\frac{2-q(2-\theta)}{2(t-1)+n_0}\right]\left[D_{t-1}^k-p_k(2(t-1)+n_0)\right].
\end{eqnarray}

Let $X^0_k({t})=\frac{D_{t}^k-p_k(2t+n_0)}{\prod_{i=1}^t [1+\frac{2-q(2-\theta)}{2(i-1)+n_0}]},\forall t\geq 1$. $X^0_k(0)={D_{0}^k-p_kn_0}=0$. Obviously, $\{X^0_k({t})\}_t$ is a martingale sequence.
\begin{eqnarray}\label{eq11}
\nonumber X^0_k({t})-X^0_k({t-1})= &&\frac{D_{t}^k-p_k(2t+n_0)}{\prod_{i=1}^t [1+\frac{2-q(2-\theta)}{2(i-1)+n_0}] }-\frac{D_{t-1}^k-p_k(2t-2+n_0)}{\prod_{i=1}^{t-1} [1+\frac{2-q(2-\theta)}{2(i-1)+n_0}] }\\
\nonumber =&&\frac{D_{t}^k-p_k(2t+n_0)-[1+\frac{2-q(2-\theta)}{2(t-1)+n_0}][D_{t-1}^k-p_k(2t-2+n_0)]}{\prod_{i=1}^t [1+\frac{2-q(2-\theta)}{2(i-1)+n_0}]}\\
\nonumber=&&\frac{(D_{t}^k-D_{t-1}^k)-2p_k+\left[\frac{2-q(2-\theta)}{2(t-1)+n_0}\right][D_{t-1}^k-p_k(2t-2+n_0)]}{\prod_{i=1}^t \left[1+\frac{2-q(2-\theta)}{2(i-1)+n_0}\right] }.
\end{eqnarray}

$D^k_t$ is the total of the degrees from group $k$ in the graph $G_t$, so $0\leq D_{t}^k-D_{t-1}^k\leq 2,\, 1\leq D_{t-1}^k\leq 2t-2+n_0$.
Then we have
\begin{equation}\label{eq12}
\begin{aligned}
\nonumber|X^0_k({t})-X^0_k({t-1})|\leq\frac{4}{\prod_{i=1}^t \left[1+\frac{2-q(2-\theta)}{2(i-1)+n_0}\right]}.
\end{aligned}
\end{equation}
What is more, we get the following upper bound for $Var(X_k^0(t)|\mathcal{F}_{t-1})$:

\begin{eqnarray}\label{eq13}
\nonumber Var(X^0_k(t)|\mathcal{F}_{t-1})=&&\frac{Var(D_{t}^k-p_k(2t+n_0)|\mathcal{F}_{t-1})}{\prod_{i=1}^t \left[1+\frac{2-q(2-\theta)}{2(i-1)+n_0}\right]^2}=\frac{Var(D_{t}^k|\mathcal{F}_{t-1})}{\prod_{i=1}^t \left[1+\frac{2-q(2-\theta)}{2(i-1)+n_0}\right]^2}\\
\nonumber =&&\frac{E[(D_{t}^k-E(D_t^k|\mathcal{F}_{t-1}))^2|\mathcal{F}_{t-1}]}{\prod_{i=1}^t \left[1+\frac{2-q(2-\theta)}{2(i-1)+n_0}\right]^2}\\
\nonumber \leq&& \frac{E[(D_{t}^k-D_{t-1}^k)^2|\mathcal{F}_{t-1}]}{\prod_{i=1}^t \left[1+\frac{2-q(2-\theta)}{2(i-1)+n_0}\right]^2}\\
\nonumber\leq&& \frac{4}{\prod_{i=1}^t \left[1+\frac{2-q(2-\theta)}{2(i-1)+n_0}\right]^2}.
\end{eqnarray}

By Theorems 2.22 and 2.26 in \cite{chung2006complex}, let $M_j=\frac{4}{\prod_{i=1}^j \left[1+\frac{2-q(2-\theta)}{2(i-1)+n_0}\right] }$ and $\sigma_j^2=\frac{4}{\prod_{i=1}^j \left[1+\frac{2-q(2-\theta)}{2(i-1)+n_0}\right]^2}$.
Then we get
\begin{equation}\label{eq14}
\begin{aligned}
\nonumber Pr(|X_k^0(t)-EX_k^0(t)|\geq\lambda)\leq e^{-\frac{\lambda^2}{2\sum_{j=1}^t(\sigma_j^2 +M_j^2)}}.
\end{aligned}
\end{equation}

Note that $EX_k^0(t)=EX_k^0(1)=X_k^0(0)=0$, so we have:
\begin{equation}\label{eq15}
\begin{aligned}
Pr(|X_k^0(t)|\geq\lambda)\leq e^{-\frac{\lambda^2}{2\sum_{j=1}^t(\sigma_j^2 +M_j^2)}}.
\end{aligned}
\end{equation}

By Stirling's approximation,
\begin{eqnarray}\label{eq16}
\nonumber \prod_{i=1}^t \left[1+\frac{2-q(2-\theta)}{2(i-1)+n_0}\right]
=&& \prod_{i=1}^t \left[\frac{2(i-1)+n_0+2-q(2-\theta)}{2(i-1)+n_0}\right]\\
\nonumber =&& \frac{\prod_{i=1}^t[i+(n_0-q(2-\theta))/2]}{\prod_{i=1}^t[i+(n_0-2)/2]}\\
=&& O(t^{1-q(2-\theta)/2})
\end{eqnarray}
\begin{eqnarray}\label{eq16.5}
 \prod_{i=1}^t \left[1+\frac{2-q(2-\theta)}{2(i-1)+n_0}\right]^2
= O(t^{2-q(2-\theta)})
\end{eqnarray}

For $t\geq 1$, we have:
\begin{equation}\label{eq17}
\begin{aligned}
\sum_{j=1}^t(\sigma_j^2 +M_j^2)=O\left(\sum_{j=1}^t j^{q(2-\theta)-2}\right).
\end{aligned}
\end{equation}

For $q(2-\theta)-2<0$, the function $f(x)=x^{q(2-\theta)-2}$ is strictly monotonically decreasing with respect to $x$ on $(0,+\infty)$.

Then we have the inequality:
\begin{equation}\label{eq18}
\begin{aligned}
\int_1^t j^{q(2-\theta)-2}{d}j< \sum_{j=1}^t j^{q(2-\theta)-2}<1+\int_1^{t-1} j^{q(2-\theta)-2}{d}j.
\end{aligned}
\end{equation}

Further,
\begin{equation}\label{eq19}
\begin{aligned}
\int_1^t j^{q(2-\theta)-2}{d}j
=\begin{cases}
t^{q(2-\theta)-1}-1,  & \text{if }\, \frac{1}{2-\theta}<q\leq 1 \\
\log t, & \text{if }\,q =\frac{1}{2-\theta}\\
1-t^{q(2-\theta)-1},  & \text{if }\,0<q < \frac{1}{2-\theta}.
\end{cases}
\end{aligned}
\end{equation}

Equations (\ref{eq18})--(\ref{eq19}) imply that when $t$ is large enough, we have
\begin{equation}\label{eq20}
\begin{aligned}
\sum_{j=1}^t j^{q(2-\theta)-2}
=\begin{cases}
O(t^{q(2-\theta)-1}),  & \text{if }\, \frac{1}{2-\theta}<q\leq 1 \\
O(\log t), & \text{if }\,q =\frac{1}{2-\theta}\\
O(1),  & \text{if }\,0<q < \frac{1}{2-\theta} .
\end{cases}
\end{aligned}
\end{equation}

Combine Equations (\ref{eq15})--(\ref{eq20}), we can get:

(1) If $\frac{1}{2-\theta}<q\leq 1 $, let $\lambda=\frac{2c(t)\sqrt{t}}{\prod_{i=1}^t \left[1+\frac{2-q(2-\theta)}{2(i-1)+n_0}\right]}$ and $c(t)$ is a strictly monotonically increasing function of $t$, then
\begin{equation}\label{eq21}
\begin{aligned}
\nonumber P\left(|X^0_k(t)|\geq\frac{2c(t)\sqrt{t}}{\prod_{i=1}^t \left[1+\frac{2-q(2-\theta)}{2(i-1)+n_0}\right]}\right)\leq e^{-\frac{\lambda^2}{2\sum_{i=1}^n(\sigma_i^2 +M_i^2)}}\propto e^{-c(t)^2}.
\end{aligned}
\end{equation}
\begin{equation}\label{eq22}
\begin{aligned}
\nonumber P\left(|D_t^k-p_k(2t+n_0)|\geq{2c(t)\sqrt{t}}\right)\leq C e^{-c(t)^2}.
\end{aligned}
\end{equation}

(2) Otherwise, if $q=\frac{1}{2-\theta}$, let $\lambda=\frac{2c(t)\sqrt{\log(t)}}{\prod_{i=1}^t \left[1+\frac{2-q(2-\theta)}{2(i-1)+n_0}\right]}$, then
\begin{equation}\label{eq24}
\begin{aligned}
\nonumber P\left(|D_t^k-p_k(2t+n_0)|\geq{2c(t)\sqrt{\log(t)}}\right)\leq C e^{-\frac{c(t)^2}{t}}.
\end{aligned}
\end{equation}

(3) Otherwise, if $0<q < \frac{1}{2-\theta}$, let $\lambda=\frac{2c(t)}{\prod_{i=1}^t \left[1+\frac{2-q(2-\theta)}{2(i-1)+n_0}\right]}$,
then
\begin{equation}\label{eq26}
\begin{aligned}
\nonumber P\left(|D_t^k-p_k(2t+n_0)|\geq 2c(t)\right)\leq C e^{-\frac{c(t)^2}{t^{2-q(2-\theta)}}}.
\end{aligned}
\end{equation}
\end{proof}

\begin{lemma}\label{le3}
Suppose that there are stochastic processes $\{a_t\},\,\{b_t\}$, and constant sequence $\{c_t\}$ satisfying the recurrence relation
\begin{equation}\label{eq27}
E(a_{t+1})=E\left[\left(1-\frac{b_t}{t+t_1}\right)a_t\right]+c_t ,\,\forall t\geq t_0.
\end{equation}
where $\lim_{t\to\infty} c_t=c,\, \lim_{t\to\infty} b_t\stackrel{a.s.}{\longrightarrow} b>0 \text{ and } 0<\frac{a_t}{t}\leq C$, $C$ is a positive constant. Then $\lim_{t\to\infty}\frac{E(a_t)}{t}$ exists and
\begin{equation}\label{eq28}
\lim_{t\to\infty}\frac{E(a_t)}{t}=\frac{c}{1+b}.
\end{equation}
\end{lemma}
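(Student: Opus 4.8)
The plan is to eliminate the stochasticity first and then treat a purely deterministic affine recurrence. Write $x_t:=E(a_t)$ and $m_t:=t+t_1$. Decomposing the expectation on the right of \eqref{eq27} as
\[
E\!\left[\left(1-\tfrac{b_t}{m_t}\right)a_t\right]=\left(1-\tfrac{E(b_t)}{m_t}\right)x_t-\tfrac{\mathrm{Cov}(b_t,a_t)}{m_t},
\]
I would control the covariance by Cauchy--Schwarz: $|\mathrm{Cov}(b_t,a_t)|\le\sqrt{\mathrm{Var}(b_t)}\,\sqrt{\mathrm{Var}(a_t)}$. The a priori bound $0<a_t/t\le C$ gives $\mathrm{Var}(a_t)\le C^2t^2$, while $b_t\to b$ a.s. together with the (harmless, and satisfied in every application of the lemma in this paper) uniform boundedness of $\{b_t\}$ yields $E(b_t)\to b$ and $\mathrm{Var}(b_t)\to0$ by dominated convergence. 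Hence $\mathrm{Cov}(b_t,a_t)/m_t=o(1)$ and $\bigl(b-E(b_t)\bigr)x_t/m_t=o(1)$, so that $x_{t+1}=\bigl(1-\tfrac{b}{m_t}\bigr)x_t+\tilde c_t$ with $\tilde c_t:=c_t+\bigl(b-E(b_t)\bigr)x_t/m_t-\mathrm{Cov}(b_t,a_t)/m_t\to c$.

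Next I identify the growth rate via the fixed-point heuristic: substituting $x_t\approx\lambda m_t$ forces $\lambda(1+b)=c$, so set $\lambda:=\tfrac{c}{1+b}$ and $u_t:=x_t-\lambda m_t$. A direct substitution (using $m_{t+1}=m_t+1$) shows
\[
u_{t+1}=\left(1-\tfrac{b}{m_t}\right)u_t+\varepsilon_t,\qquad \varepsilon_t:=\tilde c_t-\lambda(1+b)=\tilde c_t-c\to0 .
\]
Choosing $t_0$ large enough that $0<1-b/m_s<1$ for $s\ge t_0$ and unrolling,
\[
u_t=\Bigl(\prod_{s=t_0}^{t-1}\bigl(1-\tfrac{b}{m_s}\bigr)\Bigr)u_{t_0}+\sum_{s=t_0}^{t-1}\Bigl(\prod_{r=s+1}^{t-1}\bigl(1-\tfrac{b}{m_r}\bigr)\Bigr)\varepsilon_s .
\]
By $\sum_r\log(1-b/m_r)=-b\log(m_t/m_s)+O(1)$ (Stirling/telescoping, exactly as in the estimate of $\prod_{i=1}^t[1+\tfrac{2-q(2-\theta)}{2(i-1)+n_0}]$ in the proof of Theorem \ref{th2}), the partial products are comparable to $(m_{t_0}/m_t)^b$ and $(m_s/m_t)^b$ respectively. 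Thus the first term is $O(m_t^{-b})=o(m_t)$, and the second is bounded by a constant times $m_t^{-b}\sum_{s=t_0}^{t-1}m_s^b|\varepsilon_s|$; since $m_s^b\big/\sum_r m_r^b$ is a Toeplitz averaging scheme and $\varepsilon_s\to0$, a Ces\`aro argument gives $\sum_s m_s^b\varepsilon_s=o\bigl(\sum_s m_s^b\bigr)=o(m_t^{b+1})$, making the second term $o(m_t)$ as well.

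Combining, $u_t=o(m_t)$, hence $x_t/m_t\to\lambda$, and since $m_t=t+t_1\sim t$ this gives $E(a_t)/t\to\tfrac{c}{1+b}$, which in particular shows the limit exists. The step I expect to be the main obstacle is the first reduction: rigorously showing that the joint randomness of $b_t$ and $a_t$ washes out inside the expectation, which is where one needs both the uniform control on $\{b_t\}$ (to upgrade a.s. convergence to $L^2$) and the linear growth bound $a_t=O(t)$ (to keep the covariance term $o(t)$). Once the recurrence is deterministic, the remaining analysis is the standard treatment of an affine recurrence with a contraction defect of order $1/t$.
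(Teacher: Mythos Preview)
Your argument is correct and arrives at the same conclusion, but the deterministic part is organized differently from the paper's proof. After the same first reduction (absorbing the randomness of $b_t$ into an $o(1)$ perturbation of the forcing term, which the paper writes as the single error $E[(b-b_t)a_t]/\bigl(t(t+1)\bigr)$ rather than splitting into a mean shift plus a covariance), the paper does \emph{not} subtract the linear trend and unroll. Instead it works directly with the ratio: setting $s_t:=\bigl|E(a_t)/t-c/(1+b)\bigr|$, an algebraic rearrangement gives
\[
s_{t+1}\le\Bigl(1-\tfrac{1+b}{t+1}\Bigr)s_t+\text{(error)}_t,
\]
so the effective contraction factor is $1-\tfrac{1+b}{t+1}$ rather than your $1-\tfrac{b}{m_t}$. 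Since $\prod_t\bigl(1-\tfrac{1+b}{t+1}\bigr)\to0$ whenever $b>-1$, the paper can conclude $s_t\to0$ by an $\epsilon$-absorption argument without any Stirling-type estimate on the partial products or a Ces\`aro/Toeplitz step. Your route---subtracting $\lambda m_t$, unrolling, and estimating $\prod(1-b/m_r)\asymp (m_s/m_t)^b$---is the textbook treatment of affine recurrences with $1/t$ defect and is perfectly valid; it is a bit more machinery but makes the rates explicit, which the paper's argument does not. Your remark that the a.s.\ convergence of $b_t$ must be upgraded (via uniform boundedness, as holds in every application here) before it can be used inside the expectation is well taken; the paper's proof uses this implicitly when it asserts $E|(b-b_t)a_t|=o(t)$.
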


\begin{proof}[Proof of Lemma \ref{le3}]
Without loss of generality, we can assume $t_1 = 0$ after shifting $t$ by $t_1$.
By rearranging the recurrence relation, we have
\begin{eqnarray}\label{eq29}
\nonumber \frac{E(a_{t+1})}{t+1}-\frac{c}{1+b}=&&\frac{E\left[\left(1-\frac{b_t}{t}\right)a_t\right]+c_t}{t+1}-\frac{c}{1+b}\\ \nonumber=&&\left[\frac{E(a_t)}{t}-\frac{c}{1+b}\right]\left(1-\frac{1+b}{t+1}\right)+\frac{c_t-c}{1+t}+\frac{E(({b-b_t}){a_t})}{t(t+1)}.
\end{eqnarray}

Letting $s_t=\left|\frac{E(a_t)}{t}-\frac{c}{1+b}\right|$, the triangle inequality now gives:
\begin{equation}\label{eq30}
\nonumber s_{t+1}\leq s_t\left|1-\frac{1+b}{t+1}\right|+\left|\frac{c_t-c}{t+1}\right|+\left|\frac{E((b-b_t){a_t})}{t(t+1)}\right|.
\end{equation}

Using the fact that ${\lim_{t\to\infty} E(b_t)}=b,\,\lim_{t\to\infty}c_t= c,\left|\frac{a_t}{t}\right|\leq C$, $\forall \epsilon>0$, when $t$ is large enough, we have:
\begin{equation}\label{eq31}
\nonumber\left|\frac{c_t-c}{t+1}\right|+\left|\frac{E((b-b_t){a_t})}{t(t+1)}\right|<\frac{(1+b)\epsilon}{t+1}.
\end{equation}

For fixed $\epsilon>0$, when $t$ is large enough, we have
\begin{equation}\label{eq32}
\nonumber s_{t+1}-\epsilon<(s_{t}-\epsilon)\left(1-\frac{1+b}{t+1}\right),\, \left(1-\frac{1+b}{t+1}\right)>0.
\end{equation}

Since $0<1+b<\infty$, it is not difficult to see that $\prod_{t=1}^{\infty}\left(1-\frac{1+b}{t+1}\right)\to 0$.
Then for fixed $\epsilon>0$:
\begin{equation}\label{eq33}
\nonumber0\leq\lim_{t\to\infty} s_t<\epsilon.
\end{equation}

$\epsilon$ can be chosen arbitrarily, so we can get $\lim_{t\to\infty} s_t\to 0$ as $\epsilon\to 0$ as desired.
Therefore, we have proved that
\begin{equation}\label{eq34}
\nonumber\lim_{t\to\infty}\frac{E(a_t)}{t}=\frac{c}{1+b}.
\end{equation}
\end{proof}

\begin{proof}[Proof of Theorem \ref{th3}]

$\forall d\geq 2$, for each time $t$, let $x=\frac{m^k_d(t-1)\times d}{2(t-1)+n_0},\,y=\frac{m^k_{d-1}(t-1)\times (d-1)}{2(t-1)+n_0},\, z=\frac{D_{t-1}^k}{2(t-1)+n_0}$.
Consider analyzing  $E[m^k_d(t)|{G}_{t-1}]$'s situation in detail:

For the vertex-step at time $t$,
\begin{itemize}
    \item if the new node is connected to the old node with degree $d-1$ from group $k$, then $m^k_d(t)$ will be $m^k_d(t-1)+1$. The probability times $m^k_d(t)$ is:
\begin{equation}\label{st1}
q\left[p_k(1-z)(1-\theta)\frac{y}{z}+p_ky+(1-p_k)y\theta\right]\times(m^k_d(t-1)+1).
\end{equation}
    \item if the new node is connected to the old node with degree $d$ from group $k$, then $m^k_d(t)$ will be $m^k_d(t-1)-1$. The probability times $m^k_d(t)$ is:
\begin{equation}\label{st2}
q\left[p_k(1-z)(1-\theta)\frac{x}{z}+p_kx+(1-p_k)x\theta\right]\times(m^k_d(t-1)-1).
\end{equation}
\item $m^k_d(t)$ is the same as $m^k_d(t-1)$. The probability times $m^k_d(t)$ is:
\begin{equation}\label{st3}
q\left[1-p_k(1-z)(1-\theta)\frac{x+y}{z}-p_k(x+y)-(1-p_k)(x+y)\theta\right]\times m^k_d(t-1).
\end{equation}
\end{itemize}

For the edge-step at time $t$,
\begin{itemize}
    \item if the new edge's two nodes with degree $d-1$ are from group $k$, then $m^k_d(t)$ will be $m^k_d(t-1)+2$. The probability times $m^k_d(t)$ is:
\begin{equation}\label{st4}
(1-q)y[y+(1-z)(1-\theta)\frac{y}{z}]\times(m_d^k(t-1)+2).
\end{equation}
    \item if the new edge's two nodes with degree $d$ are from group $k$, then $m^k_d(t)$ will be $m^k_d(t-1)-2$. The probability times $m^k_d(t)$ is:
\begin{equation}\label{st5}
(1-q)x[x+(1-z)(1-\theta)\frac{x}{z}]\times (m_d^k(t-1)-2).
\end{equation}
  \item if one vertex of the new edge is from group $k$ with degree $d-1$, and the other is neither from group $k$ with degree $d$ nor degree $d-1$, then $m^k_d(t)$ will be $m^k_d(t-1)+1$. The probability times $m^k_d(t)$ is:
\begin{equation}\label{st6}
(1-q)[2y(1-z)\theta+2(z-y-x)(y+(1-z)(1-\theta)\frac{y}{z})]\times (m^k_d(t-1)+1).
\end{equation}
  \item if one vertex of the new edge is from group $k$ with degree $d$, and the other is neither from group $k$ with degree $d$ nor degree $d-1$, then $m^k_d(t)$ will be $m^k_d(t-1)-1$. The probability times $m^k_d(t)$ is:
\begin{equation}\label{st7}
(1-q)[2x(1-z)\theta+2(z-y-x)(x+(1-z)(1-\theta)\frac{x}{z})]\times (m^k_d(t-1)-1).
\end{equation}
\item if the $m^k_d(t)$ is the same as $m^k_d(t-1)$. The probability times $m^k_d(t)$ is:
\begin{eqnarray}\label{st8}
\nonumber &&(1-q)[1-(x^2+y^2)(1+\frac{(1-z)(1-\theta)}{z})
 -2(y+x)(1-z)\theta\\
 &&-2(y+x)(z-y-x)(1+\frac{(1-z)(1-\theta)}{z})]\times m_d^k(t-1).
\end{eqnarray}
\end{itemize}

By summing Equations (\ref{st1})--(\ref{st8}):
\begin{eqnarray}\label{eq35}
\nonumber  E(m^k_d(t)|{G}_{t-1})=&&m_d^k(t-1)+q[p_k(y-x)(1+(1-z)(1-\theta)\frac{1}{z})+(1-p_k)\theta(y-x)]\\
\nonumber &&+2(1-q)[(y^2-x^2)(1+(1-z)(1-\theta)\frac{1}{z})]\\
\nonumber &&+(1-q)[2(y-x)(1-z)\theta+2(y-x)(z-y-x)(1+(1-z)(1-\theta)\frac{1}{z})]\\
\nonumber =&& m_d^k(t-1)+q\left[p_k(y-x)\frac{(1-\theta)}{z}+\theta(y-x)\right]+2(1-q)(y-x)\\
\nonumber =&& m_d^k(t-1)\left[1-\frac{qdp_k(1-\theta)}{D_{t-1}^k}-\frac{d(2-(2-\theta)q)}{2(t-1)+n_0}\right]
\\
\nonumber &&+m_{d-1}^k(t-1)\left[\frac{q(d-1)p_k(1-\theta)}{D_{t-1}^k}+\frac{(d-1)(2-(2-\theta)q)}{2(t-1)+n_0}\right].
\end{eqnarray}

If we take the expectation of both sides, we get the following recurrence formula.
\begin{eqnarray}\label{eq36}
\nonumber E(m_d^k(t))= &&E\left[m_d^k(t-1)[1-\frac{qdp_k(1-\theta)}{D_{t-1}^k}-\frac{d(2-(2-\theta)q)}{2(t-1)+n_0}]\right]
\\
\nonumber &&+E\left[m_{d-1}^k(t-1)[\frac{q(d-1)p_k(1-\theta)}{D_{t-1}^k}+\frac{(d-1)(2-(2-\theta)q)}{2(t-1)+n_0}]\right].
\end{eqnarray}

However, when $d=1$, the formula is different. Let $x=\frac{m^k_1(t-1)}{2(t-1)+n_0},\, z=\frac{D_{t-1}^k}{2(t-1)+n_0}$.

For the vertex-step at time $t$,
\begin{itemize}
    \item if the new node from group $k$ is connected to  an old node with degree $>1$ or from another group, then $m^k_1(t)$ will be $m^k_1(t-1)+1$. The probability times $m^k_1(t)$ is:
\begin{equation}\label{st01}
q\left[p_k(1-z)\theta+p_k(1-z)(1-\theta)(z-x)\frac{1}{z}+p_k(z-x)\right]\times(m^k_1(t-1)+1).
\end{equation}
    \item if the new node from other groups contacts an old node with degree 1 from group $k$, then $m^k_d(t)$ will be $m^k_d(t-1)-1$. The probability times $m^k_d(t)$ is:
\begin{equation}\label{st02}
q[(1-p_k)x\theta]\times (m^k_1(t-1)-1).
\end{equation}
\item if the $m^k_d(t)$ is the same as $m^k_d(t-1)$. The probability times $m^k_d(t)$ is:
\begin{equation}\label{st03}
q\left[1-(p_k(1-z)\theta+p_k(z-x)(1-z)(1-\theta)\frac{1}{z}+p_k(z-x))-(1-p_k)x\theta\right]\times m^k_1(t-1).
\end{equation}
\end{itemize}

For the edge-step at time $t$,
\begin{itemize}
\item If the new edge connects both two nodes with degree $1$ from group $k$, then $m^k_d(t)$ will be $m^k_d(t-1)-2$. The probability times $m^k_d(t)$ is:
\begin{equation}\label{st05}
(1-q)\left[x(x+(1-z)(1-\theta)\frac{x}{z})\right]\times(m_1^k(t-1)-2).
\end{equation}
\item If one node of the new edge is from group $k$ with degree $1$, the other is not, then $m^k_1(t)$ will be $m^k_1(t-1)-1$. The probability times $m^k_1(t)$ is:
\begin{equation}\label{st07}
(1-q)\left[2x(1-z)\theta+2x((z-x)+(1-z)(1-\theta)\frac{(z-x)}{z})\right]\times (m^k_1(t-1)-1).
\end{equation}
\item If the $m^k_1(t)$ is the same with $m^k_1(t-1)$, the probability times $m^k_1(t)$ is:
\begin{eqnarray}\label{st08}
\nonumber &&(1-q)[(1-x^2(1+(1-(1-z)\theta)\frac{1}{z})\\
&&-2x(1-z)\theta-2x(z-x)(1+(1-z)(1-\theta)\frac{1}{z}))]\times m_1^k(t-1).
\end{eqnarray}
\end{itemize}

By summing Equations (\ref{st01})--(\ref{st08}):
\begin{eqnarray}\label{eq37}
\nonumber  E(m^k_1(t)|{G}_{t-1})=&&m_1^k(t-1)+[qp_k+2(q-1)x][(1-z)\theta+(z-x)(1-z)(1-\theta)\frac{1}{z}+(z-x)]\\
\nonumber &&-q(1-p_k)x\theta-2(1-q)[(x^2(1+(1-z)(1-\theta)\frac{1}{z})]\\
\nonumber =&& m_1^k(t-1)+[qp_k+2(q-1)x][1-\frac{x}{z}(1-\theta)-x\theta]\\
\nonumber &&-qx\theta+qp_kx\theta+2(q-1)x[x(\frac{1-\theta}{z}+\theta)]\\
\nonumber =&& m_1^k(t-1)+[qp_k+2(q-1)x]-qx\theta-qp_k\frac{x}{z}(1-\theta)\\
\nonumber =&& m_1^k(t-1)\left[1-\frac{[2-(2-\theta)q]}{2(t-1)+n_0}-\frac{qp_k(1-\theta)}{D_{t-1}^k}\right]+qp_k.
\end{eqnarray}

Thus, taking the expectation of both sides
\begin{equation}\label{eq38}
\begin{aligned}
\nonumber E(m^k_1(t))=E\left[m_1^k(t-1)\left(1-\frac{2-(2-\theta)q}{2(t-1)+n_0}-\frac{qp_k(1-\theta)}{D_{t-1}^k}\right)\right]+qp_k.
\end{aligned}
\end{equation}

By Theorem \ref{th1}, we get $\left[\frac{2-(2-\theta)q}{2(t-1)+n_0}+\frac{qp_k(1-\theta)}{D_{t-1}^k}\right]\times \left[(t-1)+\frac{n_0}{2}\right]\stackrel{a.s.}{\longrightarrow} 1-\frac{q}{2}$. By Lemma \ref{le3} we have:
\begin{equation}\label{eq39}
\begin{aligned}
\nonumber M^k_1=\lim_{t\to\infty}\frac{E(m^k_1(t))}{t}=\frac{2qp_k}{4-q}.
\end{aligned}
\end{equation}

Next, consider $d\geq 2$, by Equation (\ref{eq36}) and Lemma \ref{le3}:
\begin{equation}\label{eq40}
\begin{aligned}
\nonumber M_d^k=\lim_{t\to\infty}\frac{E(m^k_d(t))}{t}=\frac{(d-1)(2-q)}{2+d(2-q)}M_{d-1}^k.
\end{aligned}
\end{equation}
Thus, $\forall d\geq 2$, we can write,
\begin{equation}
\begin{aligned}
\nonumber M_d^k=\frac{2qp_k}{4-{q}}\prod_{j=2}^d\frac{(j-1)(2-q)}{2+j(2-q)}.
\end{aligned}
\end{equation}

Then we prove the model's power law,
\begin{equation}
\begin{aligned}
\nonumber\forall d\geq 2,\,\frac{M_d^k}{M_{d-1}^k}=\frac{(d-1)(2-q)}{2+d(2-q)}=1-\frac{2+(2-q)}{2+d(2-q)}=1-\frac{{2}/{(2-q)}+1}{d}+O(\frac{1}{d^2}).
\end{aligned}
\end{equation}

Consider if $M^k_d\propto d^{-\beta_k}$.
Then
\begin{equation}
\begin{aligned}
\nonumber\frac{M^k_d}{M^k_{d-1}}=\frac{d^{-\beta_k}}{(d-1)^{-\beta_k}}=(1-\frac{1}{d})^\beta_k=1-\frac{\beta_k}{d}+O(\frac{1}{d^2}).
\end{aligned}
\end{equation}

So, the $\beta_k$ for $M_d^k$ is $1+\frac{2}{2-q}$.
\end{proof}

\begin{proof}[Proof of Theorem \ref{th4}]

Consider the likelihood function of $\{G_t\}_{t=0}^T$, let $P_t^k=\frac{D_{t-1}^k}{2(t-1)+n_0}$:
\begin{eqnarray}\label{mle2}
\nonumber&& L(\psi |\{G_t\}_{t=0}^T)\\
\nonumber=&&\prod_{k=1}^K\prod_{t=1}^T \left[qp_k\left (P_t^k+(1-\theta)\left(1-P_t^k\right)\right)\right]^{\mathbf{1}_{\{v_t=1,g_w=g_u=k\}}}\times\prod_{k=1}^K \prod_{t=1}^T\left[qp_k\left(1-P_t^k\right)\theta\right]^{\mathbf{1}_{\{v_t=1,g_w= k,g_u\neq k\}}}\\
\nonumber &&\times \prod_{k=1}^K\prod_{t=1}^T \left[(1-q)P_t^k\left (P_t^k+(1-\theta)\left(1-P_t^k\right)\right) \right]^{\mathbf{1}_{\{v_t=0,g_w=g_u= k\}}}\\
\nonumber&&\times\prod_{k=1}^K \prod_{t=1}^T\left[(1-q)P_t^k\left(1-P_t^k\right)\theta\right]^{\mathbf{1}_{\{v_t=0,g_w= k,g_u\neq k\}}}.
\end{eqnarray}

The log likelihood function is
\begin{eqnarray}\label{mle3}
\nonumber &&\log L(\psi|\{G(t)\}_{t=0}^T)\\
\nonumber=&&\sum_{t=1}^T\left[\mathbf{1}_{\{v_t=1\}}\log(q)+\mathbf{1}_{\{v_t=0\}}\log(1-q)\right]+\sum_{k=1}^K\sum_{t=1}^T\mathbf{1}_{\{v_t=1,g_w=k\}}\log(p_k)\\
\nonumber &&+\sum_{k=1}^K\sum_{t=1}^T \left[\mathbf{1}_{\{v_t=1,g_w=g_u=k\}} \log\left (P_t^k+(1-\theta)\left(1-P_t^k\right)\right)+{\mathbf{1}_{\{v_t=1,g_w= k,g_u\neq k\}}}\log \left(\left(1-P_t^k\right)\theta\right)\right]\\
\nonumber &&+\sum_{k=1}^K\sum_{t=1}^T{\mathbf{1}_{\{v_t=0,g_w=g_u= k\}}} \log \left(P_t^k\left (P_t^k+(1-\theta)\left(1-P_t^k\right)\right) \right)\\
&&+\sum_{k=1}^K\sum_{t=1}^T{\mathbf{1}_{\{v_t=0,g_w= k,g_u\neq k\}}} \log\left(P_t^k\left(1-P_t^k\right)\theta\right).
\end{eqnarray}

Since  $\sum_{k=1}^K p_k=1$, we can rewrite $p_K=1-\sum_{k=1}^{K-1}p_k$.
The score functions of Equation (\ref{mle3}) for $p_k,\,k=1,\cdots,K-1$, $q$ are as follows:
\begin{equation}\label{score1}
\begin{aligned}
\frac{\partial}{\partial p_k}\log L(\psi|\{G_t\}_{t=0}^T)=\frac{\sum_{t=1}^T\mathbf{1}_{\{v_t=1,g_w=k\}}}{p_k}-\frac{\sum_{t=1}^T\mathbf{1}_{\{v_t=1,g_w=K\}}}{1-\sum_{k=1}^{K-1}p_k}.
\end{aligned}
\end{equation}

\begin{equation}\label{score2}
\begin{aligned}
\frac{\partial}{\partial q}\log L(\psi|\{G_t\}_{t=0}^T)=\frac{\sum_{t=1}^T\mathbf{1}_{\{v_t=1\}}}{q}-\frac{\sum_{t=1}^T\mathbf{1}_{\{v_t=0\}}}{1-q}.
\end{aligned}
\end{equation}

Equations (\ref{score1}) and (\ref{score2}) are monotonically decreasing continuous functions of $p_k\in[0,1]$ and $q\in[0,1]$.

What is more, when $p_k=0,q=0$, Equations (\ref{score1}) and (\ref{score2}) both go to $+\infty$;
and when $p_k=1,q=1$,  Equations (\ref{score1}) and (\ref{score2}) both go to $-\infty$. Therefore, we know that Equations (\ref{score1}) and (\ref{score2}) both have a unique root:
\begin{equation}\label{mle4}
\begin{aligned}
\nonumber\hat{p}_k=\frac{\sum_{t=1}^T \mathbf{1}_{\{v_t=1,g_w=k\}}}{\sum_{t=1}^T \mathbf{1}_{\{v_t=1\}}}.
\end{aligned}
\end{equation}
\begin{equation}\label{mle5}
\begin{aligned}
\nonumber\hat{q}=\frac{\sum_{t=1}^T \mathbf{1}_{\{v_t=1\}}}{T}.
\end{aligned}
\end{equation}

By the law of large numbers, get
\begin{eqnarray}
\nonumber\hat{p}_k =&&\frac{\sum_{t=1}^T \mathbf{1}_{\{v_t=1,g_w=k\}}}{\sum_{t=1}^T \mathbf{1}_{\{v_t=1\}}}\\
\nonumber=&&\frac{\sum_{t=1}^T \mathbf{1}_{\{v_t=1,g_w=k\}}}{q^*T}\times \frac{q^*T}{\sum_{t=1}^T \mathbf{1}_{\{v_t=1\}}}\\
\nonumber \stackrel{a.s.}{\longrightarrow}&& p^*_k.
\end{eqnarray}
\begin{eqnarray}
\nonumber\hat{q}=\frac{\sum_{t=1}^T \mathbf{1}_{\{v_t=1\}}}{T}\stackrel{a.s.}{\longrightarrow} q^*.
\end{eqnarray}

Next, considering the score function for $\theta$, we have
\begin{eqnarray}\label{score3}
\nonumber&& \frac{\partial}{\partial \theta}\log L(\psi|\{G_t\}_{t=0}^T)\\
\nonumber=&&\frac{\partial}{\partial \theta}\log L_2(\theta|\{G_t\}_{t=0}^T)\\
\nonumber =&&-\sum_{k=1}^K\sum_{t=1}^T\frac{\mathbf{1}_{\{v_t=1,g_w=g_u=k\}}(1-P_t^k)}{ P^k_t+(1-\theta)\left(1-P^k_t\right)}-\sum_{k=1}^K\sum_{t=1}^T\frac{\mathbf{1}_{\{v_t=0,g_w=g_u=k\}}(1-P_t^k)}{ P^k_t+(1-\theta)\left(1-P^k_t\right)}\\
 &&+\sum_{k=1}^K\sum_{t=1}^T\frac{\mathbf{1}_{\{v_t=1,g_w= k,g_u\neq k\}}}{\theta}+\sum_{k=1}^K\sum_{t=1}^T\frac{\mathbf{1}_{\{v_t=0,g_w= k,g_u\neq k\}}}{\theta}.
\end{eqnarray}

Equation (\ref{score3}) is a strictly decreasing continuous function of $\theta\in (0,1]$, and Equation (\ref{score3}) $\to \infty$ when $\theta\to 0$.
So, there is a unique maximum point for $\log L_2(\theta|\{G_t\}_{t=0}^T)$ of $\theta\in (0,1]$.

Let $\hat{\theta}=\arg\max_{\theta\in (0,1]} \log L_2(\theta|\{G_t\}_{t=0}^T)$. $\hat{\theta}$ is the unique maximum point for $\log L_2(\theta|\{G_t\}_{t=0}^T)$. So $\forall \theta\neq\hat{\theta}\in (0,1]$, $\log L_2(\hat{\theta}|\{G_t\}_{t=0}^T)> \log L_2(\theta|\{G_t\}_{t=0}^T)$. And it implies
\begin{equation}\label{mle0}
\frac{1}{T}\log L_2(\hat{\theta}|\{G_t\}_{t=0}^T)>\frac{1}{T} \log L_2(\theta|\{G_t\}_{t=0}^T).
\end{equation}

We have
\begin{eqnarray}\label{mle6}
\nonumber &&\frac{\log L_2(\theta|\{G_t\}_{t=0}^T)}{T}\\
\nonumber=&&\sum_{k=1}^K\sum_{t=1}^T \frac{\mathbf{1}_{\{v_t=1,g_w=g_u=k\}}\log\left (P^k_t+(1-\theta)\left(1-P^k_t\right)\right)}{T}\\
\nonumber &&+\sum_{k=1}^K\sum_{t=1}^T \frac{{\mathbf{1}_{\{v_t=1,g_w= k,g_u\neq k\}}}\log \left(\left(1-P^k_t\right)\theta\right)}{T}\\
\nonumber &&+ \sum_{k=1}^K\sum_{t=1}^T\frac{{\mathbf{1}_{\{v_t=0,g_w=k,g_u= k\}}}\log \left(P^k_t\left (P^k_t+(1-\theta)\left(1-P^k_t\right)\right) \right)}{T}\\
\nonumber &&+\sum_{k=1}^K\sum_{t=1}^T \frac{{\mathbf{1}_{\{v_t=0,g_w= k,g_u\neq k\}}}\log\left(P^k_t\left(1-P^k_t\right)\theta\right)}{T}\\
\nonumber =&&\sum_{k=1}^K \frac{\sum_{t=1}^T\mathbf{1}_{\{v_t=1,g_w=g_u=k\}}\log\left (P^k_t+(1-\theta)\left(1-P^k_t\right)\right)}{\sum_{t=1}^T \mathbf{1}_{\{v_t=1,g_w=g_u=k\}}}\times \frac{\sum_{t=1}^T \mathbf{1}_{\{v_t=1,g_w=g_u=k\}}}{T}\\
\nonumber &&+\sum_{k=1}^K \frac{\sum_{t=1}^T{\mathbf{1}_{\{v_t=1,g_w=k,g_u\neq k\}}}\log \left(\left(1-P^k_t\right)\theta\right)}{\sum_{t=1}^T \mathbf{1}_{\{v_t=1,g_w=k,g_u\neq k\}}}\times \frac{\sum_{t=1}^T \mathbf{1}_{\{v_t=1,g_w=k,g_u\neq k\}}}{T}\\
\nonumber &&+ \sum_{k=1}^K\frac{\sum_{t=1}^T{\mathbf{1}_{\{v_t=0,g_w=g_u= k\}}}\log \left(P^k_t\left (P^k_t+(1-\theta)\left(1-P^k_t\right)\right) \right)}{\sum_{t=1}^T \mathbf{1}_{\{v_t=0,g_w=g_u=k\}}}\times \frac{\sum_{t=1}^T \mathbf{1}_{\{v_t=0,g_w=g_u=k\}}}{T}\\
 &&+\sum_{k=1}^K \frac{\sum_{t=1}^T{\mathbf{1}_{\{v_t=0,g_w= k,g_u\neq k\}}}\log\left(P^k_t\left(1-P^k_t\right)\theta\right)}{\sum_{t=1}^T \mathbf{1}_{\{v_t=0,g_w=k,g_u\neq k\}}}\times \frac{\sum_{t=1}^T \mathbf{1}_{\{v_t=0,g_w=k,g_u\neq k\}}}{T}.
\end{eqnarray}

By LLN and Theorem \ref{th1}, get
\begin{eqnarray}\label{mle8}
\nonumber \lim_{T\to\infty}\frac{\log L_2(\theta|\{G_t\}_{t=0}^T)}{T}=&&\sum_{k=1}^K\log(p^*_k+(1-\theta)(1-p^*_k))\times (p^*_k)^2 \\
&&+\sum_{k=1}^K\log \left(\left(1-p^*_k\right)\theta\right)\times p^*_k(1-p^*_k).
\end{eqnarray}

Consider a random variable $Y=(k_1,k_2)$ such that $k_1,k_2\in \{1,\cdots,K\}$. The probability distribution of $Y$ is
\begin{equation}\label{pdf1}
\begin{aligned}
P(y;\theta)=\begin{cases}
p^*_{k_1}p^*_{k_2}\theta , & \text{if } k_1\neq k_2\\
(p^*_k)^2+p^*_k(1-p^*_k)(1-\theta), &\text{if } k_1= k_2=k.
\end{cases}
\end{aligned}
\end{equation}

For any other $\theta\neq \theta^*\in (0,1]$,
\begin{eqnarray}\label{mle9}
\nonumber E_{\theta^*}\log [P(Y;\theta^*)/P(Y;\theta)]=&&-E_{\theta^*}\log [P(Y;\theta)/P(Y;\theta^*)]\\
\nonumber \geq &&-\log[E_{\theta^*}(P(Y;\theta)/f(Y;\theta^*))]\\
 =&&-\log[1]=0.
\end{eqnarray}

Equation (\ref{mle9}) implies that
\begin{equation}\label{mle10}
\forall \theta\neq \theta^*\in (0,1],\, E_{\theta^*}\log [P(Y;\theta^*)]> E_{\theta^*}\log [P(Y;\theta)].
\end{equation}

Moreover, we can prove $E_{\theta^*}\log [P(Y;\theta^*)]=\lim_{T\to\infty}\frac{\log L_2(\theta^*|\{G_t\}_{t=0}^T)}{T}$.
Thus, by Equation (\ref{mle10}), when $T$ is large enough, we know  $\frac{\log L_2(\theta^*|\{G_t\}_{t=0}^T)}{T}> \frac{\log L_2(\theta|\{G_t\}_{t=0}^T)}{T},\,\forall \theta\neq \theta^*\in [0,1]$. Combining this result with Equation (\ref{mle0}), get $P(\lim_{T\to\infty} \hat{\theta}=\theta^*)=1$,
which means
\begin{equation}
\nonumber\lim_{T\to\infty} \hat{\theta}\stackrel{a.s.}{\longrightarrow} \theta^*.
\end{equation}
\end{proof}

\begin{proof}[Proof of Theorem \ref{th5}]
Let $l(\theta|\{G_t\}_{t=0}^T)=\frac{\partial \log L_2(\theta|\{G_t\}_{t=0}^T)}{\partial \theta}$ and $P_t^k=\frac{D_{t-1}^k}{2(t-1)+n_0}$.
By Equation (\ref{score3}), we get
\begin{eqnarray}\label{I2}
\nonumber l(\theta|\{G_t\}_{t=0}^T)=&&-\sum_{k=1}^K\sum_{t=1}^T\frac{\mathbf{1}_{\{v_t=1,g_w=g_u=k\}}(1-P_t^k)}{ P^k_t+(1-\theta)\left(1-P^k_t\right)}-\sum_{k=1}^K\sum_{t=1}^T\frac{\mathbf{1}_{\{v_t=0,g_w=g_u=k\}}(1-P_t^k)}{ P^k_t+(1-\theta)\left(1-P^k_t\right)}\\
\nonumber&&+\sum_{k=1}^K\sum_{t=1}^T\frac{\mathbf{1}_{\{v_t=1,g_w= k,g_u\neq k\}}}{\theta}+\sum_{k=1}^K\sum_{t=1}^T\frac{\mathbf{1}_{\{v_t=0,g_w= k,g_u\neq k\}}}{\theta}.
\end{eqnarray}

Further, we can get:
\begin{eqnarray}\label{I3}
\nonumber l'(\theta|\{G_t\}_{t=0}^T)=&&\frac{\partial l'(\theta|\{G_t\}_{t=0}^T)}{\partial\theta}\\
\nonumber =&&-\sum_{k=1}^K\sum_{t=1}^T\frac{\mathbf{1}_{\{v_t=1,g_w=g_u=k\}}\left(1-P^k_t\right)^2}{ [P^k_t+(1-\theta)\left(1-P^k_t\right)]^2}-\sum_{k=1}^K\sum_{t=1}^T\frac{\mathbf{1}_{\{v_t=0,g_w=g_u=k\}} \left(1-P^k_t\right)^2}{[P^k_t+(1-\theta)\left(1-P^k_t\right)]^2}
\\
 \nonumber&&-\sum_{k=1}^K\sum_{t=1}^T\frac{\mathbf{1}_{\{v_t=1,g_w= k,g_u\neq k\}}}{\theta^2}-\sum_{k=1}^K\sum_{t=1}^T\frac{\mathbf{1}_{\{v_t=0,g_w= k,g_u\neq k\}}}{\theta^2}.
\end{eqnarray}

$\theta^*$ is the true parameter for the network $\{G_t\}_{t=0}^T$ by KPA model, and taking the Taylor Expansion of $l(\theta|\{G_t\}_{t=0}^T)$ around $\theta^*$, we can get:
\begin{equation}\label{te}
\begin{aligned}
l(\theta|\{G_t\}_{t=0}^T)=l(\theta^*|\{G_t\}_{t=0}^T)+({\theta}-\theta^*) l'(\theta^*|\{G_t\}_{t=0}^T)+({\theta}-\theta^*)^2 l'(\theta^{**}|\{G_t\}_{t=0}^T),
\end{aligned}
\end{equation}
where $\theta^{**}=\theta^*+\mathbf{\xi}\circ(\theta-\theta^*)$, $\xi \in [0,1]$. $l(\hat{\theta}|\{G_t\}_{t=0}^T)=0$ when $T$ is large enough.
Equation (\ref{te}) implies
\begin{equation}\label{te2}
\begin{aligned}
\sqrt{T}(\hat{\theta}-\theta^*)= -\frac{\frac{\sqrt{T}l(\theta^*|\{G(t)\}_{t=0}^T)}{T}}{\frac{l'(\theta^*|\{G(t)\}_{t=0}^T)}{T}+\frac{l'(\theta^{**}|\{G(t)\}_{t=0}^T)(\hat{\theta}-\theta^*)}{T}}.
\end{aligned}
\end{equation}

Consider the random variable $Y$ and its probability distribution $P(\cdot;\theta^*)$ defined by Equation (\ref{pdf1}). %We have $\frac{\partial \log P(Y;\theta)}{\partial\theta}= \frac{P'(Y;\theta)}{P(Y;\theta)}$ and $\frac{\partial^2 \log P(Y;\theta)}{\partial\theta^2}=\frac{P''(Y;\theta)}{P(Y;\theta)}-[\frac{P'(Y;\theta)}{P(Y;\theta)}]^2$.
\begin{eqnarray}\label{I}
\nonumber I(\theta^*)= &&E_{\theta^*}\left[-\frac{\partial^2 \log P(Y;\theta^*)}{\partial(\theta^*)^2}\right]=E_{\theta^*}\left[\frac{P'(Y;\theta^*)^2}{P(Y;\theta^*)^2}\right]\\
=&&\sum_{y}\frac{P'(y;\theta^*)^2}{P(y;\theta^*)}=\sum_{k=1}^K \left[\frac{p^*_{k}(1-p^*_{k})}{\theta^*}+\frac{p^*_{k}(1-p^*_{k})^2}{p^*_k+(1-p^*_k)(1-\theta^*)}\right].
\end{eqnarray}

By the martingale convergence theorem, martingale difference central limit theorem and Equation (\ref{I}), we have that when $T$ tends to infinity,
%\begin{equation}
%\hat{\theta}\stackrel{a.s.}{\longrightarrow}\theta^*,
%\end{equation}
\begin{equation}\label{mletheta1}
\hat{\theta}\stackrel{a.s.}{\longrightarrow}\theta^*,\ \ \ \sqrt{T}\frac{l(\theta^*|\{G(t)\}_{t=0}^T)}{T}\xrightarrow{d} N(0,I(\theta^*)),
\end{equation}
\begin{equation}\label{mletheta2}
-\frac{l'(\theta^*|\{G(t)\}_{t=0}^T)}{T} \stackrel{a.s.}{\longrightarrow} I(\theta^*),
\end{equation}
and
\begin{equation}\label{mletheta3}
-\frac{l'(\theta^{**}|\{G(t)\}_{t=0}^T)}{T} \stackrel{a.s.}{\longrightarrow} \sum_{k=1}^K \left[\frac{\theta^* p^*_{k}(1-p^*_{k})}{(\theta^{**})^2}+\frac{(p^*_k+(1-p^*_k)(1-\theta^*))p^*_{k}(1-p^*_{k})^2}{(p^*_k+(1-p^*_k)(1-\theta^{**}))^2}\right].
\end{equation}
By Equations (\ref{mletheta1})--(\ref{mletheta3}), it follows that when $T$ tends to infinity,
%According to Theorem \ref{th4} and $\hat{\theta}\stackrel{a.s.}{\longrightarrow}\theta^*$, we get
%\begin{equation}
%\begin{aligned}
%\sqrt{T}(\hat{\theta}-\theta^*)\approx -\frac{\frac{\sqrt{T}l(\theta^*|\{G(t)\}_{t=0}^T)}{T}}{\frac{l'(\theta^*|\{G(t)\}_{t=0}^T)}{T}}\to N(0,\frac{1}{I(\theta^*)}).
%\end{aligned}
%\end{equation}
\begin{equation}
\begin{aligned}
\nonumber\sqrt{T}(\hat{\theta}-\theta^*)\xrightarrow{d} N(0,\frac{1}{I(\theta^*)}).
\end{aligned}
\end{equation}

Let $\mathbf{p}=(p_1,\cdots,p_{K-1})'$ and $p_K=1-\sum_{k=1}^{K-1}p_k$,
we define
\begin{eqnarray}
\nonumber\log L_1(\mathbf{p}|\{G_t\}_{t=0}^T)=\sum_{k=1}^{K} \sum_{t=1}^T \mathbf{1}_{\{v_t=1,g_w=k\}}\log(p_k).
\end{eqnarray}

We also define
\begin{equation}
\nonumber l(\mathbf{p}|\{G_t\}_{t=0}^T)=\frac{\partial \log L_1(\mathbf{p}|\{G_t\}_{t=0}^T)}{\partial\mathbf{p}}=\begin{pmatrix}
\frac{\sum_{t=1}^T\mathbf{1}_{\{v_t=1,g_w=1\}}}{p_1}-\frac{\sum_{t=1}^T\mathbf{1}_{\{v_t=1,g_w=K\}}}{1-\sum_{k=1}^{K-1}p_k} \\
\vdots\\
\frac{\sum_{t=1}^T\mathbf{1}_{\{v_t=1,g_w=K-1\}}}{p_{K-1}}-\frac{\sum_{t=1}^T\mathbf{1}_{\{v_t=1,g_w=K\}}}{1-\sum_{k=1}^{K-1}p_k}
\end{pmatrix}.
\end{equation}

Further,
\begin{eqnarray}
\nonumber&&l'(\mathbf{p}|\{G_t\}_{t=0}^T)= \frac{\partial \log l(\mathbf{p}|\{G_t\}_{t=0}^T)}{\partial\mathbf{p}}\\
\nonumber =&&\begin{bmatrix}
-\frac{\sum_{t=1}^T\mathbf{1}_{\{v_t=1,g_w=1\}}}{p_1^2}-\frac{\sum_{t=1}^T\mathbf{1}_{\{v_t=1,g_w=K\}}}{(1-\sum_{k=1}^{K-1}p_k)^2}& \cdots &-\frac{\sum_{t=1}^T\mathbf{1}_{\{v_t=1,g_w=K\}}}{(1-\sum_{k=1}^{K-1}p_k)^2}    \\
\vdots&  &\vdots\\
\nonumber-\frac{\sum_{t=1}^T\mathbf{1}_{\{v_t=1,g_w=K\}}}{(1-\sum_{k=1}^{K-1}p_k)^2}&\cdots& -\frac{\sum_{t=1}^T\mathbf{1}_{\{v_t=1,g_w=K-1\}}}{p_{K-1}^2}-\frac{\sum_{t=1}^T\mathbf{1}_{\{v_t=1,g_w=K\}}}{(1-\sum_{k=1}^{K-1}p_k)^2}
\end{bmatrix}.
\end{eqnarray}

Let $\mathbf{p}^*$ be the true parameters for KPA model $\{G_t\}_{t=0}^T$. Then, using a multivariate Taylor expansion of $l(\mathbf{p}|\{G_t\}_{t=0}^T)$ round $\mathbf{p}^*$,
we can get:
\begin{eqnarray}\label{mlep0}
l(\mathbf{p}|\{G_t\}_{t=0}^T)= l(\mathbf{p}^*|\{G_t\}_{t=0}^T)+l'(\mathbf{p}^{**}|\{G_t\}_{t=0}^T)(\mathbf{p}-\mathbf{p}^*),
\end{eqnarray}
where $\mathbf{p}^{**}=\mathbf{p}^*+\boldsymbol{\xi}\circ(\mathbf{p}-\mathbf{p}^*)$, $\boldsymbol{\xi} \in [0,1]^{K-1}$.

Set $\mathbf{p}=\hat{\mathbf{p}}$ by the MLE, $\mathbf{p}^{**}=\mathbf{p}^*+\boldsymbol{\xi}\circ(\hat{\mathbf{p}}-\mathbf{p}^*)$. According to Theorem \ref{th4} and by martingale difference central limit theorem, when $T$ goes to infinity,
%\begin{eqnarray}
%\mathbf{p}^{**}\stackrel{a.s.}{\longrightarrow} \mathbf{p}^*,
%\end{eqnarray}
\begin{eqnarray}\label{mlep1}
\mathbf{p}^{**}\stackrel{a.s.}{\longrightarrow} \mathbf{p}^*,\ \ \ \sqrt{T}\frac{l(\mathbf{p}^{*}|\{G(t)\}_{t=0}^T)}{T}\xrightarrow{d}N(0,I(\mathbf{p}^*)),
\end{eqnarray}
and
\begin{eqnarray}\label{mlep2}
-\frac{l'(\mathbf{p}^{**}|\{G(t)\}_{t=0}^T)}{T}\stackrel{a.s.}{\longrightarrow}I(\mathbf{p}^*),
\end{eqnarray}
where
\begin{equation}\label{mlep3}
I(\mathbf{p}^*)=q^*\times\begin{bmatrix}
\frac{1-\sum_{k=2}^{K-1}p^*_k}{p^*_1(1-\sum_{k=1}^{K-1}p^*_k)}& \frac{1}{1-\sum_{k=1}^{K-1}p^*_k}& \cdots &\frac{1}{1-\sum_{k=1}^{K-1}p^*_k}\\ \frac{1}{1-\sum_{k=1}^{K-1}p^*_k}&
\frac{1-\sum_{k\neq 2}^{K-1}p^*_k}{p^*_2(1-\sum_{k= 1}^{K-1}p^*_k)}& \cdots &\frac{1}{1-\sum_{k=1}^{K-1}p^*_k}\\
\vdots & &\cdots & \vdots \\
 \frac{1}{1-\sum_{k=1}^{K-1}p^*_k}& \cdots &\frac{1}{1-\sum_{k=1}^{K-1}p^*_k}&\frac{1-\sum_{k=1}^{K-2}p^*_k}{p^*_{K-1}(1-\sum_{k=1}^{K-1}p^*_k)}\\
\end{bmatrix}.
\end{equation}

By Equations (\ref{mlep0})--(\ref{mlep3}), then when $T$ goes to infinity,
\begin{equation}
\begin{aligned}
\sqrt{T}(\hat{\mathbf{p}}-\mathbf{p}^*)'\xrightarrow{d} N(0,{I(\mathbf{p}^*)}^{-1}).
\end{aligned}
\end{equation}
%\begin{equation}
%\begin{aligned}
%\sqrt{T}(\hat{\mathbf{p}}-\mathbf{p}^*)'= -\frac{\frac{\sqrt{T}l(\mathbf{p}^*|\{G(t)\}_{t=0}^T)}{T}}{\frac{l'(\mathbf{p}^{**}|\{G(t)\}_{t=0}^T)}{T}}\xrightarrow{d} N(0,{I(\mathbf{p}^*)}^{-1}).
%\end{aligned}
%\end{equation}
% ,

For $q$, we have
\begin{eqnarray}
\nonumber \log L_3(q|\{G_t\}_{t=0}^T)=\sum_{t=1}^T [\mathbf{1}_{\{v_t=1\}}\log(q)+\mathbf{1}_{\{v_t=0\}}\log(1-q)].
\end{eqnarray}
\begin{eqnarray}
\nonumber l(q|\{G_t\}_{t=0}^T)=\frac{\partial}{\partial q}\log L(\psi|\{G_t\}_{t=0}^T)=\frac{\sum_{t=1}^T\mathbf{1}_{\{v_t=1\}}}{q}-\frac{\sum_{t=1}^T\mathbf{1}_{\{v_t=0\}}}{1-q}.
\end{eqnarray}
\begin{eqnarray}
\nonumber l'(q|\{G_t\}_{t=0}^T)=\frac{\partial}{\partial q}l'(q|\{G_t\}_{t=0}^T)=-\frac{\sum_{t=1}^T\mathbf{1}_{\{v_t=1\}}}{q^2}-\frac{\sum_{t=1}^T\mathbf{1}_{\{v_t=0\}}}{(1-q)^2}.
\end{eqnarray}

Let $q^*$ be the true parameter, and using a Taylor expansion of $l(q|\{G_t\}_{t=0}^T)$ around $q^*$, we can get:
\begin{equation}\label{mleq0}
l(q|\{G_t\}_{t=0}^T)=l(q^*|\{G_t\}_{t=0}^T)+(q-q^*)l'(q^*|\{G_t\}_{t=0}^T)+({q}-q^*)^2 l'(q^{**}|\{G_t\}_{t=0}^T).
\end{equation}

By Theorem \ref{th4} and martingale difference central limit theorem, when $T$ goes to infinity,
\begin{eqnarray}\label{mleq1}
\sqrt{T}\frac{l(q^{*}|\{G_t\}_{t=0}^T)}{T}\xrightarrow{d} N(0,\frac{1}{q^*(1-q^*)}),\ \ \ \hat{q}
\stackrel{a.s.}{\longrightarrow} q^*,
\end{eqnarray}
\begin{eqnarray}\label{mleq2}
\nonumber\frac{l'(q^{**}|\{G_t\}_{t=0}^T)}{T}= &&-\frac{\sum_{t=1}^T\mathbf{1}_{\{v_t=1\}}}{(q^{**})^2T}-\frac{\sum_{t=1}^T\mathbf{1}_{\{v_t=0\}}}{(1-q^{**})^2T}\\
\stackrel{a.s.}{\longrightarrow}&& -\frac{q^*}{(q^{**})^2}-\frac{1-q^*}{(1-q^{**})^2},
\end{eqnarray}
and
\begin{eqnarray}\label{mleq3}
-\frac{l'(q^{*}|\{G_t\}_{t=0}^T)}{T}
\stackrel{a.s.}{\longrightarrow}\frac{1}{q^*(1-q^*)}.
\end{eqnarray}
%and
%\begin{eqnarray}
%\hat{q}
%\stackrel{a.s.}{\longrightarrow} q^*.
%\end{eqnarray}

Let $q=\hat{q}$, by Equations (\ref{mleq0})-- (\ref{mleq3}),  we have that when $T$ goes to infinity,
%\begin{equation}
%\begin{aligned}
%\sqrt{T}(\hat{q}-q^*)= -\frac{\frac{\sqrt{T}l(q^*|\{G(t)\}_{t=0}^T)}{T}}{\frac{l'(q^*|\{G(t)\}_{t=0}^T)}{T}+\frac{l'(q^{**}|\{G(t)\}_{t=0}^T)(\hat{q}-q^*)}{T}}\stackrel{a.s.}{\longrightarrow} -\frac{\frac{\sqrt{T}l(q^*|\{G(t)\}_{t=0}^T)}{T}}{\frac{l'(q^*|\{G(t)\}_{t=0}^T)}{T}}\xrightarrow{d} N(0,{q^*(1-q^*)}).
%\end{aligned}
%\end{equation}
\begin{equation}
\begin{aligned}
\nonumber\sqrt{T}(\hat{q}-q^*)\xrightarrow{d} N(0,{q^*(1-q^*)}).
\end{aligned}
\end{equation}

\end{proof}

\begin{proof}[Proof of Theorem \ref{th6}]
By Slutsky's theorem, we have $\forall k \in \{1,\cdots,K\}$,
\begin{eqnarray}
\nonumber
\tilde{p}_k=&&\frac{V_T^k}{V_T}=\frac{n^k_0+\sum_{t=1}^T \mathbf{1}_{\{v_t=1,g_w=k\}}}{n_0+\sum_{t=1}^T \mathbf{1}_{\{v_t=1\}}}\\
\nonumber=&&\frac{n^k_0}{n_0+\sum_{t=1}^T \mathbf{1}_{\{v_t=1\}}}+\hat{p}_k \\
\nonumber \stackrel{a.s.}{\longrightarrow} &&p^*_k.
\end{eqnarray}

\begin{eqnarray}
\nonumber\tilde{q}=&&\frac{V_T}{E_T}=\frac{n_0+\sum_{t=1}^T \mathbf{1}_{\{v_t=1\}}}{e_0+T}\\
\nonumber=&&\frac{T}{e_0+T}\hat{q}+\frac{n_0}{e_0+T}\\
\nonumber \stackrel{a.s.}{\longrightarrow}&&q^*.
\end{eqnarray}
where $e_0,n_0$ means the number of edges and vertices in the initial $G_0$, and $n_0^k$ means the number of vertices from group $k$ in the $G_0$.

Considering $\tilde{\theta}$, we have
\begin{eqnarray}
\nonumber \frac{L_T(\theta|G_T)}{T}= &&\sum_{k=1}^K\left[\frac{E^{k,1}_{T}}{T}\log (\frac{D_{T}^k}{2E_T}(\frac{D_{T}^k}{2E_T}+(1-\theta)(1-\frac{D_{T}^k}{2E_T})))\right]\\
\nonumber &&+\sum_{k=1}^K\left[\frac{E^{k,0}_{T}}{T}\log(\frac{D_{T}^k}{2E_T}\theta(1-\frac{D_{T}^k}{2E_T}))\right].
\end{eqnarray}

By LLN and Theorem \ref{th4}, we have
\begin{eqnarray}\label{snap1}
\nonumber &&\lim_{T\to\infty}\frac{L_T(\theta|G(T))}{T}\\
\nonumber\stackrel{a.s.}{\longrightarrow}&&\sum_{k=1}^K[((p^*_k)^2+p^*_k(1-p^*_k)(1-\theta^*))\log((p_k^*)^2+p^*_k(1-p^*_k)(1-\theta))\\
\nonumber&&+(p^*_k(1-p^*_k)\theta^*)\log(p^*_k(1-p^*_k)\theta)]\\
=&&E_{\theta^*}[\log P(Y;\theta)].
\end{eqnarray}

Equation (\ref{snap1}) implies $\frac{1}{T}L_T(\theta^*|G_T)>\frac{1}{T}L_T(\theta|G_T),\,\forall \theta\neq\theta^*\in (0,1] $ when $T$ is large enough. So we can get $\tilde{\theta}=\arg\max_{\theta\in (0,1]} L_T(\theta|G(T))\stackrel{a.s.}{\longrightarrow} \theta^*$.
\end{proof}

\begin{proof}[Proof of Theorem \ref{changep}]

\begin{equation}
\begin{aligned}
\nonumber\forall \epsilon>0,\, P\left(\frac{|\hat{\tau}-\tau^*|}{T}>\epsilon \right)=P\left({|\hat{\tau}-\tau^*|}>\epsilon T \right).
\end{aligned}
\end{equation}

Let

$\hat{\theta}_{1,\hat{\tau}}=\arg\max_{\theta\in(0,1]}\log L_2(\theta|\{G_t\}_{t=0}^{\hat{\tau}})$, $\hat{\theta}_{2,\hat{\tau}}=\arg\max_{\theta\in(0,1]}\log L_2(\theta|\{G_t\}_{t=\hat{\tau}}^T)$.

 $\hat{\theta}_{1,\tau^*}=\arg\max_{\theta\in(0,1]}\log L_2(\theta|\{G_t\}_{t=0}^{\tau^*})$, $\hat{\theta}_{2,\tau^*}=\arg\max_{\theta\in(0,1]}\log L_2(\theta|\{G_t\}_{t=\tau^*}^T)$.

By Theorem \ref{th4}, we can get $\hat{\theta}_{1,\tau^*}\stackrel{a.s.}{\longrightarrow} \theta^*_1$ and $\hat{\theta}_{2,\tau^*}\stackrel{a.s.}{\longrightarrow} \theta^*_2$.

First, we consider the case $\hat{\tau}-\tau^*>\epsilon T$. Divide the time range into three intervals $[0,T]=[0,\tau^*]\cup[\tau^*+1,\hat{\tau}]\cup[\hat{\tau}+1,T]$.  %and ${\hat{\tau}-\tau^*}>\epsilon T$,
\begin{eqnarray}\label{changep1}
\nonumber\frac{\log L_2(\theta|\{G_t\}_{t=0}^{\hat{\tau}})}{\hat{\tau}}= \frac{\tau^*}{\hat{\tau}}\frac{\log L_2(\theta|\{G_t\}_{t=0}^{{\tau^*}})}{\tau^*}+\frac{\hat{\tau}-\tau^*}{\hat{\tau}}\frac{\log L_2(\theta|\{G_t\}_{t=\tau^*}^{\hat{\tau}})}{\hat{\tau}-\tau^*}.
\end{eqnarray}

Furthermore, let $\alpha=\frac{\tau^*}{\hat{\tau}}$, under Assumption \ref{a5}, we have $\alpha\in [\frac{c}{1-c},1-\frac{\epsilon}{1-c}]$.
\begin{eqnarray}
\nonumber\Theta_0=\left\{{\theta^*_{\alpha}}=\arg\max_{\theta\in(0,1]}[\alpha E_{\theta_1^*}(\log(P(Y;\theta))) +(1-\alpha)E_{\theta_2^*}(\log(P(Y;\theta)))]:\alpha\in [\frac{c}{1-c},1-\frac{\epsilon}{1-c}] \right\}.
\end{eqnarray}

For $\forall \theta^*\in (0,1]$,
\begin{eqnarray}\label{score4}
\nonumber &&\frac{\partial E_{\theta^*}(\log(P(Y;\theta)))}{\partial \theta}\\
\nonumber=&&-\sum_{k=1}^K\sum_{t=1}^T\left[\frac{[p_k+(1-\theta^*)\left(1-p_k\right)](1-p_k)p_k}{ p_k+(1-\theta)\left(1-p_k\right)}-\frac{\theta^*(1-p_k)p_k}{\theta}\right]\\
= &&\begin{cases}
0,  & \text{if }\theta=\theta^* \\
\infty, & \text{if }\theta\to 0\\
-\sum_{k=1}^K\sum_{t=1}^T {(1-\theta^*)\left(1-p_k\right)}, & \text{if }\theta=1.
\end{cases}
\end{eqnarray}

For $\forall \alpha\in [\frac{c}{1-c},1-\frac{\epsilon}{1-c}]$, Equation (\ref{score4}) implies that the derivative function of $\alpha E_{\theta_1^*}[\log(P(Y;\theta))] +(1-\alpha)E_{\theta_2^*}[\log(P(Y;\theta))]$ is strictly decreasing and has a unique root. So $\alpha E_{\theta_1^*}[\log(P(Y;\theta))] +(1-\alpha)E_{\theta_2^*}[\log(P(Y;\theta))]$ has a unique maximum point.
And $\alpha \frac{\log L_2(\theta|\{G_t\}_{t=0}^{{\tau^*}})}{\tau^*}+(1-\alpha) \frac{\log L_2(\theta|\{G_t\}_{t=\tau^*}^{\hat{\tau}})}{\hat{\tau}-\tau^*}$ also has a unique maximum point by Equation (\ref{score3}).

For $\forall \alpha\in [\frac{c}{1-c},1-\frac{\epsilon}{1-c}]$, let $\hat{\theta}_{\alpha}=\arg\max_{\theta\in(0,1]}\left[\alpha \frac{\log L_2(\theta|\{G_t\}_{t=0}^{{\tau^*}})}{\tau^*}+(1-\alpha) \frac{\log L_2(\theta|\{G_t\}_{t=\tau^*}^{\hat{\tau}})}{\hat{\tau}-\tau^*}\right]$. By Equation (\ref{mle8})--(\ref{mle10}), we can get $\hat{\theta}_{\alpha}\stackrel{a.s.}{\longrightarrow} \theta^*_{\alpha}$, where ${\theta^*_{\alpha}}=\arg\max_{\theta\in(0,1]}[\alpha E_{\theta_1^*}(\log(P(Y;\theta))) +(1-\alpha)E_{\theta_2^*}(\log(P(Y;\theta)))]$. Furthermore, we can get the larger $\alpha$ is, the closer the $\theta^*_\alpha$ is to $\theta^*_1$. So the distance from point $\theta^*_1$ to set $\Theta_0$ is $\left|\theta^*_1-\theta^*_{\{1-\frac{\epsilon}{1-c}\}}\right|$, the distance from point $\theta^*_2$ to set $\Theta_0$ is $\left|\theta^*_2-\theta^*_{\{\frac{c}{1-c}\}}\right|$.

Let $\varepsilon=\min\left(\frac{\left|\theta^*_1-\theta^*_{\{1-\frac{\epsilon}{1-c}\}}\right|}{2},\frac{\left|\theta^*_2-\theta^*_{\{\frac{c}{1-c}\}}\right|}{2}\right)$, when $T$ is large enough, we have $\max\left(\left|\hat{\theta}_{\{1-\frac{\epsilon}{1-c}\}}-\theta^*_{\{1-\frac{\epsilon}{1-c}\}}\right|,\left|\hat{\theta}_{\{\frac{c}{1-c}\}}-\theta^*_{\{\frac{c}{1-c}\}}\right|\right)<\varepsilon$ with probability $1-o_p(1)$.
\begin{eqnarray}
\nonumber\Theta_\varepsilon=\left\{\theta: \theta\in (0,1] \text{ and } \exists \theta_\alpha \in \Theta_{0} \text{ that } \left|{\theta}-\theta^*_{\alpha}\right|<\varepsilon  \right\}.
\end{eqnarray}

It's easy to figure out that $\theta_1^*,\theta_2^*\notin \Theta_\varepsilon$ but $\hat{\theta}_{\{1-\frac{\epsilon}{1-c}\}},\hat{\theta}_{\{\frac{c}{1-c}\}}\in \Theta_\varepsilon$. Back to (\ref{score4}) and $\hat{\theta}_{1,\hat{\tau}}=\arg\max_{\theta\in(0,1]}\frac{\log L_2(\theta|\{G_t\}_{t=0}^{\hat{\tau}})}{\hat{\tau}}$, when $T$ is large enough, $\hat{\theta}_{1,\hat{\tau}}$ is closer to $\theta_2^*$ than $\hat{\theta}_{\{1-\frac{\epsilon}{1-c}\}}$ and closer to $\theta_1^*$ than $\hat{\theta}_{\{\frac{c}{1-c}\}}$ with probability $1-o_p(1)$. So we get $\hat{\theta}_{1,\hat{\tau}}\in \Theta_\varepsilon$ with probability $1-o_p(1)$.

By Theorem \ref{th4}, $\hat{\theta}_{1,\tau^*}\stackrel{a.s.}{\longrightarrow} \theta^*_1$, $\hat{\theta}_{2,\tau^*}\stackrel{a.s.}{\longrightarrow} \theta^*_2$ and $\hat{\theta}_{2,\hat{\tau}}\stackrel{a.s.}{\longrightarrow} \theta_2^*$ as $T$ tends to infinity. We get the following inequalities with probability $1-o_p(1)$ when $T$ is large enough for $\forall \varepsilon_1>0$:
\begin{eqnarray}\label{changepdf2}
\nonumber\left|\frac{\log{L_2}(\hat{\theta}_{1,\tau^*}|\{G(t)\}_{t=0}^{{\tau^*}})}{\tau^*}- E_{\theta_1^*}[\log(P(Y;\theta_1^*))]\right|\leq \varepsilon_1,\,  \left|\frac{\log{L_2}(\hat{\theta}_{1,\hat{\tau}}|\{G(t)\}_{t=0}^{{\tau}^*})}{\tau^*}-E_{\theta_1^*}[\log(P(Y;\hat{\theta}_{1,\hat{\tau}}))]\right|\leq \varepsilon_1\\
\nonumber\left|\frac{\log{L_2}(\hat{\theta}_{2,\tau^*}|\{G(t)\}_{t={\tau^*}}^{\hat{\tau}})}{\hat{\tau}-\tau^*}-E_{\theta_2^*}[\log(P(Y;\theta_2^*))]\right|\leq \varepsilon_1,\,
\left|\frac{\log{L_2}(\hat{\theta}_{1,\hat{\tau}}|\{G(t)\}_{t={\tau^*}}^{\hat{\tau}})}{\hat{\tau}-\tau^*}
-E_{\theta_2^*}[\log(P(Y;\hat{\theta}_{1,\hat{\tau}}))]\right|\leq \varepsilon_1\\
 \left|\frac{\log{L_2}(\hat{\theta}_{2,\tau^*}|\{G(t)\}_{t=\hat{\tau}}^{T})}{T-\hat{\tau}}-E_{\theta_2^*}[\log(P(Y;\theta_2^*))]\right|\leq \frac{\varepsilon_1}{2},\,\left|\frac{\log{L_2}(\hat{\theta}_{2,\hat{\tau}}|\{G(t)\}_{t=\hat{\tau}}^{T})}{T-\hat{\tau}}-E_{\theta_2^*}[\log(P(Y;\theta_2^*))]\right|\leq \frac{\varepsilon_1}{2}.
\end{eqnarray}

Let $\varepsilon_1=\min\left(\frac{E_{\theta_2^*}[\log(P(Y;\theta_2^*))]-\max_{\theta\in \Theta_\varepsilon}E_{\theta_2^*}[\log(P(Y;\theta))]}{4},\frac{E_{\theta_1^*}[\log(P(Y;\theta_1^*))]-\max_{\theta\in \Theta_\varepsilon}E_{\theta_1^*}[\log(P(Y;\theta))]}{4}\right)$. Equation (\ref{changepdf2}) implies:

\begin{eqnarray}\label{changepdf2.1}
\nonumber&&\frac{\log{L_2}(\hat{\theta}_{1,\tau^*}|\{G(t)\}_{t=0}^{{\tau^*}})}{\tau^*}- \frac{\log{L_2}(\hat{\theta}_{1,\hat{\tau}}|\{G(t)\}_{t=0}^{{\tau}^*})}{\tau^*}\geq \frac{E_{\theta_1^*}[\log(P(Y;\theta_1^*))]-\max_{\theta\in \Theta_\varepsilon}E_{\theta_1^*}[\log(P(Y;\theta))]}{2}\\
\nonumber&&\frac{\log{L_2}(\hat{\theta}_{2,\tau^*}|\{G(t)\}_{t={\tau^*}}^{\hat{\tau}})}{\hat{\tau}-\tau^*}-
\frac{\log{L_2}(\hat{\theta}_{1,\hat{\tau}}|\{G(t)\}_{t={\tau^*}}^{\hat{\tau}})}{\hat{\tau}-\tau^*}\geq
\frac{E_{\theta_2^*}[\log(P(Y;\theta_2^*))]-\max_{\theta\in \Theta_\varepsilon}E_{\theta_2^*}[\log(P(Y;\theta))]}{2}\\
&&\left|\frac{\log{L_2}(\hat{\theta}_{2,\tau^*}|\{G(t)\}_{t=\hat{\tau}}^{T})}{T-\hat{\tau}}-
\frac{\log{L_2}(\hat{\theta}_{2,\hat{\tau}}|\{G(t)\}_{t=\hat{\tau}}^{T})}{T-\hat{\tau}}\right|\leq  \varepsilon_1.
\end{eqnarray}

$\hat{\tau}-\tau^*>0$ means that
\begin{eqnarray}
\nonumber{\log{L_2}(\hat{\theta}_{1,\hat{\tau}}|\{G(t)\}_{t=0}^{\hat{\tau}})}+ \log{L_2}(\hat{\theta}_{2,\hat{\tau}}|\{G(t)\}_{t=\hat{\tau}}^T)
\geq {\log{L_2}(\hat{\theta}_{1,\tau^*}|\{G(t)\}_{t=0}^{\tau^*})}+ \log{L_2}(\hat{\theta}_{2,\tau^*}|\{G(t)\}_{t=\tau^*}^T).
\end{eqnarray}

Let $A=\left\{[{\log{L_2}(\hat{\theta}_{1,\hat{\tau}}|\{G(t)\}_{t=0}^{\hat{\tau}})}+ \log{L_2}(\hat{\theta}_{2,\hat{\tau}}|\{G(t)\}_{t=\hat{\tau}}^T]- [{\log{L_2}(\hat{\theta}_{1,\tau^*}|\{G(t)\}_{t=0}^{\tau^*})}+ \log{L_2}(\hat{\theta}_{2,\tau^*}|\{G(t)\}_{t=\tau^*}^T)]\geq 0\right\}$.
%Equations (\ref{score4})--(\ref{score4.2}) imply the log-likelihood function has a unique maximum solution. So

Further, for $\forall \epsilon>0$,
\begin{eqnarray}\label{changepdf1}
\nonumber P\left({\hat{\tau}-\tau^*}>\epsilon T \right) =P\left(
A,{\hat{\tau}-\tau^*}>\epsilon T \right).
\end{eqnarray}

Equation (\ref{changepdf2.1}) implies that $\left[{\log{L_2}(\hat{\theta}_{1,\hat{\tau}}|\{G(t)\}_{t=1}^{\hat{\tau}})}+ \log{L_2}(\hat{\theta}_{2,\hat{\tau}}|\{G(t)\}_{t=\hat{\tau}}^T\right]
<\left[{\log{L_2}(\hat{\theta}_{1,\tau^*}|\{G(t)\}_{t=1}^{\tau^*})}+ \log{L_2}(\hat{\theta}_{2,\tau^*}|\{G(t)\}_{t=\tau^*}^T\right]$ with probability $1-o_p(1)$ when $T$ is large enough. It follows that
\begin{eqnarray}
\nonumber\lim_{T\rightarrow \infty}P\left({\hat{\tau}-\tau^*}>\epsilon T \right)=0.
\end{eqnarray}
%\begin{eqnarray}
%\nonumber P\left({|\hat{\tau}-\tau_0|}>\epsilon T \right)\leq&& P\left({\log{L_2}(\hat{\theta}_{1,\hat{\tau}}|\{G(t)\}_{t=1}^{{\tau}})}-{\log{L_2}(\hat{\theta}_{1,\tau}|\{G(t)\}_{t=1}^{{\tau}})}>0\right)\\
%\nonumber&&+P\left({\log{L_2}(\hat{\theta}_{1,\hat{\tau}}|\{G(t)\}_{t={\tau}+1}^{\hat{\tau}})}-{\log{L_2}(\hat{\theta}_{2,\tau}|\{G(t)\}_{t={\tau}+1}^{\hat{\tau}})}>0\right)\\
%\nonumber&&+P\left({\log{L_2}(\hat{\theta}_{2,\hat{\tau}}|\{G(t)\}_{t=\hat{\tau}+1}^{T})}-{\log{L_2}(\hat{\theta}_{2,\tau}|\{G(t)\}_{t=\hat{\tau}+1}^{T})}>0\right)\\
% \longrightarrow&& 0 \text{ a.s. } T\to \infty.
%\end{eqnarray}

Consider $\tau^*-\hat{\tau}>\epsilon T$. Divide the time range into three intervals $[0,T]=[0,\hat{\tau}]\cup[\hat{\tau}+1,\tau^*]\cup[\tau^*+1,T]$.

The remaining steps are similar to those for $\hat{\tau}-\tau^*>\epsilon T$, when $T$ is large enough, we have
\begin{eqnarray}
\nonumber\lim_{T\rightarrow \infty}P\left({\tau^*-\hat{\tau}}>\epsilon T \right)=0.
\end{eqnarray}

%\begin{eqnarray}
%\nonumber P\left({|\hat{\tau}-\tau|}>\epsilon T \right)\leq&& P\left({\log{L_2}(\hat{\theta}_{1,\hat{\tau}}|\{G(t)\}_{t=1}^{{\hat{\tau}}})}-{\log{L_2}(\hat{\theta}_{1,\tau}|\{G(t)\}_{t=1}^{\hat{\tau}})}>0\right)\\
%\nonumber&&+P\left({\log{L_2}(\hat{\theta}_{2,\hat{\tau}}|\{G(t)\}_{t=\hat{\tau}+1}^{{\tau}})}-{\log{L_2}(\hat{\theta}_{1,\tau}|\{G(t)\}_{t=\hat{\tau}+1}^{{\tau}})}>0\right)\\
%\nonumber&&+P\left({\log{L_2}(\hat{\theta}_{2,\hat{\tau}}|\{G(t)\}_{t={\tau}+1}^{T})}-{\log{L_2}(\hat{\theta}_{2,\tau}|\{G(t)\}_{t={\tau}+1}^{T})}>0\right)\\
%\longrightarrow && 0 \text{ a.s. } T\to \infty.
%\end{eqnarray}
Finally, it follows that
\begin{eqnarray}
\nonumber\lim_{T\rightarrow \infty} P\left({|\hat{\tau}-\tau|}>\epsilon T \right)=0.
\end{eqnarray}

\end{proof}

\section*{Appendix}
In the supplementary material, we design simulations to illustrate the theoretical results and apply our method to real-life data.
The simulations for Section 3 confirm that our conclusions about the limit of the ratio of degrees from different groups are correct. Furthermore, the simulations for Section 4 show the effectiveness of our parameter estimation and change point detection methods.

We download three network datasets with category information from the Internet, WebKB-Wisconsin, BlogCatalog3 and CL-10K-1d8-L5. Through our parameter estimation method, their $\hat{\theta}$ are respectively $0.8100, 0.8470, 0.9999$. The results imply that the networks from different social platforms have different homophily, and it is helpful for the managers to construct a recommender system for a particular platform.

\section*{Acknowledgement}
This research is partially supported by National Natural Science Funds of China No.12001517 \& 72091212, "USTC Research Funds of the Double Frist-Class Initiative" YD2040002005 and "the Fundamental Research Funds for the Central Universities" WK2040000026 \& WK2040000027.
\bibliographystyle{siam}

\bibliography{sample}
\section*{Supplement}
\section{Simulations}
\subsection{The performance of $D_{T}^k$ and $S_T$}

This subsection verifies Theorems \ref{th1}--\ref{th2} and Corollary \ref{co1} by 500 simulation trials.

We design the simulation as follows:
\begin{itemize}
\item Let $T=10000$, $K=3$. The time range is $[0,10000]$. $B$ is the total number of trials. Let $B=500$.
\item The initial graph has 10 nodes, and $10p_k$ nodes are from group $k$, $k\in\{1,2,3\}$.
\item For each time $t$, a vertex-step occurs with probability $q$. In a vertex-step, the node from group $k$ arrives with probability $p_k$.
\item $D_T^k(b)$ records the total degrees of nodes from group $k$ and $S_T(b)$ records the number of edges whose nodes are from the same group in graph $G_T$ in the $b$th trial, $b\in\{1,\cdots,B\}$.
\end{itemize}
To test Theorem \ref{th1}--\ref{th2}, we fix $(p_1,p_2,p_3)=(0.5,0.3,0.2)$. Table \ref{ta1} shows the convergence of $\frac{D_T^k}{2T}$ and the effect of $q$ on the rate of convergence.

\begin{table}[H]
\caption{Fix $p_1=0.5,p_2=0.3,p_3=0.2$}
\center
%\resizebox{\linewidth}{!}{
\begin{tabular}{@{}lcrclrcl@{}}
\toprule[2pt]
&&\multicolumn{3}{c}{Bias}& \multicolumn{3}{c}{MSE} \\
%\cline{5-7}\cline{7-9}
$\theta$ &$q$ &
\multicolumn{1}{c@{}}{$\frac{\sum_{b=1}^B{D_T^1(b)}/{2T}}{B}-p_1$} &
\multicolumn{1}{c@{}}{$\frac{\sum_{b=1}^B{D_T^2(b)}/{2T}}{B}-p_2$} &
\multicolumn{1}{c}{$\frac{\sum_{b=1}^B{D_T^3(b)}/{2T}}{B}-p_3$}&
\multicolumn{1}{c@{}}{$\frac{\sum_{b=1}^B({D_T^1(b)}/{2T}-p_1)^2}{B}$} &
\multicolumn{1}{c@{}}{$\frac{\sum_{b=1}^B({D_T^2(b)}/{2T}-p_2)^2}{B}$} &
\multicolumn{1}{c}{$\frac{\sum_{b=1}^B({D_T^3(b)}/{2T}-p_3)^2}{B}$} \\
\hline
{0.9} & $0.9$ & $5.162e-04$ & $-1.030e-04$ & $8.68e-05$ & $9.658e-05$ & $8.3835e-05$ &$6.2627e-05$\\
          & $\frac{1}{2-\theta}$  & $-3.727e-04$   & $3.084e-04$ & $5.643e-04$  & $1.0267e-04$ & $8.6737e-05$&$6.6678e-05$\\
          & $0.1$   & $5.3532e-03$   &$ -2.7901e-03$& $-2.0631e-03$  & $0.0123$ & $9.8023e-03$ &$7.5445e-03$  \\[6pt]
{0.5} & $0.9$ & $-3.62e-05$ & $1.149e-04$ & $4.213e-04$ & $5.2097e-05$ & $4.4369e-05$ &$3.1959e-05$\\
          & $\frac{1}{2-\theta}$  & $ 5.17e-05$  & $1.618e-04$ & $2.865e-04$  & $1.3921e-04$ & $1.1696e-04$&$9.4448e-05$\\
          & $0.1$   & $-5.707e-04 $   &$ 6.5745e-03$& $-5.5038e-03$  & $0.0119$ & $0.0104$ &$7.6058e-03$  \\[6pt]
{0.3} & $0.9$ & $2.427 e-04$ & $-4.34e-04$ & $6.913e-04$ & $ 4.0563e-05$ & $3.5222e-05$ &$2.4632e-05$\\
          & $\frac{1}{2-\theta}$  & $8.325e-04$   & $-3.88e-05 $ & $-2.937e-04$  & $1.5919e-04$ & $1.4999e-04$&$1.0543e-04$\\
          & $0.1$   & $5.3489e-03$   &$ 4.5799e-03$& $-9.4288e-03$  & $0.0125$ & $0.01067$ &$7.5869e-03$  \\[6pt]
{0.1} & $0.9$ & $3.584e-04$ & $1.423e-04$ & $-7e-07$ & $3.3276e-05$ & $2.7255e-05$ &$2.2627e-05$\\
          & $\frac{1}{2-\theta}$  & $1.327 e-04$   & $-1.864 e-04 $ & $5.537e-04$  & $1.6958 e-04$ & $1.3811 e-04$&$1.22e-04$\\
          & $0.1$   & $3.3561e-03$   &$ 5.409e-04$& $-3.397e-03$  & $0.0109$ & $9.7896e-03$ &$7.2384e-03$  \\[6pt]
\toprule[2pt]
\end{tabular}
%}
\label{ta1}
\end{table}

To test the convergence of $\frac{S_T}{T}$ by Corollary \ref{co1}, we fix $q=0.5$ in Table \ref{ta2}. Let $S^*=1+\theta(\sum^K_{k=1}p_k^2-1)$.

\begin{table}[H]
\caption{Fix $q=0.5$}
\center
%\resizebox{\linewidth}{!}{
%&&\multicolumn{3}{c}{Bias}& \multicolumn{3}{c}{MSE} \\
\begin{tabular}{@{}lcccccccc@{}}
\toprule[2pt]
%\cline{5-7}\cline{7-9}
&&\multicolumn{1}{c}{Parameter}&\multicolumn{1}{c}{Estimator}&\multicolumn{1}{c}{Bias}& \multicolumn{1}{c}{MSE} \\
$\theta$ &$(p_1,p_2,p_3)$ &
\multicolumn{1}{c}{$S^*$} &
\multicolumn{1}{c}{$\frac{\sum_{b=1}^B S_T(b)/T}{B}$} &
\multicolumn{1}{c}{$\frac{\sum_{b=1}^BS_T(b)/T}{B}-S^*$}&
\multicolumn{1}{c}{$\frac{\sum_{b=1}^B({S_T(b)}/{T}-S^*)^2}{B}$}  \\
\hline
{0.9} & $(0.8,0.1,0.1)$ & $0.694$ & $ 0.6957$ & $1.6694e-03$ & $9.1195e-04$ \\
          & $(0.6,0.2,0.2)$  & $0.496$   & $0.4987$ & $2.6758e-03$  & $5.0171e-04$ \\
          & $(0.2,0.4,0.4)$   & $0.424$   &$ 0.4258$& $1.7642e-03$  & $1.0084e-04$ \\[6pt]
{0.5} & $(0.8,0.1,0.1)$ & $0.83$ & $ 0.8303$ & $2.73e-04$ & $1.6808e-04$ \\
          & $(0.6,0.2,0.2)$  & $ 0.72$  & $0.7204$ & $3.764e-04$  & $1.0823e-04$\\
          & $(0.2,0.4,0.4)$   & $0.68$   &$ 0.6806$& $5.794e-04$  & $3.4297e-05$ \\[6pt]
{0.3} & $(0.8,0.1,0.1)$ & $0.898$ & $0.8985$ & $4.988e-04$ & $5.976e-05$ \\
          & $(0.6,0.2,0.2)$  & $0.832$   & $0.8325$ & $5.488e-04$  & $3.361e-05$ \\
          & $(0.2,0.4,0.4)$   & $0.808$   &$ 0.8085$& $5.072e-04$  & $2.011e-05$ \\[6pt]
{0.1} & $(0.8,0.1,0.1)$ & $0.966$ & $ 0.9658$ & $ -2.1e-04$ & $7.5024e-06$ \\
          & $(0.6,0.2,0.2)$  & $0.944$   & $0.9439$ & $ -1.056e-04$  & $7.6358e-06$ \\
          & $(0.2,0.4,0.4)$   & $0.936$   &$ 0.9361$& $9.9e-05$  & $6.6796e-06$ \\[6pt]
\toprule[2pt]
\end{tabular}
%}
\label{ta2}
\end{table}

\subsection{Estimators of parameters with historical information}
This subsection is to verify Theorems \ref{th4}--\ref{th5} and Corollary \ref{co2}.
The conclusion about $\hat{q},\{\hat{p}_k\}_{k=1}^K$ is apparent, so we use 500 simulation trials to test the convergence of $\hat{\theta}$.
We design the simulation as follows:
\begin{itemize}
    \item Let $T=10000,K=3$. The time range is $[0,10000]$. $B$ is the total number of trials. Let $B=500$.
\item The initial graph has 10 nodes, and $10p_k$ nodes are from group $k$,  $k\in\{1,2,3\}$.
\item For each time $t$, a vertex-step occurs with probability $q$.
In a vertex-step, the node from group $k$ arrives with probability $p_k$. And we record the $v_t$, $(g_w,g_u)$ at each time $t\in [0,T]$.
\item Gain $D_t^k$ by calculating the total degrees of nodes from group $k$ in graph $G_t$.
\item Construct the maximum likelihood equation by $\{(\{D_t^k\}_{k=1}^K,v_t,g_w,g_u)\}^T_{t=1}$. Record the estimator $\hat{\theta}(b)$ and $\hat{\Sigma}_{11}(b)$  in the $b$th trial, $b\in\{1,\cdots,B\}$.
\item According to Theorem \ref{th5} and Corollary \ref{co2}, construct the 95\% confidence interval (CI) for the parameters $\hat{\theta}$ and $\frac{\hat{\theta}-\theta}{\hat{\Sigma}^{1/2}_{11}}$. Calculate the percentage of the estimators in 500 trials that fall in the confidence interval. The results of $\hat{\theta}$ and $\frac{\hat{\theta}-\theta}{\hat{\Sigma}^{1/2}_{11}}$ are recorded in Table \ref{CI1}--\ref{CI2}.
\end{itemize}

\begin{table}[H]
\caption{}
\center
%\resizebox{\linewidth}{!}{
%&&\multicolumn{3}{c}{Bias}& \multicolumn{3}{c}{MSE} \\
\begin{tabular}{@{}lcccccccc@{}}
\toprule[2pt]
%\cline{5-7}\cline{7-9}
&&\multicolumn{1}{c}{Estimator}&\multicolumn{1}{c}{Bias}&\multicolumn{1}{c}{MSE}&\multicolumn{1}{c}{Cover rate}\\
$\theta$ &$(p_1,p_2,p_3,q)$ &
\multicolumn{1}{c}{$\frac{\sum_{b=1}^B\hat{\theta}(b)}{B}$} &
\multicolumn{1}{c}{$\frac{\sum_{b=1}^B\hat{\theta}(b)}{B}-\theta$}&
\multicolumn{1}{c}{$\frac{\sum_{b=1}^B(\hat{\theta}(b)-\theta)^2}{B}$} &
\multicolumn{1}{c}{$\frac{\sum_{b=1}^B\mathbf{1}_{\{\hat{\theta}(b) \in 95\% \text{CI}\}}}{B}$} \\
\hline
{0.9} & $(0.8,0.1,0.1,0.8)$ & $0.8999$ & $ -7.1037e-05$ & $8.1465e-05$ & $0.95$\\
          & $(0.8,0.1,0.1,0.2)$  & $0.9000$   & $-3.4965e-05$ & $ 8.4753e-05$  & $ 0.94$\\
          & $(0.2,0.4,0.4,0.2)$   & $0.9005$   &$ 5.3677e-04$& $5.6645e-05$  & $0.96$ \\[6pt]
{0.5} & $(0.8,0.1,0.1,0.8)$ & $0.5002$ & $2.3398e-04$ & $9.5765e-05$ & $0.96$ \\
          & $(0.8,0.1,0.1,0.2)$  & $ 0.4997$  & $-3.2257e-04$ & $1.0074e-04$  & $0.96$\\
          & $(0.2,0.4,0.4,0.2)$   & $0.5004$   &$4.1498e-04$& $6.0659e-05$  & $0.945$ \\[6pt]
{0.3} & $(0.8,0.1,0.1,0.8)$ & $0.3000$ & $-4.8493e-05$ & $7.3418e-05$ & $0.952$ \\
          & $(0.8,0.1,0.1,0.2)$  & $0.2998$   & $-1.9043e-04$ & $7.6047e-05$ &$0.952$ \\
          & $(0.2,0.4,0.4,0.2)$   & $0.3002$   &$ 2.1887e-04$& $4.2600e-05$  & $0.944$ \\[6pt]
{0.1} & $(0.8,0.1,0.1,0.8)$ & $ 0.10007$ & $ 6.6986e-05$ & $ 2.8706e-05$ & $0.944$\\
          & $(0.8,0.1,0.1,0.2)$  & $0.0997$   & $-3.3447e-04$ & $3.1154e-05$  & $0.948$ \\
          & $(0.2,0.4,0.4,0.2)$   & $0.0998$   &$ -1.9246e-04$& $1.3169e-05$  & $0.964$ \\[6pt]
\toprule[2pt]
\end{tabular}
%}
\label{CI1}
\end{table}

\begin{table}[H]
\caption{}
\center
%\resizebox{\linewidth}{!}{
%&&\multicolumn{3}{c}{Bias}& \multicolumn{3}{c}{MSE} \\
\begin{tabular}{@{}lcccccccc@{}}
\toprule[2pt]
%\cline{5-7}\cline{7-9}
&&\multicolumn{1}{c}{Bias}&\multicolumn{1}{c}{MSE}&\multicolumn{1}{c}{Cover rate}\\
$\theta$ &$(p_1,p_2,p_3,q)$ &
\multicolumn{1}{c}{$\sum_{b=1}^B\frac{{(\hat{\theta}(b)-\theta)}/{\hat{\Sigma}_{11}^{1/2}(b)}}{B}$}&
\multicolumn{1}{c}{$\sum_{b=1}^B\frac{{\left(\hat{\theta}(b)-\theta\right)^2}/{\hat{\Sigma}_{11}(b)}}{B}$} &
\multicolumn{1}{c}{$\frac{\sum_{b=1}^B\mathbf{1}_{\{(\hat{\theta}(b)-\theta)/{\hat{\Sigma}_{11}^{1/2}(b)} \in 95\% \text{CI}\}}}{B}$} \\
\hline
{0.9} & $(0.8,0.1,0.1,0.8)$ & $ -0.0049$ & $ 0.9884$  & $ 0.948$\\
& $(0.8,0.1,0.1,0.2)$& $ -0.1044$ & $0.9781$ & $0.944$\\
          & $(0.2,0.4,0.4,0.2)$ &$  -0.0071$& $0.9223$  & $0.956$ \\[6pt]
{0.5} & $(0.8,0.1,0.1,0.8)$  & $ -0.0618$ & $1.0227$ & $0.948$ \\
          & $(0.8,0.1,0.1,0.2)$   & $0.1187$ & $1.1329$  & $0.948$\\
          & $(0.2,0.4,0.4,0.2)$   &$ -0.0261$& $0.9400$  & $0.945$ \\[6pt]
{0.3} & $(0.8,0.1,0.1,0.8)$ & $-0.0402$ & $0.9931$ & $0.955$ \\
          & $(0.8,0.1,0.1,0.2)$   & $0.0701$ & $1.0650$ &$0.945$ \\
          & $(0.2,0.4,0.4,0.2)$   &$ -0.1514$& $0.9968$  & $0.95$ \\[6pt]
{0.1} & $(0.8,0.1,0.1,0.8)$ & $ 0.0265$ & $ 0.9625$ & $0.965$\\
          & $(0.8,0.1,0.1,0.2)$ & $0.0470$ & $0.9788$  & $0.965$ \\
          & $(0.2,0.4,0.4,0.2)$ &$ 0.0725$& $1.0075$  & $ 0.955$ \\[6pt]
\toprule[2pt]
\end{tabular}
%}
\label{CI2}
\end{table}

\subsection{Snapshot}
This subsection is to verify Theorem \ref{th6}, the convergence of snapshot estimation with the single graph $G_T$ . What is more, we compare effect of the estimator of snapshot estimation with the general MLE by 500 trials.

\begin{itemize}
    \item Let $T=10000,K=3$. The time range is $[0,10000]$. $B$ is the total number of trials. Let $B=500$.
\item The initial graph has 10 nodes, and $10p_k$ nodes are from group $k$, $k\in\{1,2,3\}$.
\item For each time $t$, a vertex-step occurs with probability $q$. In a vertex-step, the node from group $k$ arrives with probability $p_k$. We record the $v_t$ and $(g_w,g_u)$ at each time $t\in [0,T]$.
\item Gain $D_t^k$ by calculating the total degrees of nodes from group $k$ in the graph $G_t$.
\item Construct the maximum likelihood estimation (Equation (\ref{mle1})) by $\{(\{D_t^k\}_{k=1}^K,v_t,g_w,g_u)\}_{t=1}^T$. Record the estimator of $\theta$ as ${\theta}_{mle}(b)$ in $b$th trial, $b\in\{1,\cdots,B\}$.
\item Construct the snapshot estimation (Equation (\ref{smle1})) by $(\{D_T^k\}_{k=1}^K,E_T,E_T^{k,1},E_T^{k,0})$. Record the estimator of $\theta$ as ${\theta}_{snap}(b)$ in $b$th trial, $b\in\{1,\cdots,B\}$.
\item Calculate the mean absolute error (MAE) and the mean square error (MSE) for $\hat{\theta}_{mle}=\frac{\sum_{b=1}^B \theta_{mle}(b)}{B}$ and $\hat{\theta}_{snap}=\frac{\sum_{b=1}^B \theta_{snap}(b)}{B}$.
\end{itemize}

\begin{table}[H]
\caption{Fix $p_1=0.5,p_2=0.3,p_3=0.2$}
\center
%&&\multicolumn{3}{c}{Bias}& \multicolumn{3}{c}{MSE} \\
\begin{tabular}{@{}lcccccccc@{}}
\toprule[2pt]
%\cline{5-7}\cline{7-9}
&&\multicolumn{2}{c}{Estimator}&\multicolumn{2}{c}{MAE}&\multicolumn{2}{c}{MSE}\\
$\theta$ &$q$ &
\multicolumn{1}{c}{$\hat{\theta}_{mle}$} &
\multicolumn{1}{c}{$\hat{\theta}_{snap}$}&
\multicolumn{1}{c}{$\frac{\sum_{b=1}^B |{\theta}_{mle}(b)-\theta|}{B}$}&
\multicolumn{1}{c}{$\frac{\sum_{b=1}^B |{\theta}_{snap}(b)-\theta|}{B}$} &
\multicolumn{1}{c}{$\frac{\sum_{b=1}^B |{\theta}_{mle}(b)-\theta|^2}{B}$}&
\multicolumn{1}{c}{$\frac{\sum_{b=1}^B |{\theta}_{snap}(b)-\theta|^2}{B}$} \\
\hline
{0.9} & $0.9$ & $0.9006$ & $ 0.9004$ & $0.0103$ & $0.0104$ & $1.6321e-04$ & $1.7057e-04$ \\
          & $0.5$  & $0.8995$   & $0.8985$ & $ 9.7439e-03$  & $0.0104$& $1.5096e-04$&$ 1.7223e-04$\\
          & $0.1$   & $0.9003$   &$ 0.8980$& $0.0108$  & $0.0117$&$1.9355e-04$&$2.2941e-04$ \\[6pt]
{0.5} & $0.9$ & $0.5002$ & $0.5003$ & $6.1585e-03$ & $6.2225e-03$ & $5.8738e-05$ & $6.0042e-05$\\
          & $0.5$  & $ 0.4997$  & $0.4994$ & $5.8597e-03$  & $5.9928e-03$& $5.2679e-05$ &$5.4898e-05$\\
          & $0.1$   & $0.5001$   & $0.4998$  & $6.6218e-03$ &$6.8167e-03$ &$6.6155e-05$& $7.1367e-05$ \\[6pt]
{0.3} & $0.9$ & $0.3$ & $0.3$ & $4.7002e-03$ & $4.7281e-03$ &$3.3735e-05$ &$3.4222e-05$ \\
          & $0.5$  & $0.3002$  & $0.3000$  & $4.927e-03$ & $4.9416e-03$ &$3.7533e-05$ & $3.8268e-05$ \\
          & $0.1$   & $0.2999$   &$ 0.2986$& $4.9432e-03$  & $5.197e-03$ &$3.8996e-05$&$4.3797e-05$\\[6pt]
{0.1} & $0.9$ & $0.0999$ & $ 0.0997$ & $6.1567e-03$ & $6.2798e-03$&$5.9996e-05$&$6.1705e-05$ \\
          & $0.5$  & $0.1$   & $0.0997$ & $5.5049e-03$& $5.8268e-03$  &$4.8073e-05$ &$5.4660e-05$ \\
          & $0.1$   & $0.0996$   &$ 0.0980$& $6.6773e-03$  & $7.2647e-03$&$7.4653e-05$&$8.9782e-05$ \\[6pt]
\toprule[2pt]
\end{tabular}
\label{snap}
\end{table}

\subsection{Change point}
This subsection is to verify Theorem \ref{changep}, the method of detecting the change point $\tau$ in the range $[0,T]$ in 500 trials.

\begin{itemize}
    \item Let $K=3$. $T=1000,2000$. The time range is $[0,T]$. $B$ is the total number of trials. Let $B=500$.
\item The initial graph has 10 nodes, and $10 p_k$ nodes are from group $k$, $k\in\{1,2,3\}$.
\item For each time $t$, a vertex-step occurs with probability $q$. In a vertex-step, the node from group $k$ arrives with probability $p_k$.
\item  Set $c_0=\frac{t_0}{T}=0.2\, ,\frac{\tau}{T}\in \{0.25,0.5,0.75\}$. Let the homophily parameter be $\theta_1$ before the change point and $\theta_2$ after the change point.
\item We record the $v_t$ and $(g_w,g_u)$ for each time $t\in [0,T]$. Gain $D_t^k$ by calculating the total degrees of nodes from group $k$ in the graph $G_t$.
\item By the Equation (\ref{changepoint}) and get the estimator $\tau(b)$ and $\theta_1(b),\theta_2(b)$ in the $b$th trial, $b\in\{1,\cdots,B\}$.
\item Tables \ref{change}--\ref{change2} show the result of $\hat{\tau}=\frac{\sum_{b=1}^B\tau(b)}{B}$, $\hat{\theta}_1=\frac{\sum_{b=1}^B \theta_1(b)}{B}$, $\hat{\theta}_2=\frac{\sum_{b=1}^B \theta_2(b)}{B}$.
\end{itemize}

\begin{table}[H]
\caption{Fix $T=1000,p_1=0.5,p_2=0.3,p_3=0.2$}
\center
%&&\multicolumn{3}{c}{Bias}& \multicolumn{3}{c}{MSE} \\
%\resizebox{\linewidth}{!}{
\begin{tabular}{@{}lccccccccc@{}}
\toprule[2pt]
%\cline{5-7}\cline{7-9}
&&\multicolumn{4}{c}{Estimator}&\multicolumn{3}{c}{MSE}\\
$\tau$ &$\theta_1,\theta_2$ &
\multicolumn{1}{c}{$\hat{\tau}$} &
\multicolumn{1}{c}{$\frac{\hat{\tau}-\tau}{T}$} &
\multicolumn{1}{c}{$\hat{\theta}_1$}&
\multicolumn{1}{c}{$\hat{\theta}_2$}&
\multicolumn{1}{c}{$\frac{\sum_{b=1}^B[({\tau}(b)-\tau)/{T}]^2}{B}$}&
\multicolumn{1}{c}{$\frac{\sum_{b=1}^B({\theta_1}(b)-\theta_1)^2}{B}$}&
\multicolumn{1}{c}{$\frac{\sum_{b=1}^B({\theta_2}(b)-\theta_2)^2}{B}$}\\
\hline
{250} & $0.1,0.9$ & $250.55$ & $5.5e-04$ & $0.1007$ & $0.9033$ & $1.048e-05$ & $6.7403e-04$ &5.88e-04 \\
          & $0.4,0.6$  & $278.07$   & $0.0281$ & $ 0.3861$  & $0.6025$& $0.0196$&$ 4.6399e-03$&3.9222e-03\\
{500} & $0.1,0.9$ & $500.775$ & $7.75e-04$ & $0.0991$ & $0.8989$ & $1.1755e-05$ & $2.972e-04$&$1.2449e-03$\\
          & $0.4,0.6$  & $ 486.43$  & $-0.0136$ & $0.3893$  & $0.6002$& $0.0115$ &$2.8825e-03$&$3.3877e-03$\\
{750} & $0.1,0.9$ &$750.04$& $4e-05$ & $0.1018$ & $0.8946$ & $1.218e-05$ &$1.9772e-04$ &$2.5049e-03$ \\
          & $0.4,0.6$  & $736.53$  & $-0.01347$  & $0.3923$ & $0.6215$ &$0.0122$ & $1.2203e-03$& $3.7263e-03$ \\
\toprule[2pt]
\end{tabular}
%
%}
\label{change}
\end{table}

\begin{table}[H]
\caption{Fix $T=2000,p_1=0.5,p_2=0.3,p_3=0.2$}
\center
%&&\multicolumn{3}{c}{Bias}& \multicolumn{3}{c}{MSE} \\
%\resizebox{\linewidth}{!}{
\begin{tabular}{@{}lccccccccc@{}}
\toprule[2pt]
%\cline{5-7}\cline{7-9}
&&\multicolumn{4}{c}{Estimator}&\multicolumn{3}{c}{MSE}\\
$\tau$ &$\theta_1,\theta_2$ &
\multicolumn{1}{c}{$\hat{\tau}$} &
\multicolumn{1}{c}{$\frac{\hat{\tau}-\tau}{T}$} &
\multicolumn{1}{c}{$\hat{\theta}_1$}&
\multicolumn{1}{c}{$\hat{\theta}_2$}&
\multicolumn{1}{c}{$\frac{\sum_{b=1}^B[({\tau}(b)-\tau)/{T}]^2}{B}$}&
\multicolumn{1}{c}{$\frac{\sum_{b=1}^B({\theta_1}(b)-\theta_1)^2}{B}$}&
\multicolumn{1}{c}{$\frac{\sum_{b=1}^B({\theta_2}(b)-\theta_2)^2}{B}$}\\
\hline
{500} & $0.1,0.9$ & $500.59$ & $2.95e-04$ & $0.1007$ & $0.9016$ & $2.3525e-06$ & $2.7622e-04$ &4.0328e-04 \\
          & $0.4,0.6$  & $497.55$   & $-1.225e-03$ & $ 0.3896$  & $0.6005$& $2.5461e-03$&$ 2.1478e-03$&4.3125e-04\\
{1000} & $0.1,0.9$ & $1000.23$ & $1.15e-04$ & $0.1004$ & $0.9016$ & $1.6275e-06$ & $1.2538e-04$&$6.2472e-04$\\
          & $0.4,0.6$  & $ 1004.2$  & $2.1e-03$ & $0.3989$  & $0.6044$& $9.1024e-04$ &$5.3224e-04$&$5.7490e-04$\\
{1500} & $0.1,0.9$ &$1501.03$& $5.15e-04$ & $0.1016$ & $0.9018$ & $2.1325e-06$ &$1.0825e-04$ &$1.1565e-03$ \\
          & $0.4,0.6$  & $1497.04$  & $-1.48e-03$  & $0.3996$ & $0.6036$ &$3.4314e-03$ & $4.4038e-04$& $1.5538e-03$ \\
\toprule[2pt]
\end{tabular}
%
%}
\label{change2}
\end{table}

\section{Data application}

We selected three real network datasets to test our KPA model and estimation methods. These network datasets all have group labels or category labels of nodes. However, none of the datasets recorded the timestamp data for edge connections. We
summarize the datasets below:

\begin{itemize}

  \item WebKB-Wisconsin

  WebKB is a webpage dataset collected from the computer science departments of various
universities (Cornell, Texas, Washington, and Wisconsin) by Carnegie Mellon University. WebKB-Wisconsin is the sub dataset from the University of Wisconsin. We obtained the network form of WebKB-Wisconsin from the web NetworkRepository (\cite{nr}).
\href{https://networkrepository.com/webkb-wisc.php}{https://networkrepository.com/webkb-wisc.php}.
For the network data, nodes represent web pages, and edges are hyperlinks between them. The web pages are manually classified
into five categories: student, project, course, staff, and faculty. Since some categories have too few nodes, we consider merging them. Let the nodes labeled ``student'' or ``faculty'' be in the same group. The remaining nodes are assigned to other groups.

  \item BlogCatalog3

  BlogCatalog3 is a dataset crawled from BlogCatalog (\href{http://datasets.syr.edu/pages/datasets.html}{http://datasets.syr.edu/pages/datasets.html}). BlogCatalog is a social blog directory website that manages bloggers and blogs. Both the contact network and category information are included. A blogger's interests can be gauged by the categories they publish their blogs, and each blogger could list their blog under more than one category. This network dataset has 10312 nodes, 333983 edges, and 39 categories. Since a blog may have several categories simultaneously, we only keep the first one in the given order. The nodes are divided into 39 groups based on category labels. We select four groups with at least 700 members to construct a sub-network of BlogCatalog3.

  \item CL-10K-1d8-L5

  CL-10K-1d8-L5 is a network dataset with group information from the web Network Repository (\cite{nr}).

\href{https://networkrepository.com/CL-10K-1d8-L5.php}{https://networkrepository.com/CL-10K-1d8-L5.php}.

\end{itemize}

The basic information about these network datasets is shown in Table \ref{data1}.
\begin{table}[H]
\centering
\caption{Information about network datasets}\label{data1}%添加标题 设置标签
%\resizebox{\linewidth}{!}{
\begin{tabular}{|l|c|c|c|}
\toprule[2pt]
Name & WebKB-Wisconsin&BlogCatalog3 &CL-10K-1d8-L5 \\
\hline
Number of nodes& 265& 3760 &10000 \\
Number of groups & 2 &4 &5\\
Number of nodes from group 1& 198 &759 &2000\\
Number of nodes from group 2& 67 & 843 &2000\\
Number of nodes from group 3& $\backslash$ & 735 &2000\\
Number of nodes from group 4& $\backslash$ & 1423 &2000\\
Number of nodes from group 5& $\backslash$ & $\backslash$&2000 \\
Number of edges &530 & 59893 &  44896\\
Total degrees from group 1 & 769 &  22316 &18965\\
Total degrees from group 2 & 291& 28383 &  17381\\
Total degrees from group 3 & $\backslash$ & 26879 & 17309\\
Total degrees from group 4 & $\backslash$ & 42208 & 17949\\
Total degrees from group 5 & $\backslash$ & $\backslash$ & 18188\\
The estimator $\hat{\theta}$&  0.8100&  0.8470 & 0.9999\\
\bottomrule
\end{tabular}
%}
%\caption{这是一张三线表}\label{tab:aStrangeTable}  标题放在这里也是可以的
\end{table}

By the snapshot estimation method, we can get the estimators of parameters. We exhibit the estimation of $\theta$ in Table \ref{data1}. It shows that the $\hat{\theta}$ of WebKB-Wisconsin and BlogCatalog3 is less than 1. Both networks have homophily: their nodes bond with others in their group at a higher rate. $\hat{\theta}$ of CL-10K-1d8-L5 is very close to 1, which implies that there is no difference between the connection of nodes from the same group and different groups.

Further, rich-get-richer is another essential mechanism of the KPA model. Figures \ref{F4}--\ref{F8} show the power-law degree distribution of the three datasets.

\begin{figure}[H]
\centering
\subfigure[ Histogram of degree distribution]{
\begin{minipage}[t]{0.5\linewidth}
\centering
\includegraphics[width=2.5in]{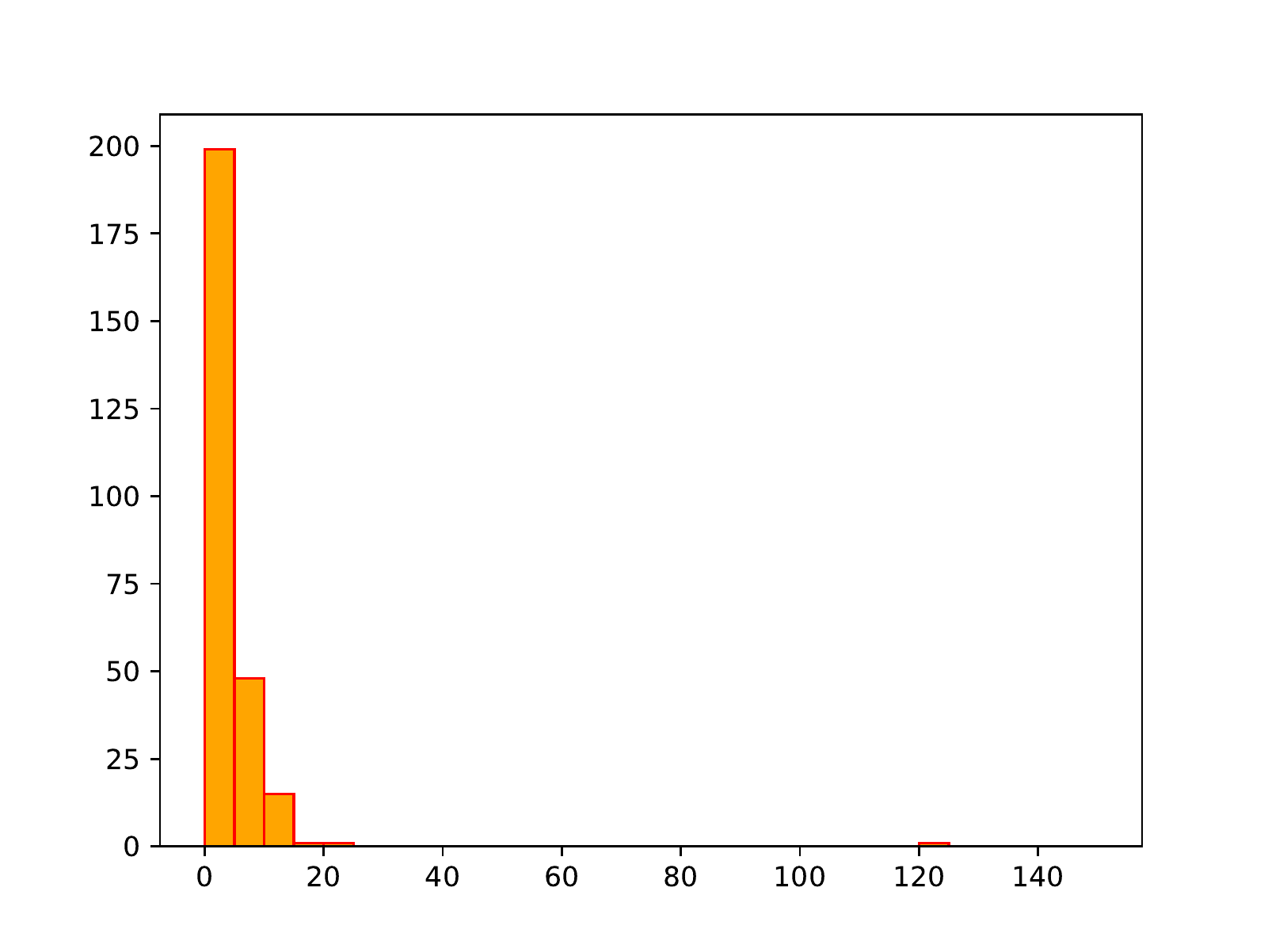}
%\caption{fig1}
\end{minipage}%
}%
\subfigure[Orange denotes group 1 and blue denotes group 2]{
\begin{minipage}[t]{0.5\linewidth}
\centering
\includegraphics[width=2.5in]{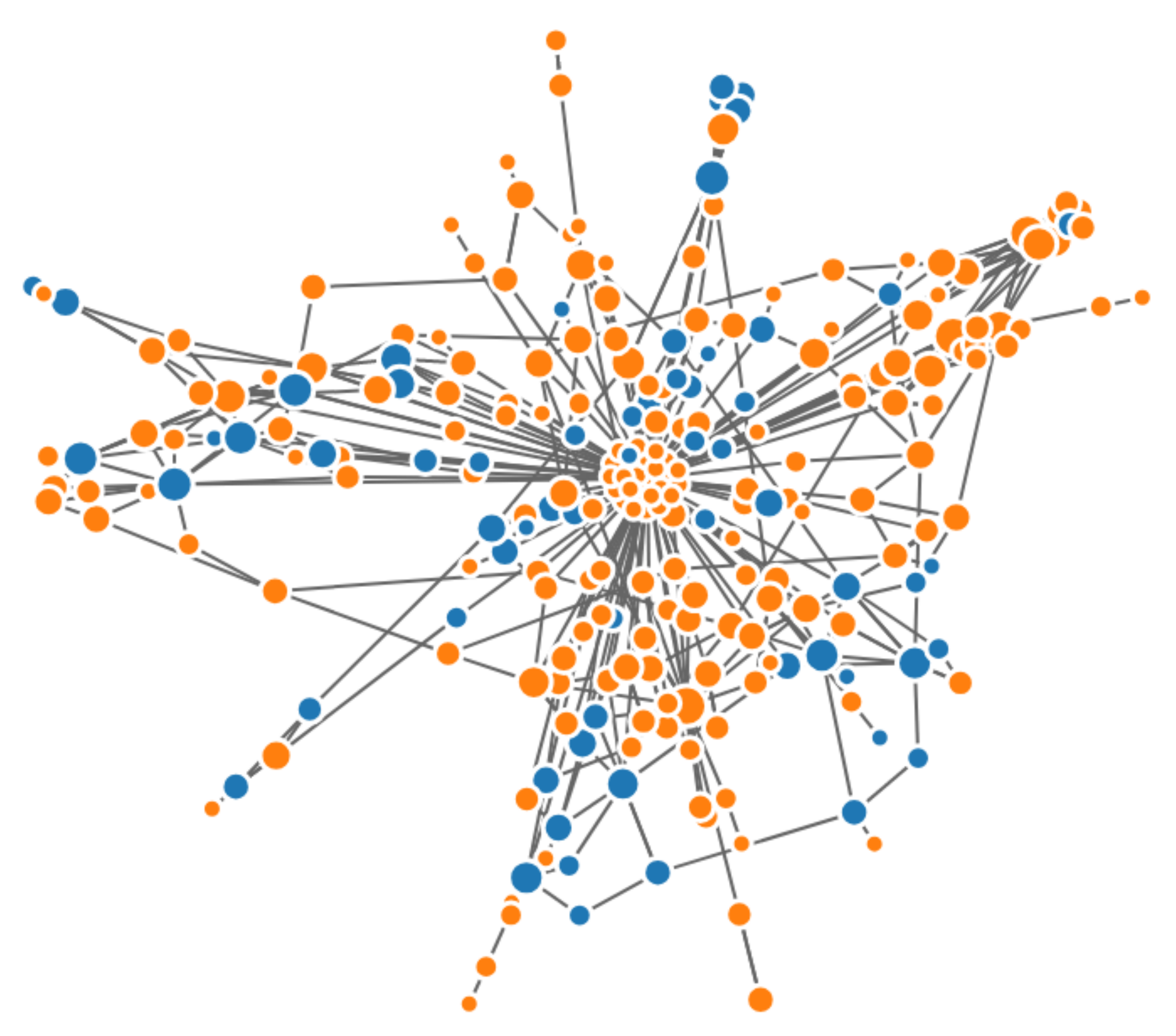}
%\caption{fig2}
\end{minipage}%
}%
       %这个回车键很重要 \quad也可以
\centering
\caption{Network graph of WebKB-Wisconsin}
\label{F4}
\end{figure}

\begin{figure}[H]
\centering
\subfigure[Histogram of degree distribution]{
\begin{minipage}[t]{0.5\linewidth}
\centering
\includegraphics[width=2.5in]{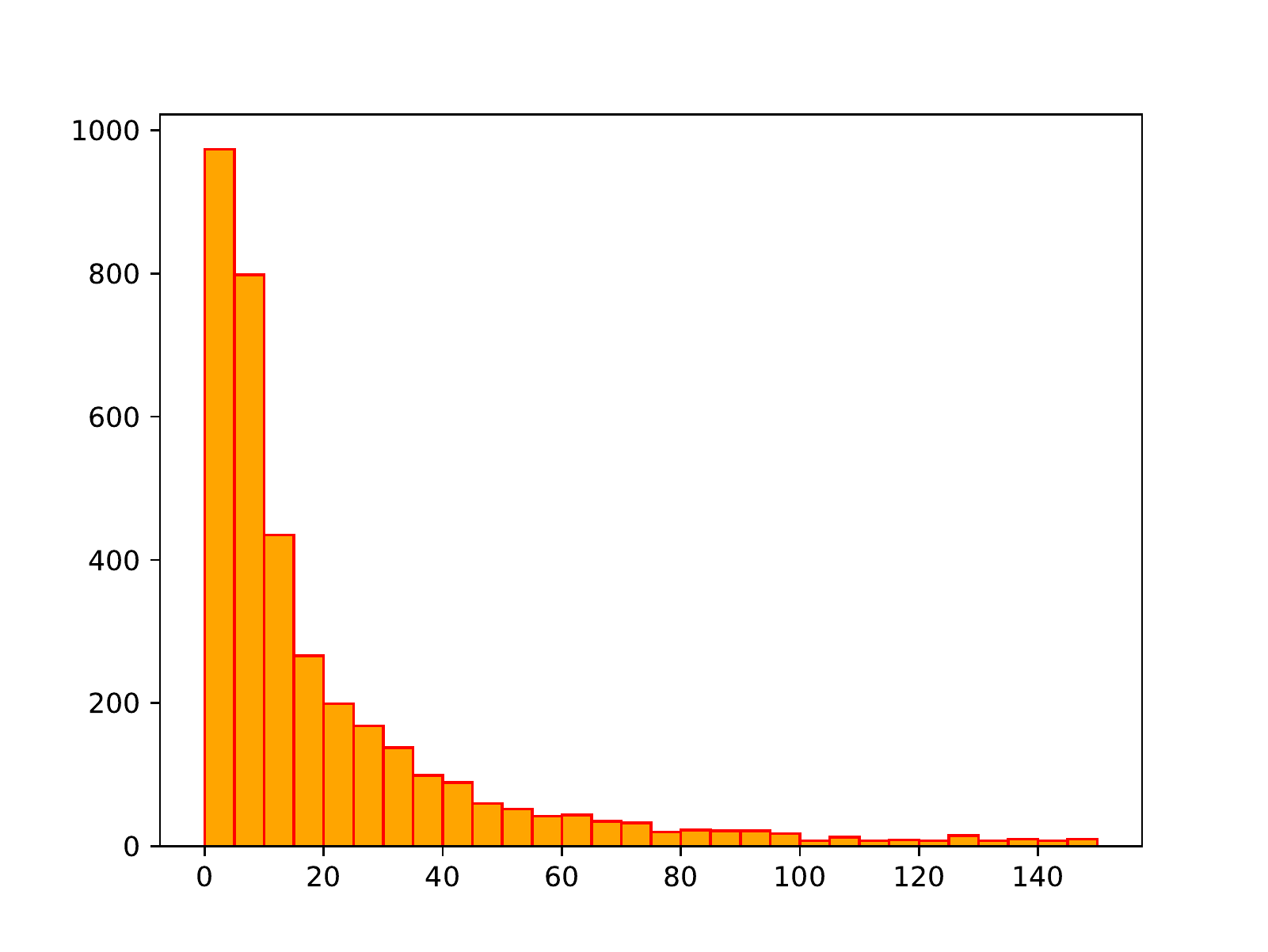}
%\caption{fig1}
\end{minipage}%
}%
\subfigure[log-log scale of degree distribution]{
\begin{minipage}[t]{0.5\linewidth}
\centering
\includegraphics[width=2.5in]{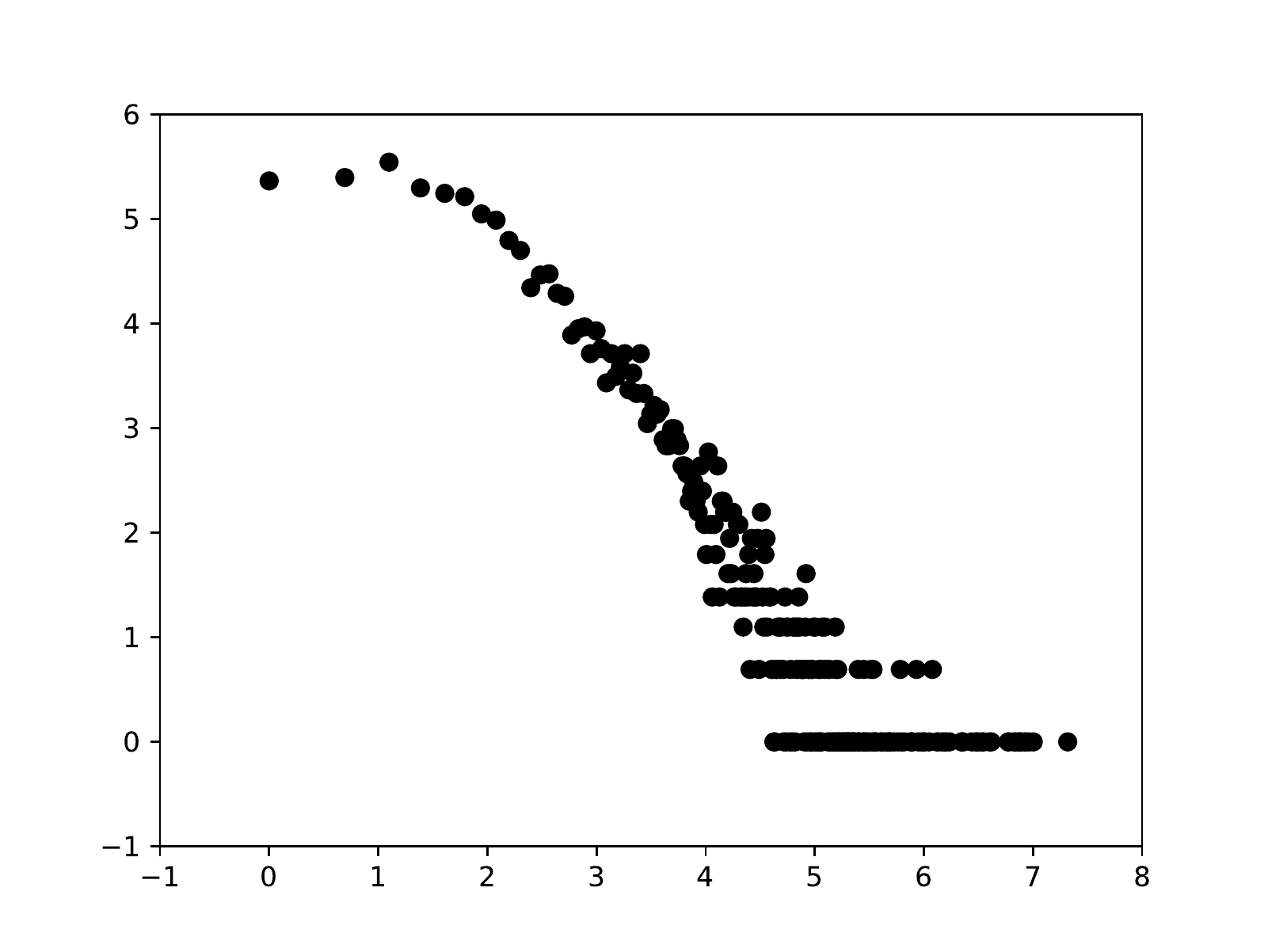}
%\caption{fig2}
\end{minipage}%
}%
       %这个回车键很重要 \quad也可以
\centering
\caption{Degree distribution of BlogCatalog3}
\label{F5}
\end{figure}
\begin{figure}[H]
\centering
\subfigure[Histogram of degree distribution]{
\begin{minipage}[t]{0.5\linewidth}
\centering
\includegraphics[width=2.5in]{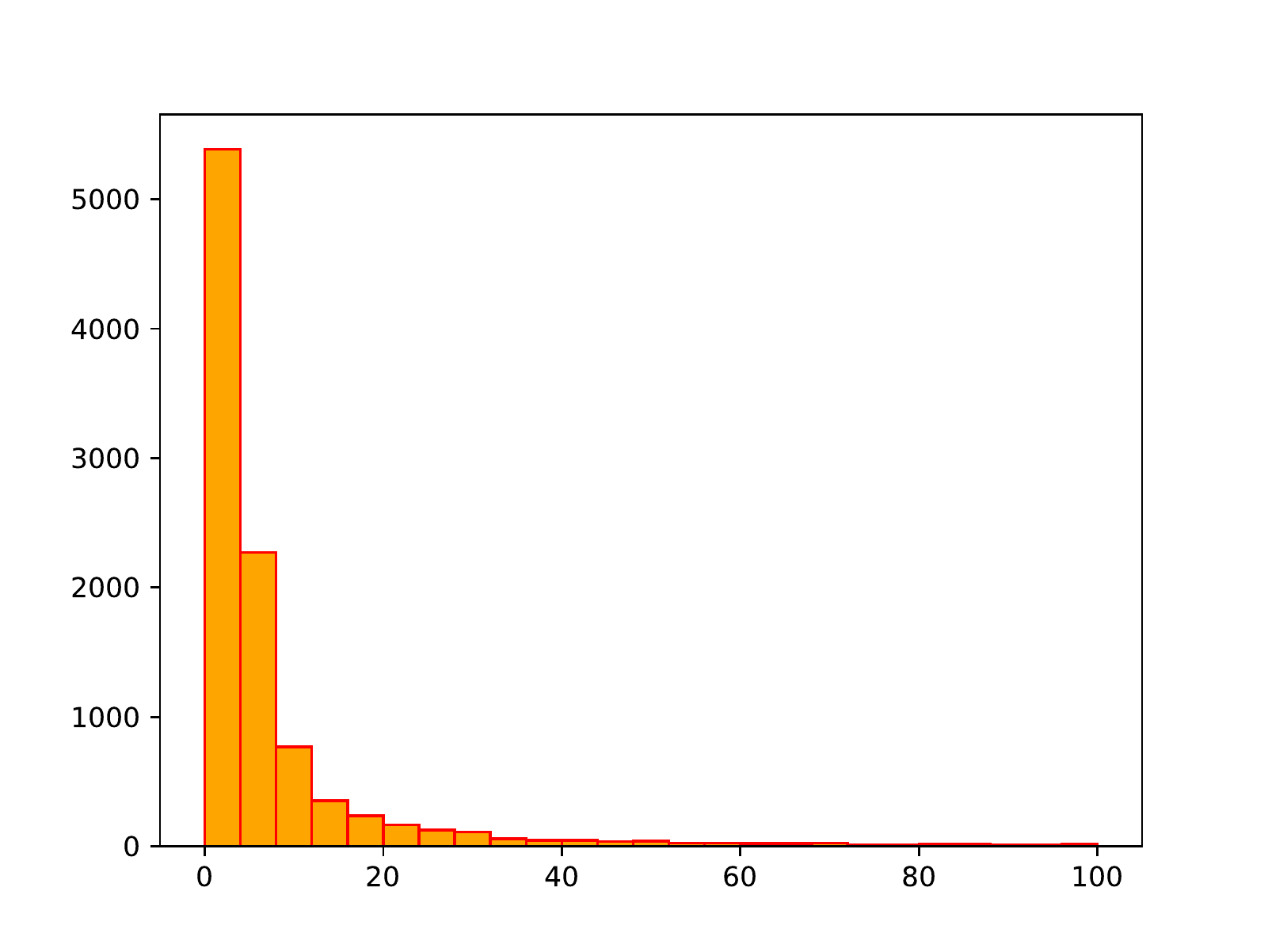}
%\caption{fig1}
\end{minipage}%
}%
\subfigure[log-log scale of degree distribution]{
\begin{minipage}[t]{0.5\linewidth}
\centering
\includegraphics[width=2.5in]{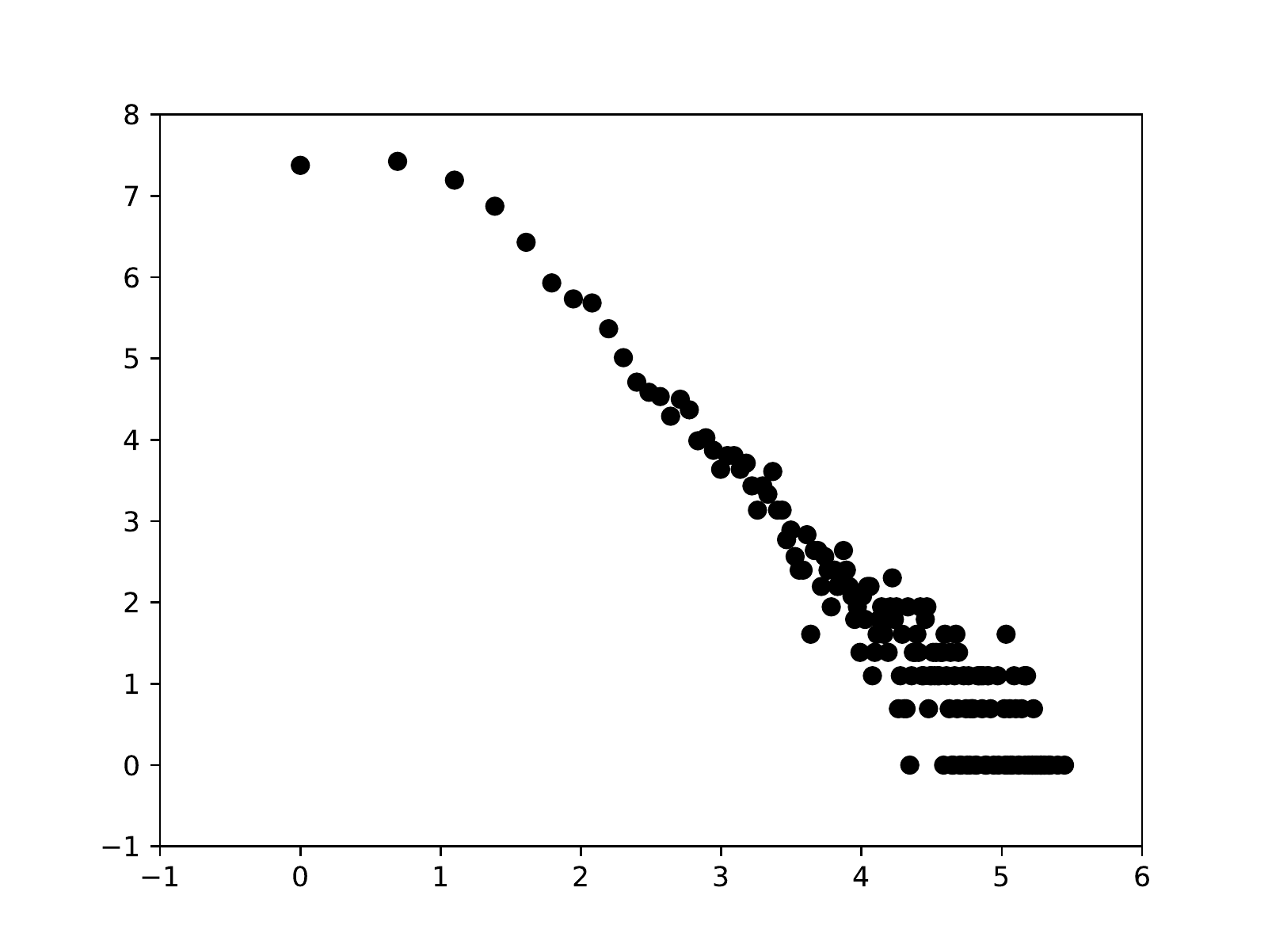}
%\caption{fig2}
\end{minipage}%
}%
       %这个回车键很重要 \quad也可以
\centering
\caption{Degree distribution of CL-10K-1d8-L5}
\label{F7}
\end{figure}

\begin{figure}[H]
\centering
\subfigure[Group 1]{
\begin{minipage}[t]{0.5\linewidth}
\centering
\includegraphics[width=2.5in]{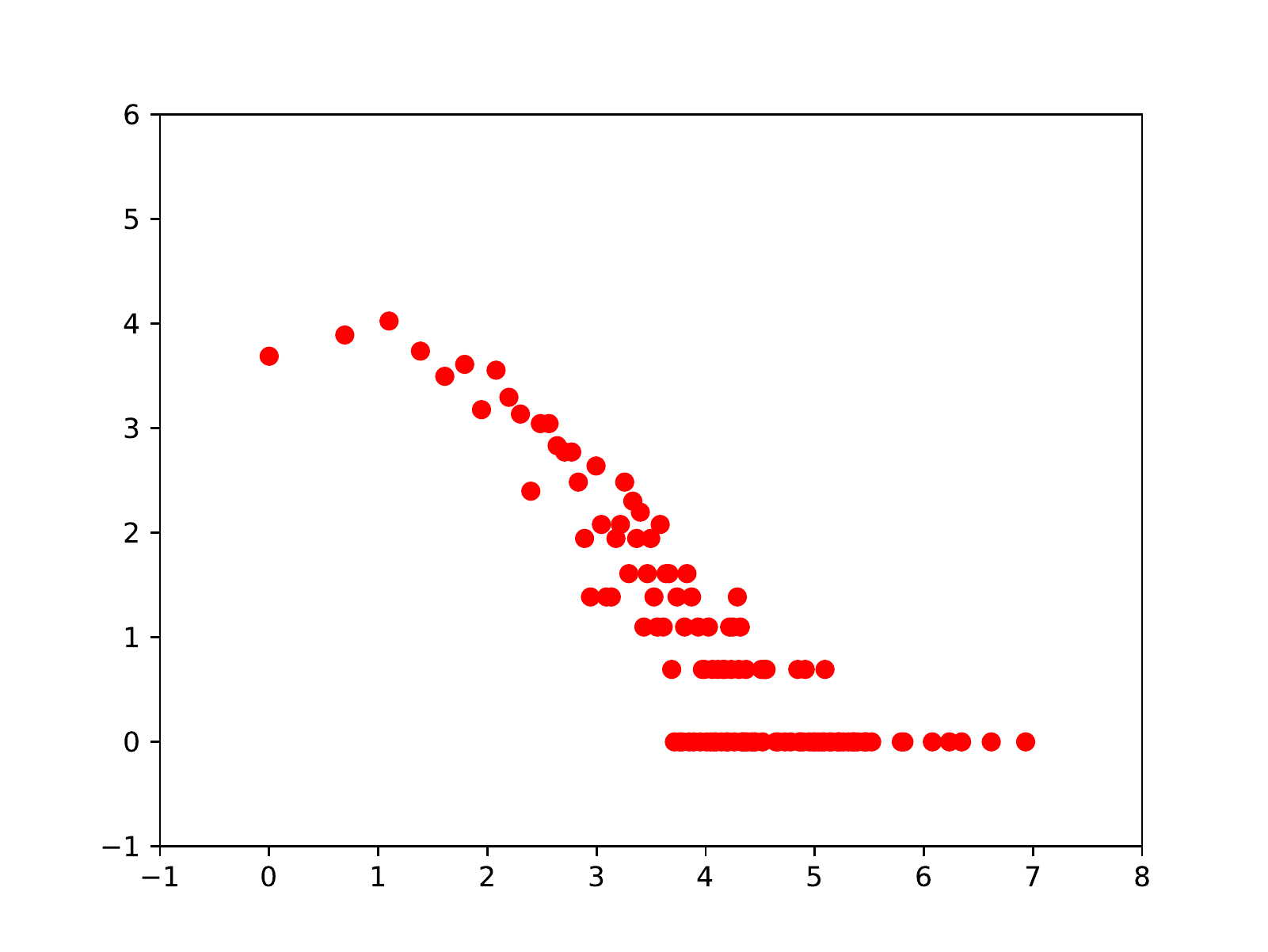}
%\caption{fig1}
\end{minipage}%
}%
\subfigure[Group 2]{
\begin{minipage}[t]{0.5\linewidth}
\centering
\includegraphics[width=2.5in]{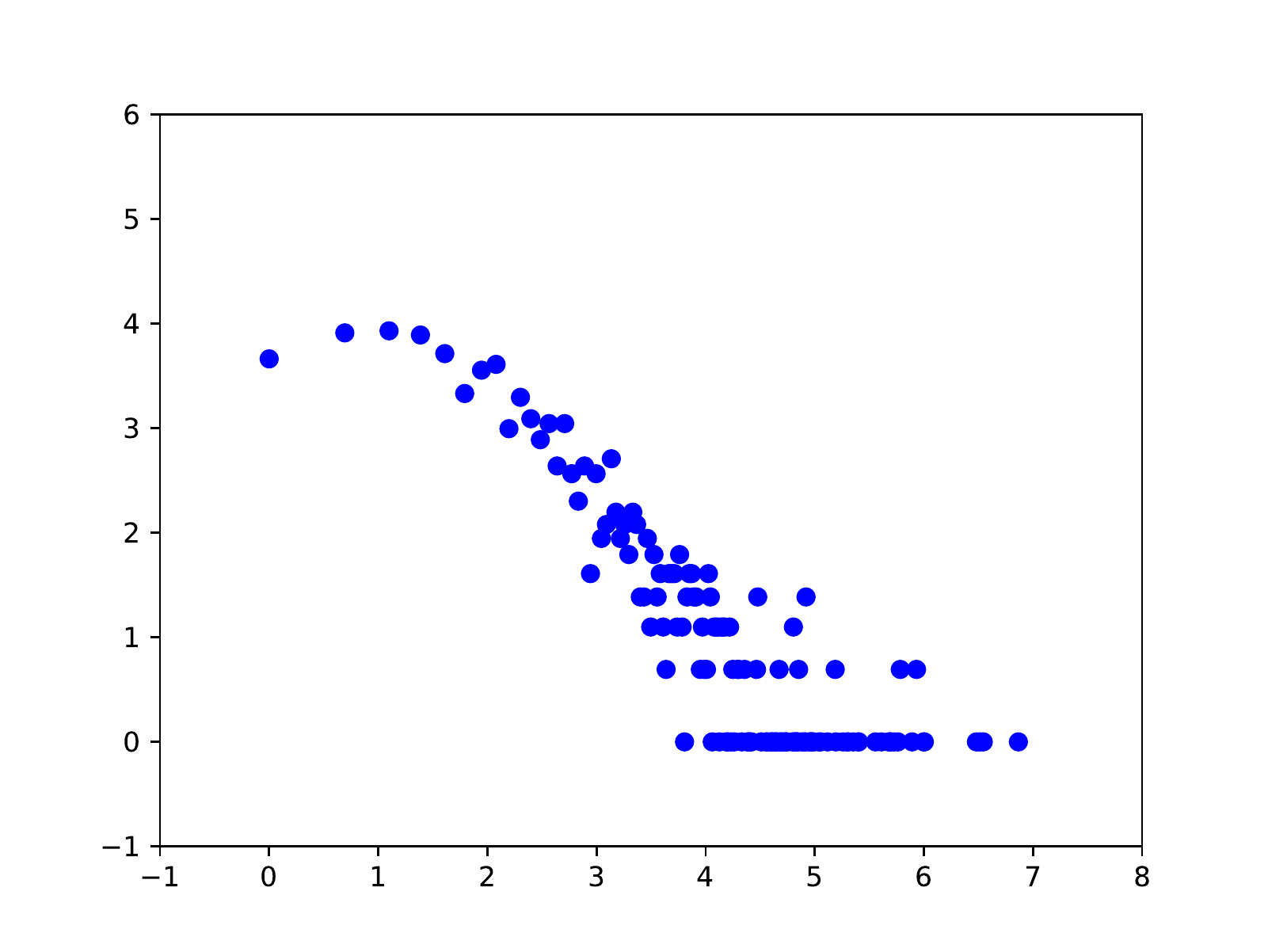}
%\caption{fig2}
\end{minipage}%
}%

\centering
\subfigure[Group 3]{
\begin{minipage}[t]{0.5\linewidth}
\centering
\includegraphics[width=2.5in]{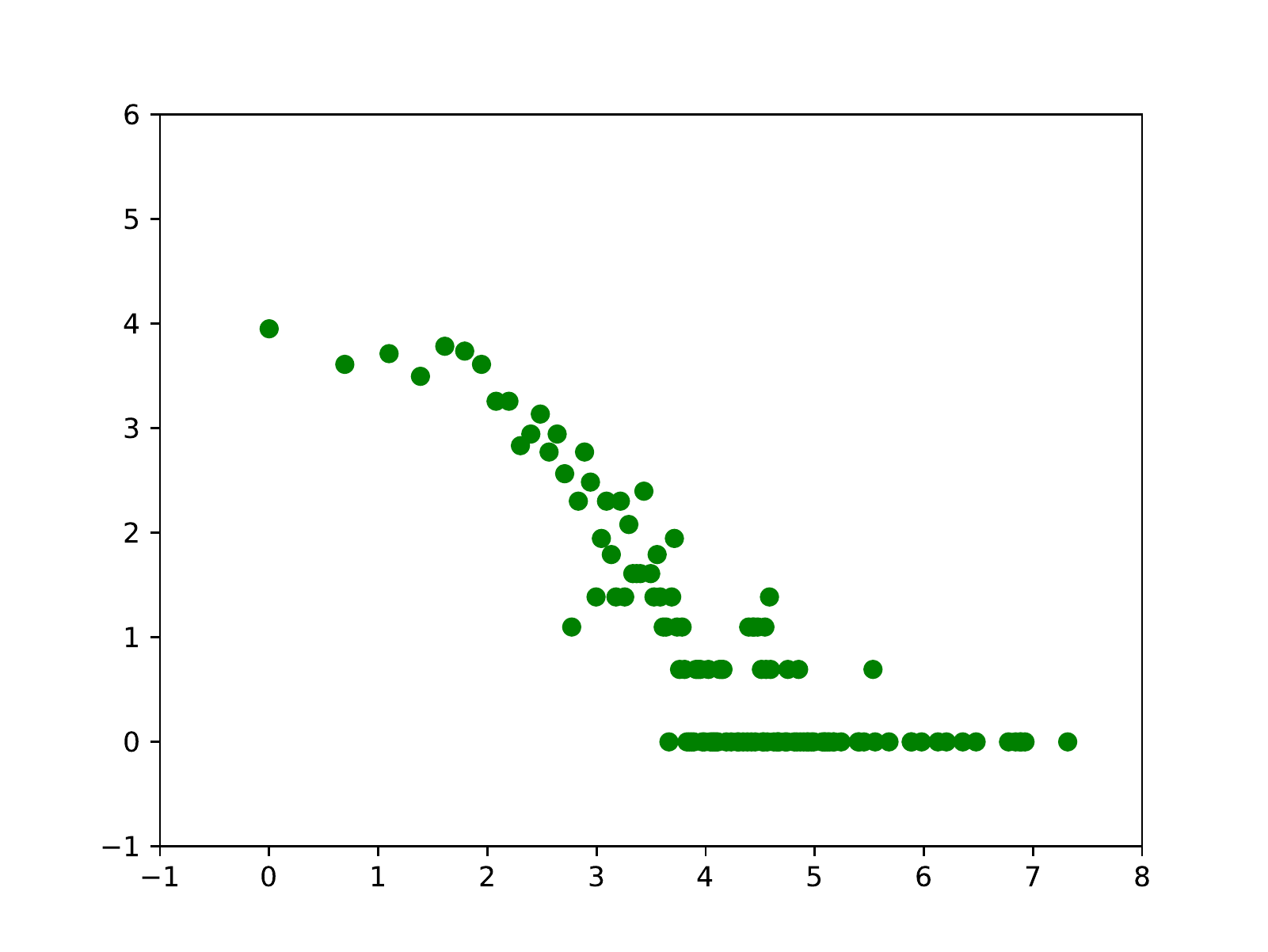}
%\caption{fig1}
\end{minipage}%
}%
\subfigure[Group 4]{
\begin{minipage}[t]{0.5\linewidth}
\centering
\includegraphics[width=2.5in]{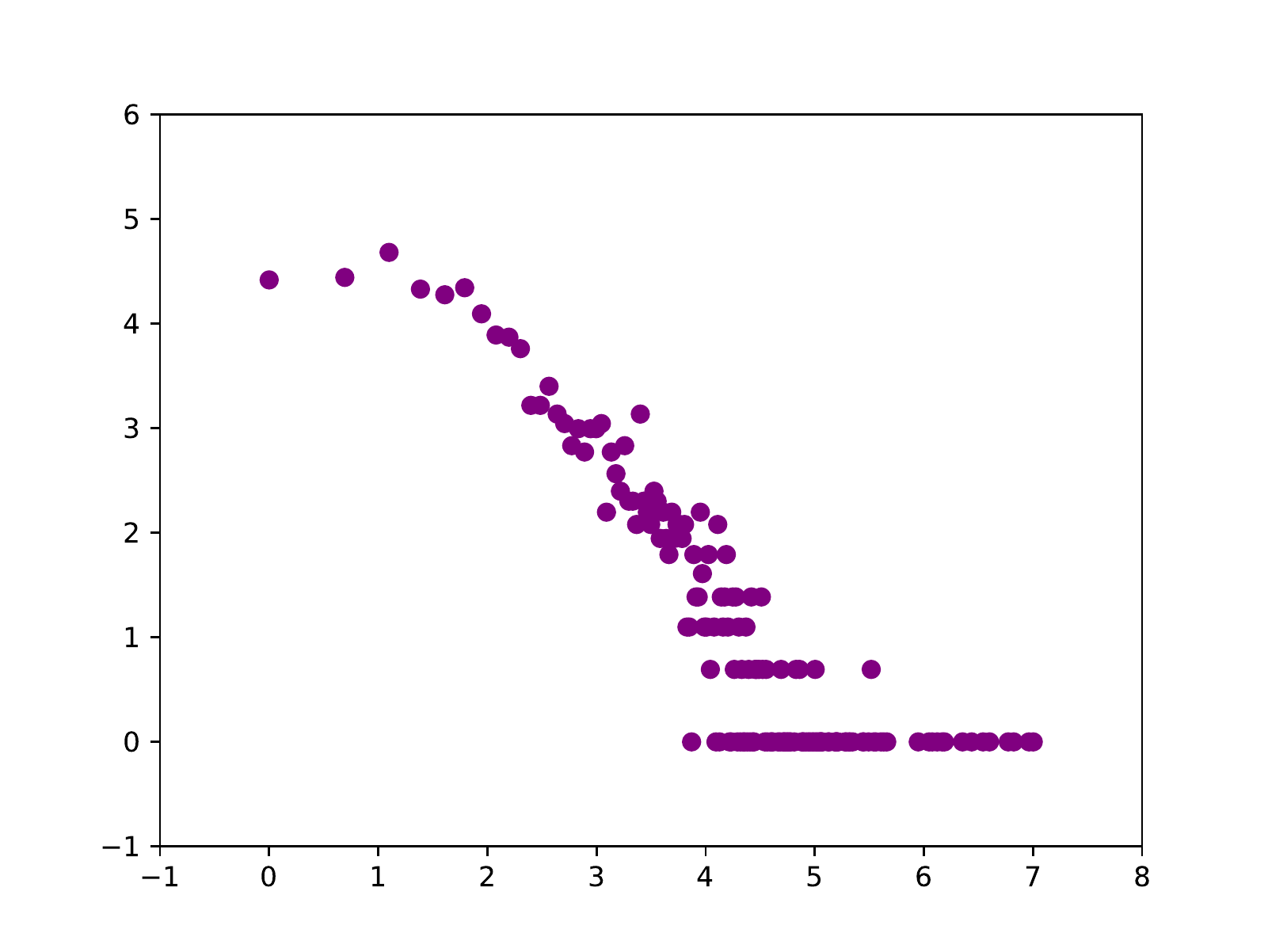}
%\caption{fig2}
\end{minipage}%
}%
  %这个回车键很重要 \quad也可以
\centering
\caption{Log-log degree distribution from different groups of BlogCatalog3}
\label{F6}
\end{figure}

\begin{figure}[H]
\centering
\subfigure[Group 1]{
\begin{minipage}[t]{0.35\linewidth}
\centering
\includegraphics[width=2.5in]{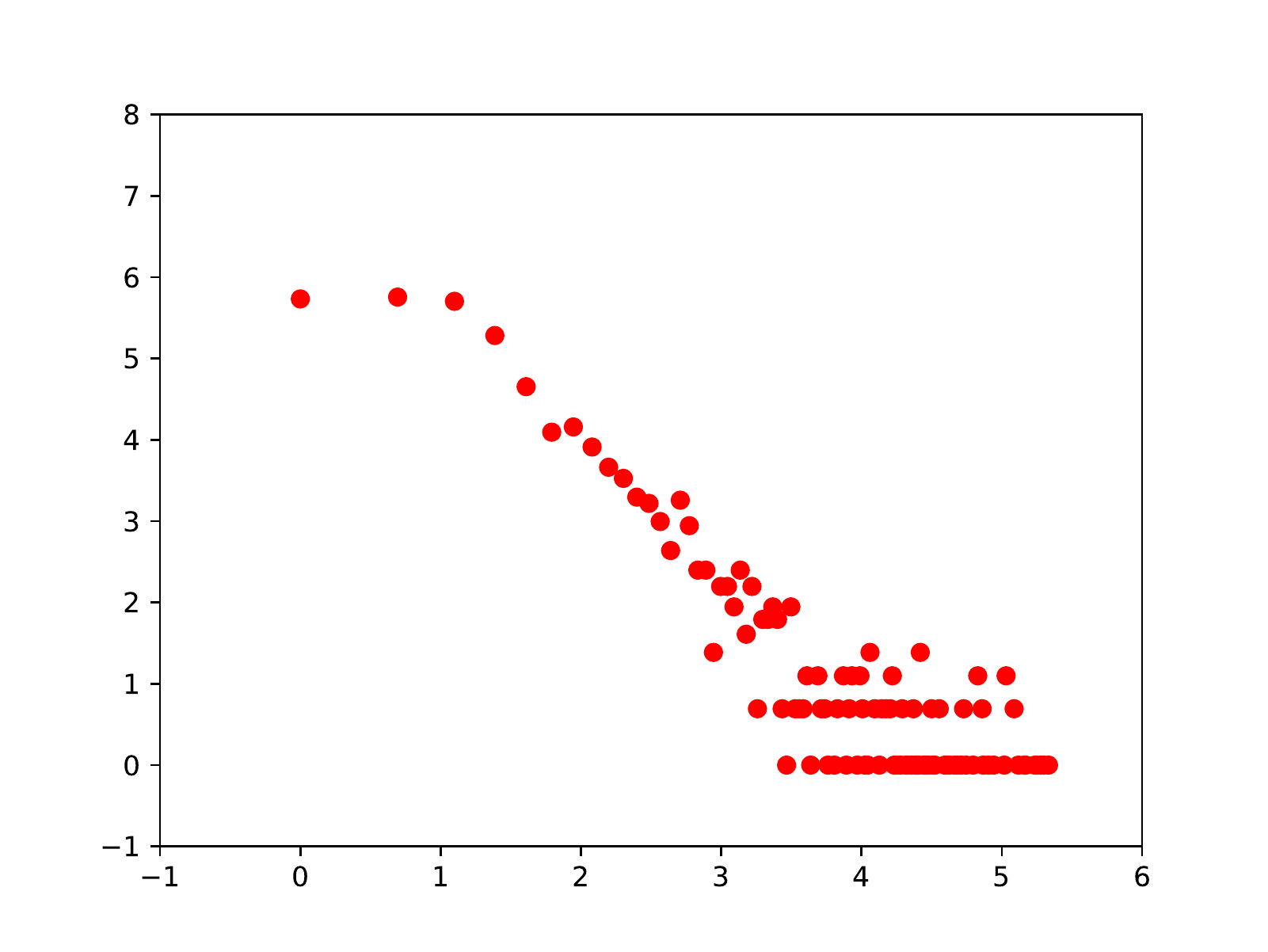}
%\caption{fig1}
\end{minipage}%
}%
\subfigure[Group 2]{
\begin{minipage}[t]{0.35\linewidth}
\centering
\includegraphics[width=2.5in]{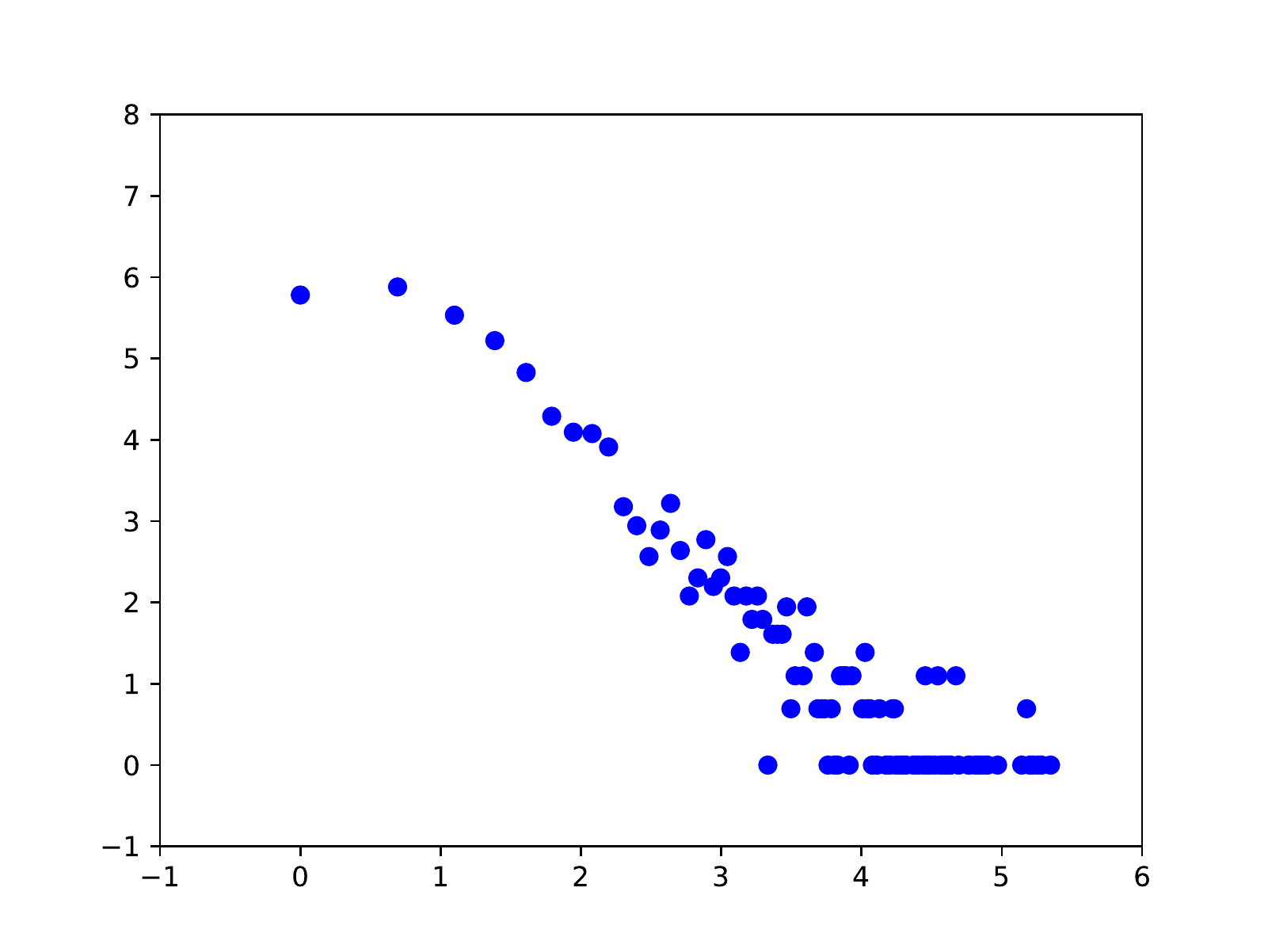}
%\caption{fig2}
\end{minipage}%
}%
\subfigure[Group 3]{
\begin{minipage}[t]{0.3\linewidth}
\centering
\includegraphics[width=2.5in]{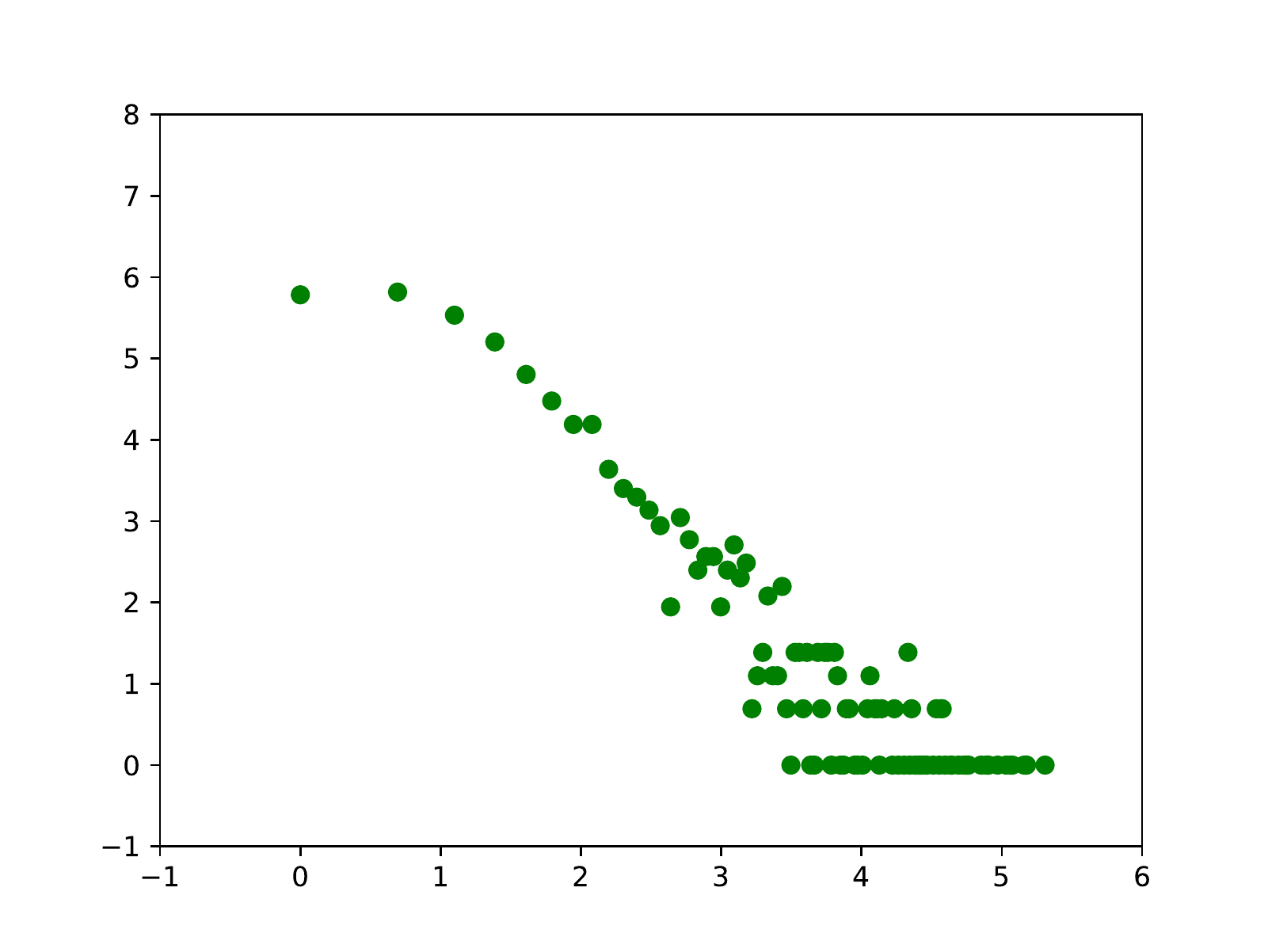}
%\caption{fig2}
\end{minipage}%
}%

\centering
\subfigure[Group 4]{
\begin{minipage}[t]{0.5\linewidth}
\centering
\includegraphics[width=2.5in]{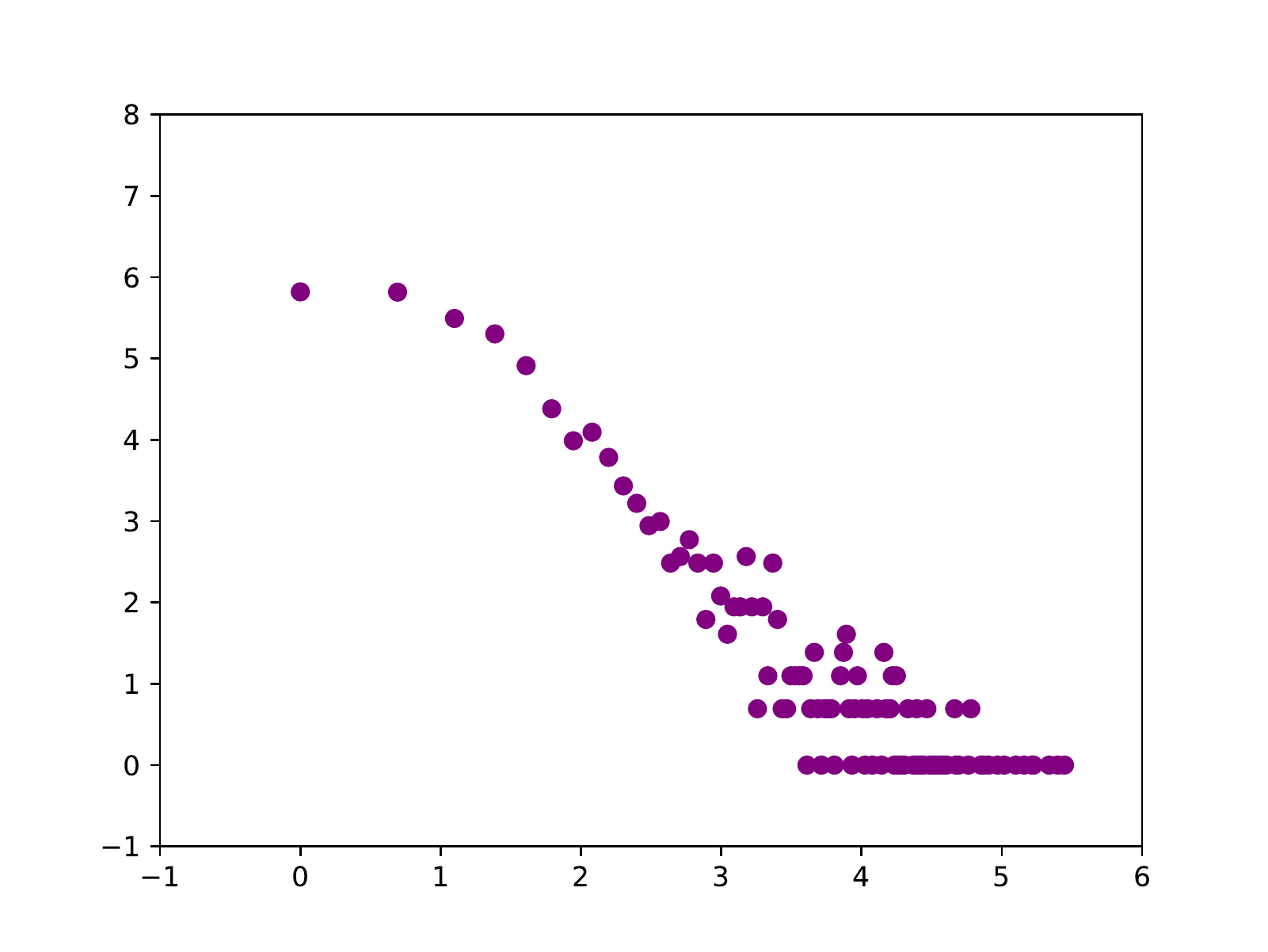}
%\caption{fig1}
\end{minipage}%
}%
\subfigure[Group 5]{
\begin{minipage}[t]{0.5\linewidth}
\centering
\includegraphics[width=2.5in]{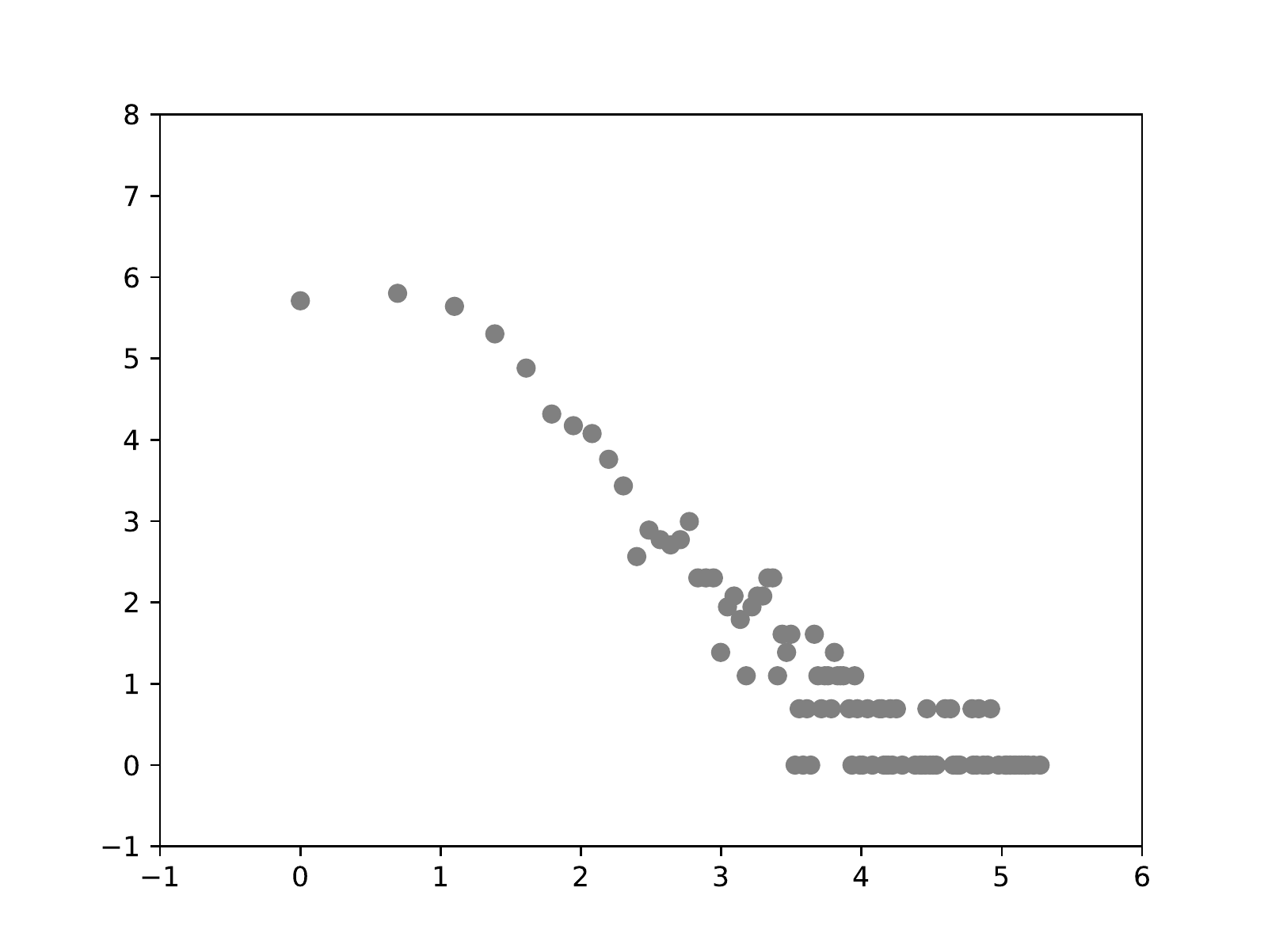}
%\caption{fig2}
\end{minipage}%
}%
  %这个回车键很重要 \quad也可以
\centering
\caption{Log-log degree distribution from different groups of CL-10K-1d8-L5}
\label{F8}
\end{figure}

%\section*{References}

\end{document}